\newcommand{\hide}[1]{}
\DeclareMathOperator{\Real}{\mathbb{R}}
\begin{document}
\title{A Basic Compositional Model for Spiking Neural
  Networks\thanks{This work was supported by NSF awards CCF-1810758, CCF-2139936,
    CCF-2003830, CCF-2046235, and CCF-1763618.  Musco was also
    partially supported by an Adobe Research grant.}}
\titlerunning{Compositional SNN Model}

\author{Nancy Lynch\inst{1} \and Cameron Musco\inst{2}}
\institute{
  Department of Electrical Engineering and Computer Science,
  Massachusetts Institute of Technology \and Department of Computer
  Science, University of
  Massachusetts, Amherst}
\maketitle

\begin{abstract}
We present a formal, mathematical foundation for modeling and
reasoning about the behavior of \emph{synchronous, stochastic Spiking Neural
Networks (SNNs)}, which have been widely used in studies of neural
computation.
Our approach follows paradigms established in the field of concurrency theory.

Our SNN model is  based on directed graphs of neurons, classified as
input, output, and internal neurons.
We focus here on basic SNNs, in which a neuron's only state is a Boolean
value indicating whether or not the neuron is currently firing.
We also define the \emph{external behavior} of an SNN, in terms of
probability distributions on its external firing patterns.
We define two operators on SNNs:  a \emph{composition operator}, which
supports modeling of SNNs as combinations of smaller SNNs, and a 
\emph{hiding operator}, which reclassifies some output behavior of an
SNN as internal.
We prove results showing how the external behavior of a network
built using these operators is related to the external behavior of its
component networks.
Finally, we define the notion of a \emph{problem} to be solved
by an SNN, and show how the composition and
hiding operators affect the problems that are solved by the networks.

We illustrate our definitions with three examples:  
a Boolean circuit constructed from gates, 
an \emph{Attention} network constructed from a \emph{Winner-Take-All}
network and a \emph{Filter} network, and 
a toy example involving combining two networks in a cyclic fashion.
\keywords{Spiking Neural Networks \and Composition of networks \and Compositionality}
 \end{abstract}

\section{Introduction}


%
Understanding computation in biological neural networks like the human
brain is a central challenge of modern neuroscience and artificial
intelligence.
One approach to this challenge uses algorithmic methods from
theoretical computer science.
That means defining formal computational models for brain networks,
identifying abstract problems that can be solved by such networks, and
defining and analyzing algorithms that solve these problems.
Work along these general lines includes that of Valiant, Navlakha,
Papadimitriou, and their collaborators (see, for example, \cite{Valiant00,DasguptaSN17,PapadimitriouVempala-itcs19}).

%
For the past few years, we and our collaborators have been working on
an algorithmic theory of brain networks, based on \emph{synchronous,
  stochastic Spiking Neural Network (SNN) models}.
SNNs are a model for neural computation that includes many important
biologically-plausible features, yet is still simple enough to study
theoretically.
An SNN is a directed graph of neurons, in which each neuron fires
in discrete spikes, in response to a sufficiently high membrane potential.
The potential is induced by spikes from neighboring neurons, which can
be either excitatory or inhibitory, increasing or decreasing the
incoming potential.
In our SNNs, the neurons operate in synchronous rounds, and make
firing decisions stochastically.
%
%
Inspired by tasks that are solved in actual brains, we have been defining
and studying abstract problems to be solved by our SNNs.
So far,  we have developed models and networks for the
\emph{Winner-Take-All} problem from computational
neuroscience~\cite{LynchMP16,muscoThesis,LynchMP-arxiv19,SuCL-jour19},
problems of neural coding and similarity
detection~\cite{LynchMP16b,HitronMPL-itcs20},
problems of spatial representation of temporal information~\cite{HitronP-esa19,WL19},
and problems involving
learning~\cite{ChouWang-colt20,ChouWY-arxiv20,Wang20,LynchMallmannTrenn21}.
We are continuing to study many other problems and networks, including both
static networks and networks that learn.

%
In our work so far, we have defined formal models in each paper, as needed.
Here we define a more general computational model for SNNs that
we hope will provide a useful foundation for formal modeling of many
networks and formal reasoning about their behavior.  
Note that this model is not the most general one that will be needed,
but we believe that it will prove to be a useful first step.
In particular, in the basic version of the SNN model defined here, a
neuron's only state is a Boolean value indicating whether or not the neuron is
currently firing.
This is sufficient to model some algorithms, such as the 
simple two-inhibitor Winner-Take-All network in~\cite{LynchMP-arxiv19}.
Other algorithmic work uses variants of the basic model with more 
elaborate state such as limited local history, or flags that enable
certain behavior such as learning~\cite{SuCL-jour19,
  LynchMallmannTrenn21}; we expect that the results of
this paper should be extendable to these variants as well, but this remains
to be worked out.
We also define an \emph{external behavior notion} for SNNs, in terms
of probability distributions on its external firing patterns.  This can
be used for stating requirements to be satisfied by the networks.

%
We then define a \emph{composition operator} for SNNs, which supports
modeling of SNNs as combinations of smaller SNNs.
We prove that our external behavior notion is \emph{compositional}, in
the sense that the external behavior of a composed network depends
only on the external behaviors of the component networks and not their
internal operation.
We also define a \emph{hiding operator} that reclassifies some output
behavior of an SNN as internal, and show that the behavior of a network
obtained by hiding depends only on that of the original network.
A common use of hiding is after composition, when some of the
interactions between the composed networks might be suppressed in the
external behavior.

%
Finally, we give a formal definition of a \emph{problem} to be solved
by an SNN, and give basic results showing how the composition and
hiding operators affect the problems that are solved by the networks.
We illustrate our definitions with three examples:  
a Boolean circuit constructed from neurons that act as logical gates, 
an \emph{Attention} network constructed from a \emph{Winner-Take-All}
network and a \emph{Filter} network, and 
a toy example involving combining two networks in a cyclic fashion.

\vspace{-.25cm}
\paragraph{Related work:}
The general approach of this paper---defining formal models and
operators and proving that the operators respect network behavior---is
based on the paradigms of the research area of \emph{concurrency
theory}~\cite{DBLP:conf/concur/2021}.
Our particular definitions are inspired by prior work on 
Input/Output Automata models~\cite{LynchT87,LynchT89,LV93I,LV95II-jour,SL95,LSV02,LSVprob,KLSV10}, including timed,
hybrid and probabilistic variants.

Our focus on SNNs was partly inspired by research of Maass, et al.~\cite{maass1997networks,maass1999neural,maass2000computational}
on the computational power of SNNs.  Maass explored how features like
randomness~\cite{maass2014noise}, temporal coding~\cite{maass1999complexity}, and dynamic edge
weights~\cite{legenstein2005can} affect the computational power and efficiency of
neural network models.  Maass's work differs from ours in that he mostly
considers asynchronous models that allow fine-grained control of spike
timing---models with significantly different computing power
from ours.

An early synchronous neural network model is the \emph{perceptron
model}, based on a neuron model invented by McCulloch and
Pitts~\cite{mcculloch43a}.
The neurons are modeled as deterministic linear threshold elements,
without any stochastic behavior as in our neurons.
These elements are assembled into feedforward, layered networks,
whereas our networks are arbitrary directed graphs.
Another difference with respect to our basic model is that, in
perceptron networks, real values can be passed along edges between
layers, whereas we use a binary activation function.  
Perceptron networks are generally used to implement supervised
algorithms for learning to recognize patterns.

Work by Valiant, Navlakha, Papadimitriou, and
collaborators~\cite{Valiant00,DasguptaSN17,PapadimitriouVempala-itcs19},
is based on a variety of synchronous neural network models.  These
models are not presented as general compositional models in the style
of concurrency theory.
However, they appear to be compatible with (extended versions of) our
model.
Some differences between these models and our basic model are:
Valiant~\cite{Valiant00} includes elaborate state changes, rather than
just simple binary firing decisions;
Papadimitriou\cite{PapadimitriouVempala-itcs19} and
Navlakha~\cite{DasguptaSN17}
assume built-in Winner-Take-All mechanisms; and Valiant and Papadimitriou
focus on learning.

In recent work, Berggren and his group are developing a
hardware implementation of a Spiking Neural Network model using
nanowires~\cite{Toomey,ToomeySCCLB}.
They have developed a simulator for their
implementation, based on the basic SNN model presented in this paper
\cite{LomboLCCLB}.

\vspace{-.25cm}
\paragraph{Paper organization:}
Section~\ref{sec: model} contains our definitions for Spiking Neural
Networks and their external behavior.
Section~\ref{sec: composition} contains our definitions for the
composition operator for SNNs. 
Section~\ref{sec: acyclic} focuses on the special case of acyclic
composition, in which connections between SNNs go in only one
direction; we prove a compositionality theory for this case.
Section~\ref{sec: general} extends these ideas to the more general case of
composition that allows connections in both directions.
Section~\ref{sec: hiding} introduces the hiding operator for SNNs.
Section~\ref{sec: problem} introduces our notion of a problem to be
solved by an SNN.
We conclude in Section~\ref{sec: conclusions}.

\vspace{-.25cm}
\paragraph{Acknowledgments:}
We thank our co-authors on our papers based on SNNs, especially
Merav Parter, Lili Su, Brabeeba Wang, Yael Hitron, C. J. Chang, and Frederik
Mallmann-Trenn, for providing many concrete examples that inspired our
general model.  We also thank Victor Luchangco and Jesus Lares for
reading and contributing comments on earlier drafts of this paper.

\section{The Spiking Neural Network Model}
\label{sec: model}

Here we present our model definitions.
We first specify the structure of our networks---the neurons and
connections between them.
Then we describe how the networks execute; this involves defining
individual (non-probabilistic) executions and then defining
probabilistic behavior.
Next we define the external behavior of a network.
We illustrate with two fundamental examples:  a Boolean circuit and a
Winner-Take-All network.

\subsection{Network structure}

Assume a universal set $U$ of neuron names.
A \emph{firing pattern} for a set $V \subseteq U$ of neuron names is a
mapping from $V$ to $\{0,1\}$.  Here, $1$ represents ``firing'' and $0$
represents ``not firing''.

A \emph{Spiking Neural Network}, which we generally refer to as just a
\emph{network}, ${\mathcal N}$, consists of:
\begin{itemize}
\item
$N$, a subset of $U$, partitioned into input neurons $N_{in}$, output
  neurons $N_{out}$, and internal
  neurons $N_{int}$.
  We sometimes write $N_{ext}$ as shorthand for $N_{in} \cup N_{out}$,
  and $N_{lc}$ as shorthand for $N_{out} \cup N_{int}$.  (Here, $lc$
  stands for ``locally controlled'', which means ``not input'').
%
Each neuron $u \in N_{lc}$ has an associated \emph{bias}, $bias(u) \in
\Real$; this can be any real number, positive, negative, or 0.

\item
$E$, a set of ordered pairs of neurons, i.e., directed edges between
  neurons, representing synapses.
We permit self-loops.
Each edge $e = (u,v)$ has a \emph{weight}, $weight(e) = weight(u,v)$,
which is a nonzero (positive or negative) real number.  

\item
$F_0$, an initial firing pattern for the set $N_{lc}$ of non-input
  neurons; that is, $F_0: N_{lc} \rightarrow \{0,1\}$.
\end{itemize}
We assume that input neurons have no incoming edges, not even
self-loops.  Output neurons may have incoming or outgoing edges, or
both.

\vspace{-.25cm}
\paragraph{Example:}
Consider the Winner-Take-All network in Figure~\ref{fig:wta}.
The set $N$ of neuron names consists of $N_{in} = \{
x_1,\ldots,x_n \}$, $N_{out} = \{ y_1, \ldots, y_n \}$, and $N_{int} =
\{ a_1, a_2 \}$.
We have $bias(a_1) = .5 \gamma$, $bias(a_2) = 1.5
\gamma$, and for every $i$, $bias(y_i) = 3 \gamma$, for some positive
real $\gamma$.
$E$ includes an edge from each $x_i$ to its corresponding $y_i$,
an edge in each direction between every $a$ neuron and every $y$
neuron, and a self-loop on each $y$ neuron.
Weights of the edges are as depicted in the figure.
The initial firing pattern $F_0$ gives arbitrary Boolean values for
the $a$ and $y$ neurons (technically, each $F_0$ yields a different network).
The initial values of the $x$ neurons are unspecified, indicating that
this network can be used with any inputs.


%

\subsection{Executions and probabilistic executions}

We describe how a network operates, beginning with its ordinary,
non-probabilistic executions and then adding probabilistic considerations.

\subsubsection{Executions and traces}



%
%

We begin by defining a ``configuration'' of a network, which describes
the current states of all neurons.
Namely, a \emph{configuration} of a neural network $\mathcal N$ is a firing
pattern for $N$, the set of all the neurons in the network.  
We consider several related definitions:   
\begin{itemize}
\item
An \emph{input configuration} is a firing pattern for the input
neurons, $N_{in}$.  
\item
An \emph{output configuration} is a firing pattern
for the output neurons, $N_{out}$.  
\item
An \emph{internal configuration} is
a firing pattern for the internal neurons, $N_{int}$.
\item
An \emph{external configuration} is a firing pattern for the input and
output neurons, $N_{ext}$.
\item
A \emph{non-input configuration} is a firing pattern for the internal
and output neurons, $N_{lc}$.
\end{itemize}

We define projections of configurations onto subsets of $N$. 
Thus, if $C$ is a configuration and $M$ is any subset of $N$, then $C
\lceil M$ is the firing pattern for $M$ obtained by projecting $C$
onto the neurons
in $M$.
In particular, we have $C \lceil N_{in}$ for the projection of $C$ on
the input neurons, 
$C \lceil N_{out}$ for the output neurons, 
$C \lceil N_{int}$ for the internal neurons, 
$C \lceil N_{ext}$ for the external neurons, 
and $C \lceil N_{lc}$ for the non-input neurons.
More generally, we can define the projection of any firing pattern $F$
for a set $M \subseteq N$ of neurons onto any subset $M' \subseteq M$.

An \emph{initial configuration} is a configuration $C$ such that 
$C \lceil N_{lc} = F_0$.
That is, the values for the locally-controlled neurons are as
specified by the given initial firing pattern.
The values for the input neurons are arbitrary.
We consider them to be controlled somehow, from outside the network.
For example, they may be output neurons of another network, or may
represent sensory inputs to the network.
%

Now we define formally how a network $\mathcal N$ executes; we assume
that it operates in synchronous rounds.
Namely, an \emph{execution} $\alpha$ of $\mathcal N$ is a (finite or infinite)
sequence of configurations, $C_0, C_1,\ldots,$ where $C_0$ is an initial
configuration.\footnote{We place no other restrictions on the general
  notion of an execution because our basic model does not impose any
  restriction on possible transitions.}
We define the \emph{length} of a finite execution $\alpha = C_0, C_1,...,C_t$,
$length(\alpha)$, to be $t$.  
As a special case, if $\alpha$ consists of just the initial
configuration $C_0$, then $length(\alpha) = 0$.
The \emph{length} of an infinite execution is defined to be $\infty$.

We define projections of executions onto subsets of the neurons of
$\mathcal N$. Namely, if $\alpha = C_0, C_1, \ldots$ is an execution
of $\mathcal N$ and $M$ is any subset of $N$, then $\alpha \lceil M$
is defined to be the sequence $C_0 \lceil M, C_1 \lceil M, \ldots$.
We define an $M$-execution of ${\mathcal N}$ to be $\alpha \lceil M$
for any execution $\alpha$ of ${\mathcal N}$.
%
We define an \emph{input execution} to be an $M$-execution where $M =
N_{in}$, and similarly for 
\emph{output execution}, 
\emph{internal execution}, 
\emph{external execution}, and
\emph{locally-controlled execution (or lc-execution)}.

To focus on the external behavior of the network, we define the notion of a 
``trace''.
Namely, for an execution $\alpha$, we write $trace(\alpha)$ as an
alternative notation for $\alpha \lceil N_{ext}$, the projection of
$\alpha$ on the external neurons.
We define a \emph{trace} of $\mathcal N$ to be the trace of any
execution $\alpha$ of $\mathcal N$.


\vspace{-.25cm}
\paragraph{Example:}
Again, consider the Winner-Take-All network.
Suppose that $F_0$, the initial firing pattern, assigns $0$ to all the
$a$ neurons and $y$ neurons, that is, none of these fire initially.
Then the executions of the network are just all the sequences of
configurations in which the starting configuration has values of $0$
for all the $a$ and $y$ neurons.
The values of the $x$ neurons are arbitrary.

\subsubsection{Probabilistic executions}
\label{sec: prob-execs}

We define a unique ``probabilistic execution'' for any particular
infinite input execution $\beta_{in}$.
First, we say that an infinite execution $\alpha$ of the network is
\emph{consistent with} $\beta_{in}$ provided that $\alpha \lceil N_{in}
= \beta_{in}$.
Also, a finite execution $\alpha$ is \emph{consistent with}
$\beta_{in}$ provided that $\alpha \lceil N_{in}$ is a prefix of
$\beta_{in}$.
Note that all of the (finite and infinite) executions that are
consistent with $\beta_{in}$ have the same initial configuration $C_0$.
This configuration is constructed from the first configuration
of $\beta_{in}$ and the initial non-input firing pattern for the
network, $F_0$.

The probabilistic execution for $\beta_{in}$ is defined as a probability
distribution $P$ on the sample space $\Omega$ of infinite executions
that are consistent with $\beta_{in}$.
The $\sigma$-algebra of measurable sets is generated from the
``cones'', each of which is the set of infinite executions in $\Omega$
that extend a particular finite execution.
Formally, if $\alpha$ is a finite execution that is consistent with
$\beta_{in}$, then $A(\alpha)$, the \emph{cone} of $\alpha$, is the set
of infinite executions that are consistent with $\beta_{in}$
and extend $\alpha$.
%
The other measurable sets in the $\sigma$-algebra are obtained by
starting with these cones and closing under countable union, countable
intersection, and complement.

\hide{
An equivalent definition is based on closure under countable
disjoint union, finite intersection, and complement.
In order to see that these definitions are equivalent, we argue that
closure under just countable disjoint union and finite intersection
and complement is enough to imply closure under countable union and
countable intersection:
\begin{enumerate}
\item
\emph{Closure under general finite union:}
For two sets, $A_1$ and $A_2$, we rewrite:
$A_1 \cup A_2$ as $A_1 \cup (A_2 - A_1) = A_1 \cup (A_2 \cap \not{A_1})$.
This last expression involves only disjoint union, finite
intersection, and complement.
The general case results from repeating this construction.
\item
\emph{Closure under general countable union:}
We express a union of the form
$A_1 \cup A_2 \cup A_3 \cup \ldots$ in disjoint form as
$A_1 \cup (A_2 - A_1) \cup (A_3 - (A_1 \cup A_2)) \cup \ldots$.
Each term of the form $A_i - (A_1 \cup \ldots A_{i-1})$ can be
rewritten as $A_i \cap \not(A_1 \cup \ldots A_{i-1})$
By 1. above, the term in parentheses can be rewritten using disjoint
union, finite intersection, and complement.  The rest of the
expression involves one more complement and finite intersection and
complement.
\item
\emph{Closure under countable intersection:}
We use DeMorgan's Law, based on general countable union and
complement.
\end{enumerate}
}

Now we define the probabilities for the measurable sets.
We start by explicitly defining the probabilities for the cones,
$P(A(\alpha))$.
Based on these, we can derive the probabilities of the other
measurable sets in a unique way, using general measure extension
theorems.
For example, Segala presents a similar construction for probabilistic
executions in his PhD thesis, Chapter 4~\cite{segalaThesis}.

We compute the probabilities $P(A(\alpha))$ recursively based on the
length of $\alpha$ (we assume here that $\alpha$ is consistent with
$\beta_{in}$):
\begin{enumerate}
\item
$\alpha$ is of length $0$. 

Then $\alpha$ consists of just the initial configuration $C_0$; define
$P(A(\alpha)) = 1$.

\item
$\alpha$ is of length $t$, $t > 0$.   

Let $\alpha'$ be the length-$(t-1)$ prefix of $\alpha$.
We determine the probability $q$ of extending $\alpha'$ to
$\alpha$.
Then the probability $P(A(\alpha))$ is simply $P(A(\alpha')) \times q$.

Let $C$ be the final configuration of $\alpha$ and $C'$ the final
configuration of $\alpha'$.
Then for each neuron $u \in N_{lc}$ separately, we use $C'$ and the
weights of $u$'s incoming edges to compute a potential and then
a firing probability for neuron $u$.
Specifically, for each $u$, we first calculate a \emph{potential},
$pot_u$, defined as 
\[
pot_u = \sum_{(v,u) \in E} C'(v) weight(v,u) - bias(u).
\]
We then convert $pot_u$ to a firing probability $p_u$ using a
standard sigmoid function:
\[
p_u = \frac{1}{1 + e^{-pot_u/\lambda}},
\]
where $\lambda$ is a positive real number ``temperature''
parameter.\footnote{
This function is called the sigmoid function because of its S-shape,
monotonically mapping the real line to the interval $[0,1]$.   
Although we assume a standard sigmoid function, the results of this
paper would also work with other S-shaped functions.}
We combine all those probabilities to compute the probability of generating
$C$ from $C'$:  for each $u \in N_{lc}$ such that $C(u) = 1$,
use the calculated probability $p_u$, and for each $u \in N_{lc}$ for
which $C(u) = 0$, use $1 - p_u$.
The product
\[
\prod_{u \in N_{lc} : C(u) = 1} p_u \times 
\prod_{u \in N_{lc} : C(u) = 0} (1-p_u)
\]
is the probability of generating $C$ from $C'$, which is the
probability $q$ of extending $\alpha'$ to $\alpha$.
\end{enumerate}

\vspace{-.25cm}
\paragraph{Example:}
Continuing with the Winner-Take-All network in Figure~\ref{fig:wta},
suppose again that $F_0$ assigns $0$ to all the non-input neurons.
Consider this network with the input configuration that assigns $1$ to 
$x_1$ and $0$ to all the other $x_i$ neurons.
Suppose that $\gamma = \lambda = 1$.
We compute the probability that $y_1$ fires.
The potential for neuron $y_1$ is $1 \times 3 - 3 = 0$, and the firing 
probability calculated from this using the standard sigmoid function is $.50$.
For any other $y$ neurons, we get potential $0 \times 3 - 3 = -3$, yielding
a firing probability of $.05$.

We will often consider conditional probabilities of the form
$P(A(\alpha_1) | A(\alpha_2))$.
Because we use a sigmoid function, we know that $P(A(\alpha_2))$
cannot be $0$, and so this conditional probability is well-defined.\footnote{
  One useful property of standard sigmoid functions is that the
  probabilities are never exactly $0$ or $1$, so we don't need to
  worry about $0$-probability sets when conditioning.}
%
The following lemma is straightforward.
\begin{lemma}
  \label{lem: execution-relative-probs}
Let $\alpha_1$ and $\alpha_2$ be finite executions of $\mathcal N$
that are consistent with $\beta_{in}$.
\begin{enumerate}
\item
If neither $\alpha_1$ nor $\alpha_2$ is an extension of the other,
that is, if they are incomparable,
then $P(A(\alpha_1) | A(\alpha_2)) = 0$.
\item
If $\alpha_1$  is an extension of $\alpha_2$,  then 
$P(A(\alpha_1) | A(\alpha_2)) = \frac{P(A(\alpha_1))}{P(A(\alpha_2))}$.
\end{enumerate}
\end{lemma}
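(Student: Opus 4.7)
The plan is to apply Lemma~\ref{lem: disjoint-or-subset} directly and then unfold the definition of conditional probability $P(A(\alpha_1) \mid A(\alpha_2)) = P(A(\alpha_1) \cap A(\alpha_2)) / P(A(\alpha_2))$. Before starting, I would note that the denominator $P(A(\alpha_2))$ is strictly positive, since the recursive definition of $P(A(\cdot))$ multiplies sigmoid-derived factors, each of which lies strictly in $(0,1)$; this is exactly the observation made in the footnote, and it ensures that the conditional probability is well-defined throughout.

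For part~1, I would invoke the first clause of Lemma~\ref{lem: disjoint-or-subset}: if neither $\alpha_1$ nor $\alpha_2$ is an extension of the other, the cones $A(\alpha_1)$ and $A(\alpha_2)$ are disjoint. Hence $A(\alpha_1) \cap A(\alpha_2) = \emptyset$, so its measure under $P$ is $0$, and therefore $P(A(\alpha_1) \mid A(\alpha_2)) = 0$.

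For part~2, I would invoke the second clause: if $\alpha_1$ extends $\alpha_2$, then $A(\alpha_1) \subseteq A(\alpha_2)$, and so $A(\alpha_1) \cap A(\alpha_2) = A(\alpha_1)$. Substituting into the definition of conditional probability gives $P(A(\alpha_1) \mid A(\alpha_2)) = P(A(\alpha_1))/P(A(\alpha_2))$, as claimed.

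There is really no main obstacle: the whole argument is a one-line reduction to the previously stated lemma about the set-theoretic structure of cones, combined with the positivity of cone probabilities that is built into the sigmoid-based definition. The only subtlety worth flagging is that Lemma~\ref{lem: disjoint-or-subset} is stated for arbitrary finite executions of $\mathcal{N}$, whereas here $\alpha_1$ and $\alpha_2$ are additionally assumed consistent with $\beta_{in}$; but consistency only restricts the input-neuron projections, and it plays no role in the disjoint-or-subset dichotomy, so the earlier lemma applies without modification.
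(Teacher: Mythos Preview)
Your proposal is correct and matches the paper's approach exactly: the paper simply states that the lemma ``follows immediately from Lemma~\ref{lem: disjoint-or-subset},'' and your argument spells out precisely that immediate reduction via the definition of conditional probability and the positivity of cone probabilities.
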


Lemma~\ref{lem: execution-relative-probs} shows how we can compute the
conditional probabilities from the absolute probabilities.
Conversely, we can compute the absolute probabilities from the
conditional ones, as follows.

\begin{lemma}
Let $\alpha$ be a length-$t$ execution of $\mathcal N$, $t>0$, and 
suppose that $\alpha$ is consistent with $\beta_{in}$. 
Let $\alpha_i$, $0 \leq i \leq t$ be the successive prefixes of
$\alpha$ (so that $\alpha_0$ consists of the initial configuration
$C_0$ and $\alpha_t = \alpha$).
Then 
\[
P(A(\alpha)) = 
P(A(\alpha_1) | A(\alpha_0)) \times 
P(A(\alpha_2) | A(\alpha_1)) \cdots \times P(A(\alpha_t) | A(\alpha_{t-1})).
\]
\end{lemma}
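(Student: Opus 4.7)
The plan is a straightforward induction on $t$, combined with a telescoping argument using the previous lemma. The key observation is that each successive prefix $\alpha_i$ is an extension of $\alpha_{i-1}$, so the previous lemma (case 2) applies to every conditional probability in the product.

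First I would handle the base case $t=1$. Here the product has a single factor $P(A(\alpha_1) \mid A(\alpha_0))$. By the previous lemma, since $\alpha_1$ extends $\alpha_0$, this equals $P(A(\alpha_1))/P(A(\alpha_0))$. From the recursive definition of $P(A(\cdot))$, the length-$0$ execution $\alpha_0 = C_0$ satisfies $P(A(\alpha_0)) = 1$, so the single factor equals $P(A(\alpha_1)) = P(A(\alpha))$, as required.

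For the inductive step (or equivalently a direct telescoping argument), I would apply the previous lemma to rewrite each factor:
\[
P(A(\alpha_i) \mid A(\alpha_{i-1})) \;=\; \frac{P(A(\alpha_i))}{P(A(\alpha_{i-1}))}
\]
for $1 \leq i \leq t$. Taking the product over $i$ yields a telescoping expression whose value is $P(A(\alpha_t))/P(A(\alpha_0)) = P(A(\alpha))/1 = P(A(\alpha))$. The use of the sigmoid firing rule ensures that every $P(A(\alpha_i))$ is strictly positive, so division is justified throughout and the conditional probabilities are all defined (as already noted in the excerpt).

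I do not anticipate any real obstacle. The only subtlety worth explicitly noting is why $P(A(\alpha_0)) = 1$: the sample space is restricted to executions consistent with the fixed $\beta_{in}$ and starting from the initial non-input pattern $F_0$, so every such execution automatically agrees with $\alpha_0 = C_0$, giving $A(\alpha_0)$ full probability. With that dispatched, the result is just the chain rule for conditional probabilities, specialized to the nested cone events $A(\alpha_0) \supseteq A(\alpha_1) \supseteq \cdots \supseteq A(\alpha_t)$ guaranteed by part~2 of Lemma~\ref{lem: disjoint-or-subset}.
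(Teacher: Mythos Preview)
Your proposal is correct and matches the paper's approach: the paper does not give a formal proof but simply remarks that the result follows by ``unwinding the recursive definition'' and notes that $P(A(\alpha_0))=1$ because $\alpha_0$ is determined by $\beta_{in}$ and $F_0$. Your telescoping argument via the previous lemma is exactly this unwinding made explicit.
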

Notice in the above expression, we did not start with a term for
$P(A(\alpha_0))$.  This is not needed because we are considering only
executions in which $\alpha_0$ is obtained from $\beta_{in}$ and the
initial assignment $F_0$.
So $P(A(\alpha_0)) = 1$.
Also note that each of the conditional terms is simply a one-step
transition probability, which can be calculated using the potential as
described above.

Since we can compute the conditional and absolute probabilities from
each other, either can be used to characterize the probabilistic execution.

\vspace{-.25cm}
\paragraph{Tree representation:}
%
%
%
%

The probabilistic execution for $\beta_{in}$ can be visualized as an
infinite tree of configurations, where the tree nodes at level $t$ 
represent the configurations that might occur at time $t$ (with the
given input execution $\beta_{in}$).
The configuration at the root of the tree is the initial configuration
$C_0$.
Each infinite branch of the tree represents an infinite execution of
the network, and finite initial portions of branches represent finite
executions.
Note that the same configuration can appear many times at different
vertices of the tree.

If $\alpha$ is a finite branch in the tree, then $P(A(\alpha))$ is the 
probability that an infinite execution will be in the ``cone'' of
executions that begin with $\alpha$.
We can associate the probability $P(A(\alpha))$ with the node at the
end of the finite branch---this is simply the probability of reaching
the node during probabilistic operation of the network, using the
inputs from $\beta_{in}$.

\subsubsection{Probabilities for projected executions}
\label{sec: properties-projected}

We extend the $A(\alpha)$ notation so that it applies to projections
of finite executions, not just complete finite executions.
Namely, suppose that $M$ is any subset of the neurons $N$ of $\mathcal
N$, and $\gamma$ is a finite $M$-execution of $\mathcal N$.
Then we say that $\gamma$ is \emph{consistent with} $\beta_{in}$
provided that $\gamma \lceil M \cap N_{in} = \beta_{in} \lceil M \cap
N_{in}$.
(This definition is equivalent to our earlier definition of
consistency in Section~\ref{sec: prob-execs}, for the special case
where $M = N$.)
In this case, we write $A(\gamma)$ for the set consisting of all
infinite executions $\alpha$ of $\mathcal N$ that are consistent with
$\beta_{in}$ such that $\gamma$ is a prefix of $\alpha \lceil M$.
%
%
We have:

\begin{lemma}
Let $M$ be any subset of the neurons $N$ of $\mathcal N$, and
let $\gamma$ be a finite $M$-execution of $\mathcal N$ that is
consistent with $\beta_{in}$.  Then, letting $\alpha$ range over the
set of finite executions that are consistent with $\beta_{in}$ and
such that $\alpha \lceil M = \gamma$:
\begin{enumerate}
\item
  \[
A(\gamma) = \bigcup_{\alpha} A(\alpha).
  \]
\item
  \[
P(A(\gamma)) = \sum_{\alpha} P(A(\alpha)).
  \]
\end{enumerate}
\end{lemma}


As an important special case, we consider $M = N_{ext}$, so that
$\gamma$ is specialized to a finite external execution $\beta$ of $\mathcal N$;
that is, we consider projections on the external neurons. 
Then our definition says that $\beta$ is \emph{consistent with} $\beta_{in}$
provided that $\beta \lceil N_{in} = \beta_{in}$.
In this case, we get:

\begin{lemma}
Let $\beta$ be a finite trace of $\mathcal N$ that is consistent with
$\beta_{in}$.  Then, letting $\alpha$ range over the
set of finite executions that are consistent with $\beta_{in}$ and
such that $trace(\alpha) = \beta$:
\begin{enumerate}
\item
  \[
A(\beta) = \bigcup_{\alpha} A(\alpha).
  \]
\item
  \[
P(A(\beta)) = \sum_{\alpha} P(A(\alpha)).
  \]
\end{enumerate}
\end{lemma}

We remark that the probabilities for finite executions and traces depend
only on their projections on the locally-controlled neurons, since the
input execution is always $\beta_{in}$.

\begin{lemma}
\label{lem: ignoring-inputs}
\begin{enumerate}
\item
Suppose that $\alpha$ is a finite execution of $\mathcal N$ that is
consistent with $\beta_{in}$.
Then $A(\alpha) = A(\alpha \lceil N_{lc})$ and $P(A(\alpha)) = P(A(\alpha \lceil N_{lc}))$.  
\item
Suppose that $\beta$ is a finite trace of $\mathcal N$ that is
consistent with $\beta_{in}$.
Then $A(\beta) = A(\beta \lceil N_{out})$ and $P(A(\beta)) = P(A(\beta \lceil N_{out}))$.  
\end{enumerate}
\end{lemma}

Now we give some simple lemmas involving the probabilities for finite
executions and related finite traces.
In the following lemma, the conditional probability statements follow
directly from the subset statements.

\begin{lemma}
  \label{lem: subsets}
Let $\alpha$ be a finite execution of $\mathcal N$ that
is consistent with $\beta_{in}$.
Suppose that $\alpha'$ is a prefix of $\alpha$.
Let $\beta = trace(\alpha) = \alpha \lceil N_{ext}$ and
$\beta' = trace(\alpha') = \alpha' \lceil N_{ext}$.
Then $\alpha'$, $\beta$, and $\beta'$ are also consistent with
$\beta_{in}$, and
\begin{enumerate}
\item
 $A(\alpha) \subseteq A(\beta)$, and
$P(A(\alpha)  | A(\beta)) = \frac{P(A(\alpha))}{P(A(\beta))}$.
\item
$A(\alpha) \subseteq A(\alpha')$, and 
$P(A(\alpha) | A(\alpha')) = \frac{P(A(\alpha))}{P(A(\alpha'))}$.
\item
$A(\alpha) \subseteq A(\beta')$, and 
$P(A(\alpha) | A(\beta')) = \frac{P(A(\alpha))}{P(A(\beta'))}$.
\item
$A(\alpha') \subseteq A(\beta')$, and
  $P(A(\alpha') | A(\beta')) = \frac{P(A(\alpha'))}{P(A(\beta'))}$.
\item
  $A(\beta) \subseteq A(\beta')$, and
  $P(A(\beta) | A(\beta')) = \frac{P(A(\beta))}{P(A(\beta'))}$.
\end{enumerate}
\end{lemma}

Consequences of the previous lemmas include the following, which is
used in Section~\ref{sec: compos-one-step}.

\begin{lemma}
  \label{lem: conditioning}
  \label{lem: claim3}
Let $\alpha$, $\alpha'$, $\beta$, and $\beta'$ be as in Lemma~\ref{lem: subsets}.
Then
\begin{enumerate}
\item
$P(A(\alpha) | A(\beta')) = P(A(\alpha) | A(\beta)) \times P(A(\beta) | A(\beta'))$.
\item
$P(A(\alpha) | A(\beta')) = P(A(\alpha) | A(\alpha')) \times P(A(\alpha') | A(\beta'))$.
\end{enumerate}
\end{lemma}

\hide{
\begin{proof}
By Lemma~\ref{lem: subsets}, we see that
\[
P(A(\alpha) | A(\beta')) = \frac{P(A(\alpha))}{P(A(\beta'))} = 
\frac{P(A(\alpha))}{P(A(\alpha'))} \times \frac{P(A(\alpha'))}{P(A(\beta'))} =
P(A(\alpha) | A(\alpha')) \times P(A(\alpha') | A(\beta')).
\]
\end{proof}
}

\hide{
  [[[I don't think we use this.]]]
  
The next lemma gives some simple equivalent formulations of a one-step
extension of traces, by unwinding definitions in terms of executions.
\begin{lemma}
\label{lem: incremental-traces}
Suppose that $\beta$ is a finite trace of length $t > 0$ that is
consistent with $\beta_{in}$.
Suppose that $\beta'$ is the length-$(t-1)$ prefix of $\beta$.
Then $P(A(\beta) | A(\beta'))$ is equal to all of the following:
\begin{enumerate}
\item
$\sum_{\alpha' : trace(\alpha') = \beta'} 
( P(A(\alpha') | A(\beta')) \times P(A(\beta) | A(\alpha')) ) $.

\item
$\frac{1}{P(A(\beta'))}
\sum_{\alpha' : trace(\alpha') = \beta'} 
( P(A(\alpha')) \times P(A(\beta) | A(\alpha')) )$. 

\item
$\frac{1}{P(A(\beta'))}
\sum_{\alpha' : trace(\alpha') = \beta'} 
   P(A(\alpha'))
   \sum_{\alpha : trace(\alpha) = \beta \mbox{ and } \alpha \mbox{ extends } \alpha'}
   P(A(\alpha) | A(\alpha'))$. 

\item
$\frac{1}{P(A(\beta'))}
\sum_{\alpha, \alpha' : trace(\alpha) = \beta, \alpha' \mbox{ is the length } t-1 \mbox {prefix of } \alpha}
(P(A(\alpha')) \times P(A(\alpha) | A(\alpha')))$.

\item
$\frac{1}{P(A(\beta'))}
\sum_{\alpha : trace(\alpha) = \beta} P(A(\alpha))$.

\item
$\frac{P(A(\beta))}{P(A(\beta'))}$.
\end{enumerate}
\end{lemma}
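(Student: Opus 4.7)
The plan is to verify the chain of equalities $1 = 2 = 3 = 4 = 5 = 6$, where each step is either a direct application of Lemma~\ref{lem: subsets} or a decomposition of a cone into a disjoint union of finer cones. All of the identities are algebraic consequences of the definitions once the right rearrangement is in place, so I would present the proof as a single computation that walks down the list.

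For $1 = 2$, I would apply part~4 of Lemma~\ref{lem: subsets} to rewrite each factor $P(A(\alpha') \mid A(\beta'))$ as $P(A(\alpha'))/P(A(\beta'))$, and pull the common factor $1/P(A(\beta'))$ outside the sum. For $3 = 4$ I would swap the nested sum over $\alpha'$ and $\alpha$ into a single sum indexed by pairs, noting that the pairing is a bijection since $\alpha'$ is forced to be the length-$(t-1)$ prefix of $\alpha$. For $4 = 5$, part~1 of Lemma~\ref{lem: subsets} gives $P(A(\alpha')) \cdot P(A(\alpha) \mid A(\alpha')) = P(A(\alpha))$, collapsing the summand to $P(A(\alpha))$ and eliminating $\alpha'$ from the index. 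For $5 = 6$, I would invoke the preceding lemma, which already asserts $P(A(\beta)) = \sum_{\alpha : trace(\alpha) = \beta} P(A(\alpha))$.

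The one step that requires an actual argument rather than bookkeeping is $2 = 3$. Here I need to establish that for each fixed $\alpha'$ with $trace(\alpha') = \beta'$,
\[
P(A(\beta) \mid A(\alpha')) \;=\; \sum_{\alpha \text{ extends } \alpha',\ trace(\alpha) = \beta} P(A(\alpha) \mid A(\alpha')).
\]
The key observation is that $A(\beta) \cap A(\alpha')$ decomposes as the disjoint union of the cones $A(\alpha)$ taken over length-$t$ executions $\alpha$ that extend $\alpha'$ and satisfy $trace(\alpha) = \beta$. Any infinite execution in the intersection extends $\alpha'$ (of length $t-1$) and, simultaneously, its length-$t$ prefix $\alpha$ must have $trace(\alpha) = \beta$ because $\beta$ has length $t$; conversely, any such $\alpha$ gives a cone $A(\alpha)$ sitting inside both $A(\beta)$ and $A(\alpha')$. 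Disjointness across distinct $\alpha$ follows from part~1 of Lemma~\ref{lem: disjoint-or-subset}, since any two distinct length-$t$ executions are incomparable. Dividing by $P(A(\alpha'))$ converts the disjoint union of cones into the required sum of conditional probabilities.

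The main obstacle is thus essentially the combinatorial bookkeeping in the $2 = 3$ step: identifying the correct decomposition of $A(\beta) \cap A(\alpha')$ so that countable additivity of $P$ may be applied cleanly. Once this is in hand, the remaining steps are purely mechanical substitutions driven by Lemma~\ref{lem: subsets}.
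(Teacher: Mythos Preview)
Your proof is correct and complete; the one minor omission is that you never explicitly tie the chain back to the left-hand side $P(A(\beta)\mid A(\beta'))$, but that is immediate from part~5 of Lemma~\ref{lem: subsets}, which gives $P(A(\beta)\mid A(\beta')) = P(A(\beta))/P(A(\beta'))$, i.e.\ expression~6. The paper itself states this lemma without proof, presumably because the six expressions are exactly the routine unwindings you carry out; your chain $1=2=3=4=5=6$ via Lemma~\ref{lem: subsets} and the disjoint-cone decomposition for $2=3$ is precisely the argument the paper is implicitly leaving to the reader.
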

}


We also give a lemma about repeated conditioning, as for probabilistic executions:  

\begin{lemma}
\label{lem: extended-product-trace}
Let $\beta$ be a length-$t$ trace of $\mathcal N$, $t > 0$, and
suppose that $\beta$ is consistent with $\beta_{in}$.
Let $\beta_i$, $0 \leq i \leq t$, be the successive prefixes of
$\beta$ (so that $\beta_0$ consists of the initial configuration $C_0$
projected on $N_{ext}$ and $\beta_t = \beta$).
Then 
\[
P(A(\beta)) =
 P(A(\beta_1) | A(\beta_0)) \times 
P(A(\beta_2) | A(\beta_1) \cdots \times P(A(\beta_t) | A(\beta_{t-1})).
\]
\end{lemma}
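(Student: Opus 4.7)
The plan is a simple telescoping argument by induction on $t$, leaning entirely on the one-step conditioning identity already established in Lemma~\ref{lem: subsets} (item 5) and its restatement in Lemma~\ref{lem: incremental-traces} (item 6), namely $P(A(\beta_i) \mid A(\beta_{i-1})) = P(A(\beta_i))/P(A(\beta_{i-1}))$.

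First I would handle the base case $t = 1$. Here $\beta_0$ is the length-$0$ trace determined by $\beta_{in}$ and $F_0 \lceil N_{out}$, so $A(\beta_0)$ is precisely the set of all infinite executions consistent with $\beta_{in}$. By definition of the probabilistic execution for $\beta_{in}$, the total mass on this set is $1$, so $P(A(\beta_0)) = 1$. Then Lemma~\ref{lem: subsets}(5) gives $P(A(\beta_1) \mid A(\beta_0)) = P(A(\beta_1))/P(A(\beta_0)) = P(A(\beta_1))$, which is the desired identity.

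For the inductive step, assume the claim holds for the length-$(t-1)$ trace $\beta_{t-1}$, i.e.\ $P(A(\beta_{t-1})) = \prod_{i=1}^{t-1} P(A(\beta_i) \mid A(\beta_{i-1}))$. Applying Lemma~\ref{lem: incremental-traces}(6) to $\beta = \beta_t$ and $\beta' = \beta_{t-1}$, I get $P(A(\beta_t)) = P(A(\beta_{t-1})) \cdot P(A(\beta_t) \mid A(\beta_{t-1}))$. Substituting the inductive hypothesis gives exactly the product in the statement.

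There is essentially no obstacle here: the identity is a straightforward telescoping of the ratios $P(A(\beta_i))/P(A(\beta_{i-1}))$, which are well defined because the sigmoid activation guarantees that every cone probability is strictly positive. The only subtlety worth noting in the write-up is why $P(A(\beta_0)) = 1$, which follows from the fact that the length-$0$ trace is forced to match the initial output firing pattern $F_0 \lceil N_{out}$ and the $0$-th input configuration from $\beta_{in}$, both of which every execution consistent with $\beta_{in}$ satisfies by construction.
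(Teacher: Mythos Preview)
Your proposal is correct and matches the paper's approach: the paper does not give an explicit proof of this lemma but treats it as the obvious telescoping of one-step conditional probabilities, exactly as you describe, noting (as you do) that $P(A(\beta_0)) = 1$ since $\beta_0$ is forced by $\beta_{in}$ and $F_0$.
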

As before, we do not need a separate term for $P(A(\beta_0))$,
because we are considering only traces in which $\beta_0$ is obtained
from $\beta_{in}$ and the initial assignment $F_0$.
So $P(A(\beta_0)) = 1$.

\subsubsection{Probabilistic traces}
\label{sec: prob-traces}

The previous definitions allow us to define a unique ``probabilistic
trace'' for any particular infinite input execution $\beta_{in}$.
The \emph{probabilistic trace} for $\beta_{in}$ is defined as a new
probability distribution $Q$, this one on the sample space $\Omega'$
of infinite traces $\beta$ that are consistent with $\beta_{in}$.
All of these traces have the same initial configuration,
constructed from the first configuration of $\beta_{in}$ and the initial
output firing pattern for the network, $F_0 \lceil N_{out}$.

The basic measurable sets are the sets of infinite traces in $\Omega'$
that extend a particular finite trace.
Formally, if $\beta$ is a particular finite trace that is consistent with
$\beta_{in}$, then $B(\beta)$, the ``cone'' of $\beta$, is the set of
infinite traces $\beta$ that are consistent with $\beta_{in}$ and
extend $\beta$.
Equivalently, $B(\beta)$ is just the set $traces(A(\beta))$.
Again, the other measurable sets in the $\sigma$-algebra are obtained
by starting with these cones and closing under countable union, countable
intersection, and complement.

We define the probabilities for the cones, $Q(B(\beta))$, based on
the corresponding probabilities for the probabilistic execution for
$\beta_{in}$.
Namely, if $\beta$ is a finite trace of $\mathcal N$ that is
consistent with $\beta_{in}$, then we define $Q(B(\beta))$ to be
simply $P(A(\beta))$.
As before, we can use these probabilities to derive the probabilities
of the other measurable sets in a unique way, using general measure
extension theorems as in~\cite{segalaThesis}.

\subsection{External behavior of a network}
\label{sec: extbeh-defs}

So far we have talked about individual probabilistic traces, each of
which depends on a fixed input execution $\beta_{in}$.
Now we define a notion of \emph{external behavior} of a network, which
is intended to capture its visible behavior for all possible inputs.
In Sections~\ref{sec: acyclic} and~\ref{sec: general}, we will show that
our notion of external behavior is \emph{compositional}, which means
that the external behavior of the composition of two networks,
${\mathcal N}^1 \times {\mathcal N}^2$,
is uniquely determined by the external behavior of ${\mathcal N}^1$
and the external behavior of ${\mathcal N}^2$.
          
Our definition of external behavior is based on the entire collection
of probabilities for the cones of all finite traces.
Namely, the external behavior $Beh(\mathcal N)$ is the mapping $f$
that maps each infinite input execution $\beta_{in}$ of $\mathcal N$
to the collection of probabilities $\{P(A(\beta))\}$ determined by the
probabilistic execution for $\beta_{in}$.
Here, $\beta$ ranges over the set of finite traces of $\mathcal N$ 
that are consistent with $\beta_{in}$.\footnote{
Formally, this ``collection'' is the mapping from finite traces
$\beta$ that are consistent with $\beta_{in}$ to the probabilities $P(A(\beta))$.
Thus, in terms of data types, $Beh(\mathcal N)$ is a nested mapping:
a mapping from the set of input executions to the set of mappings
from the set of finite traces consistent with $\beta_{in}$ to the set $[0,1]$.
}
In terms of probabilistic traces, this is the same as the collection
$\{Q(B(\beta))\}$, where $\beta$ has the same range.

\vspace{-.25cm}
\paragraph{Alternative behavior definitions:}
Other definitions of external behavior are possible.  Any such
definition would have to assign some ``behavior object'' to each
network $\mathcal N$.

In general, we define two external behavior notions $Beh_1$ and $Beh_2$ to
be \emph{equivalent} provided that the following holds.
Suppose that ${\mathcal N}$ and ${\mathcal N}'$ are two networks
with the same input neurons and the same output neurons.
Then $Beh_1(\mathcal N) = Beh_1({\mathcal N}')$ if and only if
$Beh_2(\mathcal N) = Beh_2({\mathcal N}')$.

%


Here we define one alternative behavior notion, based on one-step
conditional probabilities. 
This will be useful in our proofs for compositionality in
Section~\ref{sec: general}.
Namely, we define $Beh_2(\mathcal N)$ to be the mapping $f_2$
that maps each infinite input execution $\beta_{in}$ 
to the collection of conditional probabilities
$\{ P(A(\beta) | A(\beta')) \}$ based on the probabilistic execution
for $\beta_{in}$.
Here, $\beta$ ranges over the set of finite traces of $\mathcal N$
with length $> 0$ that are consistent with $\beta_{in}$, and $\beta'$
is the one-step prefix of $\beta$.

\begin{lemma}
The two behavior notions $Beh$ and $Beh_2$ are equivalent.
\end{lemma}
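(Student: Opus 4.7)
The plan is to show that each of the two notions determines the other, from which the equivalence follows directly from the definition of equivalent behavior notions. Because $Beh$ and $Beh_2$ are defined as collections of probabilities indexed by the same objects (finite traces consistent with $\beta_{in}$), it suffices to exhibit mutual formulas expressing the entries of one collection in terms of the entries of the other; these formulas depend only on $\beta_{in}$, the traces themselves, and the input/output neuron sets, not on any other internal structure of $\mathcal{N}$.

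First I would show that $Beh$ determines $Beh_2$. Fix any $\beta_{in}$ and any finite trace $\beta$ of length $t > 0$ consistent with $\beta_{in}$, and let $\beta'$ be its length-$(t-1)$ prefix. By Lemma~\ref{lem: subsets} (part 5), we have $A(\beta) \subseteq A(\beta')$ and
\[
P(A(\beta) \mid A(\beta')) = \frac{P(A(\beta))}{P(A(\beta'))}.
\]
Both numerator and denominator are entries of $Beh(\mathcal{N})(\beta_{in})$, and the denominator is strictly positive since the sigmoid firing probabilities are never zero. Hence the full collection $Beh_2(\mathcal{N})(\beta_{in})$ is recoverable from $Beh(\mathcal{N})(\beta_{in})$ by a fixed rule that refers only to traces.

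Next I would show the reverse, that $Beh_2$ determines $Beh$. For a finite trace $\beta$ of length $t$ consistent with $\beta_{in}$, let $\beta_0, \beta_1, \ldots, \beta_t = \beta$ be its successive prefixes. By Lemma~\ref{lem: extended-product-trace},
\[
P(A(\beta)) = \prod_{i=1}^{t} P(A(\beta_i) \mid A(\beta_{i-1})),
\]
and each factor is an entry of $Beh_2(\mathcal{N})(\beta_{in})$. The length-$0$ case is handled trivially since $P(A(\beta_0)) = 1$ by construction (the initial external configuration is determined by $\beta_{in}$ and $F_0 \lceil N_{out}$, and equals the corresponding initial external configuration for any network with the same input and output neurons and the same initial output pattern). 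So $Beh(\mathcal{N})(\beta_{in})$ is recoverable from $Beh_2(\mathcal{N})(\beta_{in})$ by a fixed rule.

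Finally I would assemble the equivalence. Let $\mathcal{N}$ and $\mathcal{N}'$ share input and output neuron sets. If $Beh(\mathcal{N}) = Beh(\mathcal{N}')$, then applying the quotient formula pointwise yields $Beh_2(\mathcal{N}) = Beh_2(\mathcal{N}')$; if $Beh_2(\mathcal{N}) = Beh_2(\mathcal{N}')$, then applying the telescoping product yields $Beh(\mathcal{N}) = Beh(\mathcal{N}')$. The only mild subtlety, and the one I would be careful about, is ensuring that the ``same domain of indices'' assumption really holds: the set of finite traces consistent with a given $\beta_{in}$ depends only on $N_{in}$, $N_{out}$, and $F_0 \lceil N_{out}$, so as long as two networks agree on these, the two collections are indexed by identical sets and pointwise equality of the collections is well-posed. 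Everything else is routine substitution.
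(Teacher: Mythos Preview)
Your proposal is correct and follows essentially the same approach as the paper: both directions are argued using exactly the same two lemmas (Lemma~\ref{lem: subsets}, Part 5, for the quotient, and Lemma~\ref{lem: extended-product-trace} for the telescoping product). Your treatment is in fact slightly more careful than the paper's, since you explicitly handle the length-$0$ case and flag the issue of matching index sets (which depends on $F_0 \lceil N_{out}$), whereas the paper passes over these points silently.
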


\begin{proof}
Suppose that ${\mathcal N}$ and ${\mathcal N}'$ are two networks
with the same input neurons and the same output neurons.
We show that $Beh$ and $Beh_2$ are equivalent by arguing 
the two directions separately:
\begin{enumerate}
\item
If $Beh({\mathcal N}) = Beh({\mathcal N}')$ then
$Beh_{2}({\mathcal N}) = Beh_{2}({\mathcal N}')$.

This follows because the conditional probability $P(A(\beta) | A(\beta'))$
is determined by the unconditional probabilities $P(A(\beta))$ and $P(A(\beta'))$;
see Lemma~\ref{lem: subsets}.

\item
If $Beh_{2}({\mathcal N}) = Beh_{2}({\mathcal N}')$ then
$Beh({\mathcal N}) = Beh({\mathcal N}')$.

This follows because the unconditional probability $P(A(\beta))$ is
determined by the conditional probabilities, see
Lemma~\ref{lem: extended-product-trace}.
\end{enumerate}
\qed
\end{proof}

\subsection{Examples}

In this subsection we give two fundamental examples to illustrate our
definitions so far:  some simple Boolean gate networks, and a network
implementing the ``Winner-Take-All'' mechanism from computational
neuroscience~\cite{trappenberg2009fundamentals,lazzaro1988winner,lee1999attention}.

\subsubsection{Simple Boolean gate networks}
\label{example: Boolean-circuits-1}

Figure~\ref{fig:boolean} depicts the structure of simple Spiking
Neural Networks in our model that represent and-gates, or-gates, and
not-gates.  For completeness, we also include an SNN representing the
identity computation.

\begin{figure}
\centering
    \begin{subfigure}{0.45\textwidth}
    \centering
        \includegraphics[width=.5\textwidth]{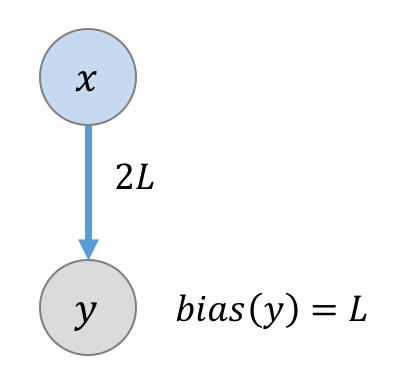}
        \caption{Identity}
    \end{subfigure}
    \begin{subfigure}{0.45\textwidth}
    \centering
        \includegraphics[width=.8\textwidth]{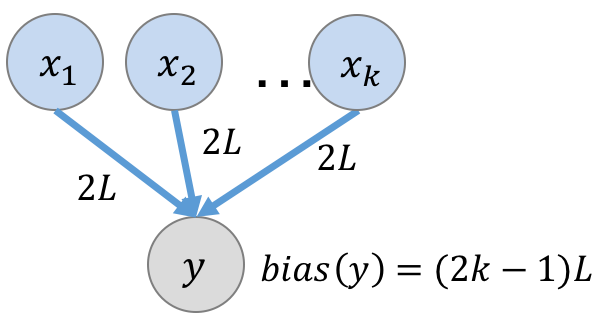}
        \caption{$k$-input And}
    \end{subfigure}
        \begin{subfigure}{0.45\textwidth}
        \centering
        \includegraphics[width=.7\textwidth]{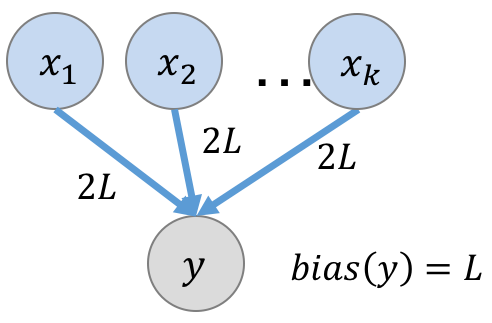}
        \caption{$k$-input Or}
    \end{subfigure}
    \begin{subfigure}{0.45\textwidth}
    \centering
        \includegraphics[width=.4\textwidth]{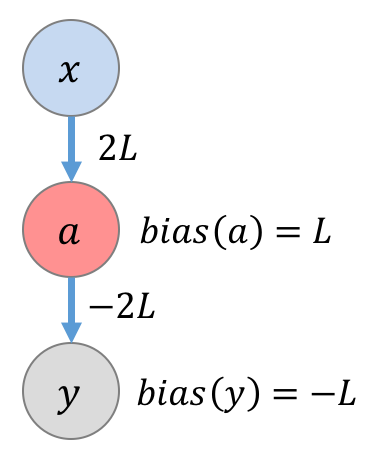}
        \caption{Not}
    \end{subfigure}
    \caption{Networks representing simple Boolean gates; here $L =
      \lambda \ln(\frac{1-\delta}{\delta})$, where $\delta$ is the
      error probability.
    }\label{fig:boolean}
\end{figure}

We describe the operation of each of these types of networks, in turn.
Fix a positive real number $\lambda$ for the temperature parameter of
the sigmoid function.
Fix an error probability $\delta$, $0 < \delta < 1$.
For each network below, let the initial firing pattern $F_0$ assign
$0$ to each locally controlled neuron.

Throughout this section, we use the abbreviation $L$ for the quantity
$\lambda \ln(\frac{1-\delta}{\delta})$; note that $L$ may be any real
number, but we focus on the case where $\delta \leq \frac{1}{2}$,
which makes $L$ non-negative.
We use the following identities repeatedly:
\[
e^{L/\lambda} = \frac{1-\delta}{\delta},
\frac{1}{1 + e^{L/\lambda}} = \delta, \mbox{ and }
\frac{1}{1 + e^{-L/\lambda}} = 1 - \delta.
\]

\vspace{-.25cm}
\paragraph{Identity network:}
The Identity network has one input neuron $x$ and one output neuron
$y$, connected by an edge with weight $w$.  The output neuron $y$ has
bias $b$.
Here we define $b = L$ and $w = 2L$.

With these settings, we get potential $w - b = 2L - L = L$ and
(expanding $L$, plugging into the sigmoid function, and using the
calculations above) output
firing probability $1-\delta$, in the case where the input fires.
Similarly, we get potential $-b = -L$ and output firing probability
$\delta$, in the case where the input does not fire.
Combining these two claims, consider the firing state of $x$ at time $0$.
Whether it is $0$ or $1$, the probability that $y$'s firing state
at time $1$ is the same as $x$'s firing state at time $0$ is exactly $1 - \delta$.

Now consider what happens with an arbitrary infinite input execution
$\beta_{in}$, rather than  just one input, that is, consider the
probabilistic execution for $\beta_{in}$.
Let $\beta$ be a finite trace of length $t \geq 1$ that is consistent with
$\beta_{in}$; by our assumption about $F_0$, $\beta$ must
include an initial firing state of $0$ for the output neuron $y$.
Suppose further that $\beta$ has the property that, for every $t'$, $1
\leq t' \leq t$, the firing state of $y$ at time $t'$ is equal to the
firing state of $x$ at time $t'-1$.  Then by repeated use of the
argument above, we get that $P(A(\beta)) = (1 - \delta)^t$.
      
Now suppose, as above, that $\beta$ is a length $t$ trace, $t \geq 1$,
that is consistent with $\beta_{in}$.
But now suppose that, in $\beta$,  the firing state of $y$ at time $t$ 
is equal to the firing state of $x$ at time $t-1$, but the firing
states of $y$ for all earlier times are arbitrary.
Let $\beta'$ denote the one-step prefix of $\beta$.
Then we can show that $P(A(\beta) | A(\beta')) = 1 - \delta$.
It follows that, for every time $t \geq 1$, the probability that the
firing state of $y$ at time $t$ is equal to the firing state of $x$ at
time $t-1$ is $1 - \delta$.
This uses the law of Total Probability, considering all the possible
length $t-1$ traces that are consistent with $\beta_{in}$. 

We also describe the external behavior $Beh$ for this network.
Namely, for each $\beta_{in}$, we must specify the collection of
probabilities $P(A(\beta))$, where $\beta$ ranges over the set of
finite traces of the network that are consistent with $\beta_{in}$.
In this case, for each such $\beta$ of length $t$, the probability
$P(A(\beta))$ is simply $(1-\delta)^a \delta^{t-a}$, where $a$ is the
number of positions $t'$, $1 \leq t' \leq t$, for which $y$'s firing
state in $\beta$ at time $t'$ is equal to $x$'s firing state in
$\beta$ at time $t'-1$.

\vspace{-.25cm}
\paragraph{$k$-input And network:}
The And network has $k$ input neurons, $x_1, x_2, \ldots, x_k$, and one output
neuron $y$.
Each input neuron is connected to the output neuron by an edge with
weight $w$.
The output neuron has bias $b$.
The Identity network is a special case of this network, where $k = 1$.

The idea here is to treat this as a threshold problem, and set $b$ and $w$
so that being over or under the threshold gives output firing state $1$ or $0$,
respectively, in each case with probability at least $1 - \delta$.
For a $k$-input And network, the output neuron $y$ should fire with
probability at least $1 - \delta$ if all $k$ input neurons fire, and
with probability at most $\delta$ if at most $k-1$ input neurons fire.

The settings for $b$ and $w$ generalize those for the Identity network.
Namely, define $b = (2k-1) L$ and $w = \frac{2b}{2k-1} =  2 L$.
When all $k$ input neurons fire, the potential is $k w - b = L$, 
and (expanding $L$ and plugging into the sigmoid function) the output firing
probability is $1 - \delta$.
When $k-1$ input neurons fire, the potential is $(k-1) w - b = -L$, 
and the output firing probability is $\delta$.
If fewer than $k-1$ fire, the potential and the output firing probability are
smaller. 
Similar claims about the external behavior $Beh$ for multi-round
computations to those we argued for the Identity network also hold for
the And network.

\vspace{-.25cm}
\paragraph{$k$-input Or network:}
The Or network has the same structure as the And network.  The
Or network also generalizes the Identity network, which is
the same as the $1$-input Or network.
Now the output neuron $y$ should fire with probability at least
$1 - \delta$ if at least one of the input neurons fires, and with
probability at most $\delta$ if no input neurons fire.
This time we set $b = L$ and $w = 2 L$. 
When one input neuron fires, the potential is 
$w - b = L$ and the output firing probability is $1 - \delta$.
When more than one fire, then the potential and the firing probability are greater.
When no input neurons fire, the potential is $-b = -L$, and the
output firing probability is $\delta$.
Again, similar claims about the external behavior for multi-round
computations hold for the Or network.

\vspace{-.25cm}
\paragraph{Not network:}
%
%
The Not network has one input $x$, one output $y$, and one internal
neuron $a$, which acts as an inhibitor for the output neuron.\footnote{We often
  classify neurons into two categories: \emph{excitatory neurons},
  all of whose outgoing edges have positive weights, and
  \emph{inhibitory neurons}, whose outgoing edges have negative
  weights.  However, this classification is not needed for the results
  in this paper.}
%
%
The network contains two edges, one from $x$ to $a$ with weight $w$,
and one from $a$ to $y$ with weight $w'$.
The internal neuron $a$ has bias $b$ and the output neuron $y$ has
bias $b'$.

The assembly consisting of the input and internal neurons acts like
the Identity network, with settings of $b$ and $w$ as before:
$b = L$ and $w = 2 L$.
So, for example, if we consider just $x$'s firing state at time
$0$, the probability that $a$'s firing state at time $1$ is
the same is exactly $1 - \delta$.

Let $b'$, the bias of the output neuron, be $-L$, and let $w'$, the
weight of the outgoing edge of the inhibitor, be $-2 L$.
Then if the internal neuron $a$ fires at time $1$, then the output neuron $y$
fires at time $2$ with probability $\delta$,
and if $a$ does not fire at time $1$, then $y$ fires at
time $2$ with probability $1-\delta$.
This yields probability $1- \delta$ of correct inhibition, which
then yields probabiity at least $(1 - \delta)^2$ that the output at
time $2$ gives the correct answer for the Not network.

Similar claims about multi-round computations as before also hold for
the Not network, except that the Not network has a delay of $2$
instead of $1$.
More precisely, consider an arbitrary infinite input execution
$\beta_{in}$, and consider the probabilistic execution for $\beta_{in}$.
Let $\beta$ be a finite trace of length $t \geq 2$ that is consistent
with $\beta_{in}$.
Then we know that $\beta$ must begin with a firing state of $0$ for
$y$; suppose also that the firing state of $y$ at time $1$ is $1$.  
Suppose further that $\beta$ has the property that, for every $t'$,
$2 \leq t' \leq t$, the firing state of $y$ at time $t'$ is unequal to the
firing state of $x$ at time $t'-2$.
Then we claim that $P(A(\beta)) \geq (1 - \delta)^{2(t-1)+1} = (1 - \delta)^{2t-1}$.
This is because, with probability $1-\delta$, the firing state of $y$
at time $1$ is equal to $1$,
and for each of the following times $t'$, $2 \leq t' \leq t$, with
probability at least $(1 - \delta)^2$, the firing state of $y$ at time
$t'$ is unequal to the firing state of $x$ at time $t'-2$.
%



\subsubsection{Winner-Take-All network}
\label{example: WTA-1}

Our next example is a simple \emph{Winner-Take-All (WTA)} network for
$n$ inputs and $n$ corresponding outputs.
It is based on a network presented in~\cite{LynchMP-arxiv19}.
Assume that some nonempty subset of the input neurons fire, in a
stable manner.
The output firing behavior is supposed to converge to a configuration
in which exactly one of the outputs, corresponding to one of the
firing inputs, fires.
We would like this convergence to occur quickly, in some fairly short
time $t_c$.  And we would like the resulting configuration to remain
stable for a fairly long time $t_s$.
Figure \ref{fig:wta} depicts the structure of the network.
There should be edges between every pair $(x_i,y_i)$ with weight
$3\gamma$, but these would be messy to draw.
    
\begin{figure}
\centering
\hspace{-1em}
 \includegraphics[width=.7\textwidth]{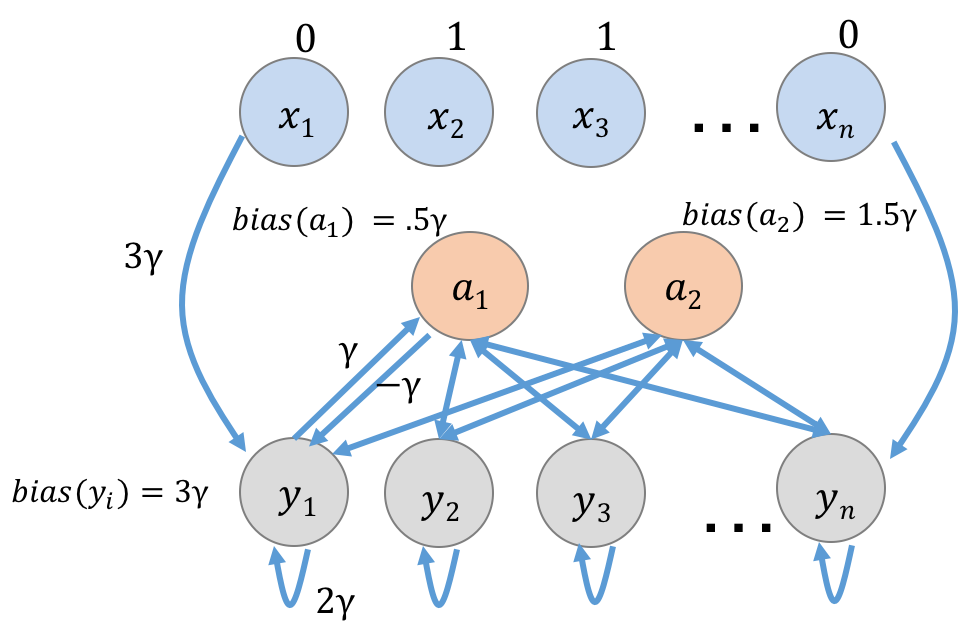}
\caption{A basic Winner-Take-All network.}
\label{fig:wta}
\end{figure}

In terms of the notation in this paper, consider any infinite input
execution $\beta_{in}$ in which all the input configurations are the
same and at least one input neuron is firing.
Consider the probabilistic execution for $\beta_{in}$.
In~\cite{LynchMP-arxiv19}, we prove that, in this probabilistic
execution, for certain values of $t_c$ and $t_s$, the probability
of convergence within time $t_c$ to an output configuration that
remains stable for time $t_s$ is at least $1 - \delta$.

The formal theorem statement is as follows.  Here, $\gamma$ is the
weighting factor used in the biases and edge weights in the network,
$\delta$ is a bound on the failure probability, and $c_1$ and $c_2$ are
particular small constants.

\begin{theorem}
\label{th: WTA}
Assume $\gamma \geq c_1 \log (\frac{n t_s}{\delta})$.
Then starting from any configuration, with probability $\geq 1 -
\delta$, the network converges, within time $t_c \le c_2 \log n
\log(\frac{1}{\delta})$, to a single firing output corresponding to a
firing input, and remains stable for time $t_s$. $c_1$ and $c_2$ are
universal constants, independent of $n$, $t_s$, and $\delta$.
\end{theorem}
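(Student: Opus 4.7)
The plan is to reduce the theorem to an analysis of a Markov chain on non-input configurations, exploiting the fact that the sigmoid firing rule with weighting factor $\gamma$ yields per-step error probabilities of order $e^{-\Omega(\gamma)}$. I would fix the input firing set $S \subseteq N_{in}$ (nonempty by hypothesis) and classify each non-input configuration, relative to $S$, as \emph{good} (exactly one output in $S$ fires, no output outside $S$ fires, and the inhibitor state is the one consistent with this), \emph{over-firing} (at least two outputs in $S$ fire), or \emph{under-firing} (no outputs fire). The network parameters are tuned so that, conditional on the current firing pattern, the inhibitor's firing probability distinguishes the ``one or fewer outputs firing'' regime from the ``two or more outputs firing'' regime, and conditional on the inhibitor, each output's next firing probability is pushed toward the ``correct'' value with per-step failure probability at most $\delta/(n t_s)$ whenever $\gamma \geq c_1 \log(n t_s/\delta)$.

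First I would bound convergence. Starting in an over-firing configuration with $k \geq 2$ outputs firing, one shows that two steps later the inhibitor fires with probability close to $1$, each of the $k$ currently-firing outputs independently fails to fire with probability close to $1/2$ (by choosing the relevant bias and inhibitor weight so the sigmoid sits near $1/2$ in this regime), and all non-firing outputs stay off. A Chernoff bound yields that the number of firing outputs drops to at most $\lceil 3k/4 \rceil$ with probability $1-e^{-\Omega(k)}$, so after $O(\log n)$ such ``halving rounds'' the chain reaches a good or under-firing configuration; iterating this argument $O(\log(1/\delta))$ times absorbs the failure probability into the claimed $\delta$. Starting from an under-firing configuration, the inhibitor goes quiescent within a constant number of steps, after which each output in $S$ independently fires with probability close to $1$; with high probability this produces either a good configuration immediately or an over-firing configuration, feeding back into the halving argument. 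A standard amortization shows the total time stays within $t_c \leq c_2 \log n \log(1/\delta)$ except with probability $\delta/2$.

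Second I would bound stability. From a good configuration, the potentials are tuned so that the single firing output re-fires, the inhibitor fires with the unique probability consistent with one output firing, and every other output stays silent — each of these events has per-step failure probability at most $\delta/(2 n t_s)$ under the assumption on $\gamma$. A union bound over the $t_s$ steps and the $n$ output neurons (together with the inhibitor and the chosen winner) gives total failure probability at most $\delta/2$ during the stability window. Combining with the convergence bound via another union bound yields the theorem.

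The main obstacle is the coupling between the inhibitor and the outputs across consecutive rounds: the outputs' firing at time $t$ determines the inhibitor's potential at $t$ and thus its firing at $t+1$, which then governs the outputs at $t+2$. To apply a clean Chernoff bound to the ``halving'' event I need the firing decisions of the $k$ active outputs to be conditionally independent given a fixed inhibitor trajectory, which requires carefully isolating two-step rounds and conditioning on the inhibitor's realized state at the intermediate time; this conditioning is what forces the extra $\log(1/\delta)$ factor in $t_c$ beyond the bare $\log n$ halving time. Verifying that the per-step sigmoid values stay uniformly bounded away from $0$ and $1$ across all relevant regimes — while simultaneously making the failure probabilities summable over $n t_s$ events — is the delicate quantitative step, and is exactly where the choice of $c_1$ is pinned down.
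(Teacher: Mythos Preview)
The paper does not actually prove this theorem here; it cites \cite{muscoThesis,LynchMP16} and gives only a one-paragraph sketch. Your overall strategy---Markov-chain analysis on non-input configurations, a geometric ``halving'' argument for convergence giving the $\log n \cdot \log(1/\delta)$ factor, and a union bound over $n t_s$ events for stability---matches that sketch and the cited proof in spirit.

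There is, however, a real architectural gap in your proposal. You consistently speak of \emph{the} inhibitor (singular), with a single threshold distinguishing ``$\le 1$ output firing'' from ``$\ge 2$ outputs firing.'' The network in the paper has \emph{two} inhibitors, and this is essential. With your single-inhibitor picture, consider a good configuration: exactly one output $y_i$ fires, so your inhibitor is quiescent. But then every other output $y_j$ with a firing input $x_j \in S$ sees positive potential from its input and no inhibition---so it fires with probability close to $1$, not close to $0$. Your stability claim that ``every other output stays silent with per-step failure probability at most $\delta/(2 n t_s)$'' therefore fails outright. The paper's mechanism instead uses one inhibitor that fires whenever $\ge 1$ output fires (this suppresses non-winners in the stable state, while the winner's self-loop overcomes it) and a second inhibitor that fires whenever $\ge 2$ outputs fire (the combined inhibition from both is what drives each active output to a near-$1/2$ coin during convergence). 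Your halving and stability arguments are salvageable, but only after you rewrite them around the two-inhibitor dynamics; as written, the stability step does not go through.
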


In terms of our model, the desirable executions are determined by what
happens in their prefixes ending with time $t_c + t_s - 1$.
The correctness condition is that, within this prefix, there is a
consecutive sequence of $t_s$ times in which the output neurons
exhibit an unchanging firing pattern in which exactly one output $y_i$
fires, and we have $x_i = 1$ in the input configuration.
Note that this is a statement about external behavior (traces) only.
Correctness can be expressed formally in terms of the probabilities of
the cones starting with these desirable traces.

The proof appears in~\cite{LynchMP-arxiv19}.
The basic idea is that, when more than one output is firing, both
inhibitors are triggered to fire.  When they both fire, they cause
each firing output to continue firing with probability $\frac{1}{2}$.
This serves to reduce the number of firing outputs at a predictable
rate.  Once only a single output fires, only one inhibitor continues
to fire; its effect is sufficient to prevent other non-firing outputs
from beginning to fire, but not sufficient to stop the firing output
from firing.  All this, of course, is probabilistic.

Note that the network is symmetric with respect to the $n$ outputs.
Therefore, we can refine the theorem above to assert that, for any
particular output neuron $y_i$ that corresponds to a firing input
neuron $x_i$, the probability that $y_i$ is the eventual firing output
neuron is at least $\frac{1 - \delta}{n}$.

\section{Composition of Spiking Neural Networks}
\label{sec: composition}

In this section, we define composition of networks.
We focus on composing two networks, but the ideas extend in a
straightforward way to any finite number of networks.
Alternatively, we can describe multi-network composition by
repeated use of two-network composition.

\subsection{Composition of two networks}

Networks that are composed must satisfy some basic, natural compatibility
requirements.  These are analogous to those used for I/O automata and
similar models~\cite{Lynch-book,LynchT89,KLSV10}, except that instead of input and output
actions, we consider input and output neurons.
Namely, two networks ${\mathcal N}^1$ and ${\mathcal N}^2$ are said to
be \emph{compatible} provided that:
\begin{enumerate}
\item
No internal neuron of ${\mathcal N}^1$ is a neuron of ${\mathcal N}^2$.
\item
No internal neuron of ${\mathcal N}^2$ is a neuron of ${\mathcal N}^1$.
\item
No neuron is an output neuron of both ${\mathcal N}^1$ and ${\mathcal N}^2$.
\end{enumerate}
On the other hand, the two networks may have common input neurons, and
output neurons of one network may also be input neurons
of the other network.\footnote{
    In the brain setting, common input neurons for two different networks seem
    to make sense:  a neuron might have two different sets of outgoing
    edges (synapses), leading to different sets of neurons in the two networks.
}

\begin{lemma}
  \label{lem: no-common-edges}
  If $\mathcal N^1$ and $\mathcal N^2$ are compatible, then they do
  not have any edges in common.
\end{lemma}

\begin{proof}
Suppose for contradiction that they have a common edge, from a neuron
$u$ to a neuron $v$.  Then both $u$ and $v$ belong to both networks.
Since $v$ is shared, it must be an input neuron of at least one of the
networks, by compatibility.
But then that network has an edge leading to one of its input neurons,
which is forbidden by our network definition.
\qed
\end{proof}

Assuming ${\mathcal N}^1$ and ${\mathcal N}^2$ are compatible, we
define their composition 
$\mathcal N = {\mathcal N}^1 \times {\mathcal N}^2$ as follows:
\begin{itemize}
\item
$N$, the set of neurons of ${\mathcal N}$, is the union of $N^1$ and
$N^2$, which are the sets of neurons of ${\mathcal N}^1$ and
  ${\mathcal N}^2$ respectively.
Note that common neurons are included only once in the set $N$.

In network $\mathcal N$, each neuron retains its classification as 
input/output/internal from its sub-network, except that a neuron that
is an input of one sub-network and output of the other gets classified
as an output neuron of ${\mathcal N}$.
In particular, an output neuron of one sub-network that is also an
input neuron of the other sub-network remains an output neuron of $\mathcal
N$.\footnote{In Section~\ref{sec: hiding}, we will introduce a hiding
  operator that reclassifies some output neurons as internal neurons.}

Each non-input neuron in $\mathcal N$ inherits its $bias$ from its
original sub-network.
This definition of bias is unambiguous:  if a neuron belongs to both
sub-networks, it must be an input of at least one of them, and
input neurons do not have biases.

\item
  $E$, the set of edges of $\mathcal N$, is defined as follows.
If $e$ is an edge from neuron $u$ to neuron $v$ in either ${\mathcal N^1}$ or
${\mathcal N}^2$, then we include $e$ also in ${\mathcal N}$; these
are the only edges in $\mathcal N$.

Each edge inherits its weight from its original sub-network.
This definition of weight is unambiguous, by Lemma~\ref{lem: no-common-edges}. 

Thus, if the source neuron $u$ is an input of both sub-networks, then
in $\mathcal N$, $u$ has edges to all the nodes to which it has edges
in ${\mathcal N}^1$ and ${\mathcal N}^2$.
If $u$ is an output of one sub-network, say ${\mathcal N}^1$,  and an
input of the other, ${\mathcal N}^2$, then in ${\mathcal N}$, it has
all the incoming and outgoing edges it has in ${\mathcal N}^1$ as well
as the outgoing edges it has in ${\mathcal N}^2$.

On the other hand, the target neuron $v$ cannot be an input of both
networks since it has an incoming edge in one of them.
So $v$ must be an output of one, say ${\mathcal N}^1$, and an input of
the other, ${\mathcal N}^2$.  
Then in $\mathcal N$, $v$ has all the incoming and outgoing edges it
had in ${\mathcal N}^1$ as well as the outgoing edges it has in
${\mathcal N}^2$.

\item
$F_0$, the initial non-input firing pattern of $\mathcal N$, gets
inherited directly from the two sub-networks' initial
non-input firing patterns.  
Since the two sub-networks have no
non-input neurons in common, this is well-defined.
\end{itemize}

The probabilistic executions and probabilistic traces of the new
network ${\mathcal N}$ are defined in the usual way, as in
Section~\ref{sec: model}.
In Sections~\ref{sec: acyclic} and~\ref{sec: general}, we show how to
relate these to the probabilistic executions and probabilistic traces
of ${\mathcal N}^1$ and ${\mathcal N}^2$.

Here are some basic lemmas analogous to those in Section~\ref{sec:
  properties-projected}.
For these lemmas, fix $\mathcal N = {\mathcal N}^1 \times {\mathcal
  N}^2$ and a particular input execution $\beta_{in}$ of ${\mathcal
  N}$, which yields a particular probabilistic execution $P$.
Recall that we use the notation $N^j$ for the set of neurons of ${\mathcal N}^j$,
$j \in \{1,2\}$.

\begin{lemma}
\label{lem: exec-trace-j}
\label{lem: subsets1}
\label{lem: subsets2}
Let $\alpha$ be a finite execution of $\mathcal N$ 
that is consistent with $\beta_{in}$.
Suppose that $\alpha'$ is a prefix of $\alpha$.
Let $\beta = trace(\alpha) = \alpha \lceil N_{ext}$ and 
$\beta' = trace(\alpha') = \alpha' \lceil N_{ext}$.

Let $j \in \{1, 2\}$.
Let $\alpha^j = \alpha \lceil N^j$,
$\alpha'^j = \alpha' \lceil N^j$,  
$\beta^j = \beta \lceil N^j$, and
$\beta'^j = \beta' \lceil N^j$. 
Then $\alpha^j$, $\alpha'^j$, $\beta^j$, and $\beta'^j$ are also consistent with
$\beta_{in}$, and
\begin{enumerate}
\item
$A(\alpha^j) \subseteq A(\beta^j)$, and
$P(A(\alpha^j) | A(\beta^j)) = \frac{P(A(\alpha^j))}{P(A(\beta^j))}$.
\item
$A(\alpha^j) \subseteq A(\alpha'^j)$, and
$P(A(\alpha^j) | A(\alpha'^j)) = \frac{P(A(\alpha^j))}{P(A(\alpha'^j))}$.
\item
$A(\alpha^j) \subseteq A(\beta'^j)$, and
$P(A(\alpha^j) | A(\beta'^j)) = \frac{P(A(\alpha^j))}{P(A(\beta'^j))}$.
\item
$A(\alpha'^j) \subseteq A(\beta'^j)$, and
  $P(A(\alpha'^j) | A(\beta'^j)) = \frac{P(A(\alpha'^j))}{P(A(\beta'^j))}$.
\item
$A(\beta^j) \subseteq A(\beta'^j)$, and
$P(A(\beta^j) | A(\beta'^j)) = \frac{P(A(\beta^j))}{P(A(\beta'^j))}$.
\end{enumerate}
\end{lemma}

As before, the previous lemmas directly imply other properties, such as:

\begin{lemma}
\label{lem: conditioning1}
\label{lem: conditioning2}
Let $\alpha^j$, $\alpha'^j$, $\beta^j$, and $\beta'^j$ be as in
Lemma~\ref{lem: subsets1}.
Then
\begin{enumerate}
\item
$P(A(\alpha^j) | A(\beta'^j)) = P(A(\alpha^j) | A(\beta^j)) \times P(A(\beta^j)
  | A(\beta'^j))$.
%
\item
$P(A(\alpha^j) | A(\beta'^j)) = P(A(\alpha^j) | A(\alpha'^j)) \times
  P(A(\alpha'^j) | A(\beta'^j))$.
\end{enumerate}
\end{lemma}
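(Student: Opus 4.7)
The plan is to prove this lemma by direct algebraic manipulation of ratios, in essentially the same way that Lemma~\ref{lem: conditioning} was derived from Lemma~\ref{lem: subsets}. Since Lemma~\ref{lem: subsets1} already gives us all five ratio identities for the projected objects $\alpha^j$, $\alpha'^j$, $\beta^j$, $\beta'^j$ in the composed network, no new probabilistic reasoning is required — the content of this lemma is just bookkeeping on top of Lemma~\ref{lem: subsets1}.

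First, I would apply Part~2 of Lemma~\ref{lem: subsets1} to rewrite $P(A(\alpha^j) \mid A(\beta^j))$ as $P(A(\alpha^j))/P(A(\beta^j))$. Next, I would introduce $P(A(\beta'^j))$ as a common factor in numerator and denominator, writing
\[
\frac{P(A(\alpha^j))}{P(A(\beta^j))} = \frac{P(A(\alpha^j))}{P(A(\beta'^j))} \times \frac{P(A(\beta'^j))}{P(A(\beta^j))}.
\]
By Part~3 of Lemma~\ref{lem: subsets1}, the first factor is $P(A(\alpha^j) \mid A(\beta'^j))$. This establishes the first equality of the lemma. Note that $P(A(\beta'^j))$ is nonzero for the same reason as in the base case — the sigmoid firing rule guarantees all traces consistent with $\beta_{in}$ have strictly positive probability — so dividing by it is legal.

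For the second equality, I would use Part~5 of Lemma~\ref{lem: subsets1}, which gives $P(A(\beta^j) \mid A(\beta'^j)) = P(A(\beta^j))/P(A(\beta'^j))$. Inverting yields $P(A(\beta'^j))/P(A(\beta^j)) = 1/P(A(\beta^j) \mid A(\beta'^j))$, so
\[
P(A(\alpha^j) \mid A(\beta'^j)) \times \frac{P(A(\beta'^j))}{P(A(\beta^j))} = \frac{P(A(\alpha^j) \mid A(\beta'^j))}{P(A(\beta^j) \mid A(\beta'^j))},
\]
which is the required identity.

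There is no real obstacle here; the only thing to be careful about is to invoke the right parts of Lemma~\ref{lem: subsets1} (Parts 2, 3, and 5 on the projected objects) rather than the corresponding parts of Lemma~\ref{lem: subsets} on the full network, and to confirm that the denominators are nonzero. The proof can therefore be compressed to essentially a one-line citation: ``By Lemma~\ref{lem: subsets1}'', matching the style already used for Lemma~\ref{lem: conditioning}.
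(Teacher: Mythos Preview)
Your proposal is correct and matches the paper's approach: the paper states this lemma without an explicit proof, and the analogous Lemma~\ref{lem: conditioning} is proved with the single line ``By Lemma~\ref{lem: subsets}.'' Your detailed unwinding via Parts~2, 3, and 5 of Lemma~\ref{lem: subsets1} is exactly the content behind that one-line citation, and your observation that it compresses to ``By Lemma~\ref{lem: subsets1}'' is precisely in the paper's style.
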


Now we consider projections on the locally-controlled neurons of one
of the networks.
We have:

\begin{lemma}
\label{lem: claim4}
Let $\alpha$ be a finite execution of $\mathcal N$ that is consistent
with $\beta_{in}$.
Let $\alpha'$ be a prefix of $\alpha$ and $\beta' = trace(\alpha')$.
Let $j \in \{1,2\}$.
Then
\begin{enumerate}
\item
  $P(A(\alpha \lceil N^j_{lc}) | A(\alpha' \lceil N^j)) =
  \frac{P(A(\alpha \lceil N^j_{lc})  \cap A(\alpha' \lceil N^j)}{P(A(\alpha' \lceil N^j))}$.
\item
  $P(A(\alpha \lceil N^j_{lc}) | A(\beta' \lceil N^j)) =
  \frac{P(A(\alpha \lceil N^j_{lc})  \cap A(\beta' \lceil N^j)}{P(A(\beta' \lceil N^j))}$.
  \item
$P(A(\alpha \lceil N^j_{lc}) | A(\beta' \lceil N^j)) = 
 P(A(\alpha \lceil N^j_{lc}) | A(\alpha' \lceil N^j)) \times 
 P(A(\alpha' \lceil N^j) | A(\beta' \lceil N^j))$.
 \end{enumerate}
\end{lemma}

\begin{proof}
Parts 1 and 2 are just the definitions of conditional probability,
specialized to these sets.
For Part 3, note that
$A(\alpha \lceil N^j_{lc}) \cap  A(\beta' \lceil N^j) =
A(\alpha \lceil N^j_{lc}) \cap  A(\alpha' \lceil N^j)$,
because $\alpha \lceil N^j_{lc}$ already determines all the firing states
for neurons in $N^j_{lc}$.
Thus, we have that 
\[
P(A(\alpha \lceil N^j_{lc}) | A(\beta' \lceil N^j)) =
  \frac{P(A(\alpha \lceil N^j_{lc})  \cap A(\beta' \lceil N^j)}{P(A(\beta' \lceil N^j))}
\]
by Part 2, which is equal to 
\[
  \frac{P(A(\alpha \lceil N^j_{lc})  \cap A(\alpha' \lceil N^j))}{P(A(\beta' \lceil N^j))},
\]
which is in turn equal to 
\[
\frac{P(A(\alpha \lceil N^j_{lc}) \cap A(\alpha' \lceil N^j))}{P(A(\alpha' \lceil N^j))}
\times
     \frac{P(A(\alpha' \lceil N^j))}{P(A(\beta' \lceil N^j))}.
     \]
Part 1 and Lemma~\ref{lem: exec-trace-j} then imply that this is equal to
\[
P(A(\alpha \lceil N^j_{lc}) | A(\alpha' \lceil N^j)) \times 
P(A(\alpha' \lceil N^j) | A(\beta' \lceil N^j)),
\]
as needed.
\qed
\end{proof}

\vspace{-.25cm}
\paragraph{A special case:  acyclic composition:}
An important special case of composition is acyclic composition, in
which edges connect in only one direction, say from network ${\mathcal N}^1$
to network ${\mathcal N}^2$.
Formally, we say that a composition is \emph{acyclic} provided that it
satisfies the additional compatibility restriction
$N^1_{in} \cap N^2_{out} = \emptyset$, that is, output
  neurons of ${\mathcal  N}^2$ cannot be input neurons of ${\mathcal N}^1$.

Thus, ${\mathcal N}^1$ may have inputs only from the ``outside world'',
whereas its outputs can connect to ${\mathcal N}^1$, ${\mathcal N}^2$, and
the outside world.  
${\mathcal N}^2$ may have inputs from the outside world and from
${\mathcal N}^1$, and its outputs can connect only to ${\mathcal N}^2$ and the
outside world.

\subsection{Examples}

Here we give three examples.  The first two use acyclic
composition, and the third is a toy example that involves cycles.

\subsubsection{Boolean circuits}
\label{example: Boolean-circuits-2}

Figure \ref{fig:composition} contains a circuit that is a composition
of four Boolean gate circuits of the types described in
Section~\ref{example:  Boolean-circuits-1}:  two And networks,
one Or network, and a Not network.  We compose these networks into a
larger network that is intended to compute an Xor function.

\begin{figure}
\centering
 \includegraphics[width=1\textwidth]{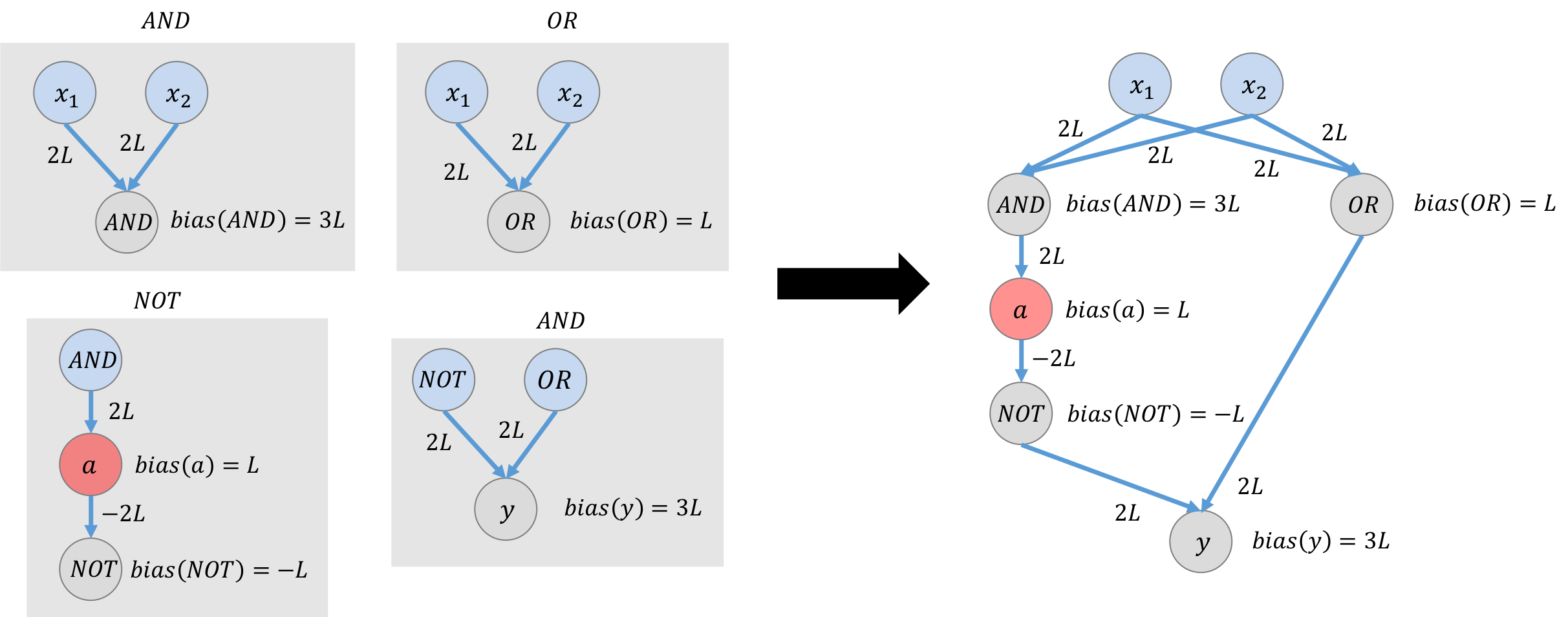}
\caption{Composing four Boolean gate circuits into an Xor network}
\label{fig:composition}
\end{figure}

In terms of the binary composition operator, we can compose the four
networks in three stages:
\begin{enumerate}
\item
Compose one of the And networks and the Not network to get a
network with two input neurons, two output neurons, and one internal
neuron, by identifying the output neuron of the And network with the
input neuron of the Not network.
Note that the composed network has two output neurons because the And
neuron remains an output---the composition operator does not
reclassify it as an internal neuron. 
The composed network is intended to compute the Nand of the two
inputs (as well as the And).
%
\item
Compose the network produced in Stage 1 with the Or network to get a
2-input-neuron, 3-output-neuron, 1-internal-neuron network, by identifying the 
corresponding inputs in the two networks.
The resulting network has output neurons corresponding to the Nand and the Or
of the two inputs (in addition to the And output neuron).
\item
Finally, compose the Nand network and the Or network with the second
And network, by identifying the Nand output neuron and the Or output
neuron with the two input neurons of the And network.
The resulting network has an output neuron corresponding to the Xor of the
two original inputs (in addition to outputs for the first And, the Nand,
and the Or networks).
\end{enumerate}
  
To state a simple guarantee for this composed circuit, let us assume
that the inputs fire consistently, in an unchanged firing pattern.
Then, working from the previously-shown guarantees of the individual
networks, we can say that the probability that the final output neuron $y$
produces its required Xor value at time $4$ is at least $(1 - \delta)^5$.
We revisit this example later, in Section~\ref{example: Boolean-circuits-3}.

\subsubsection{Attention using Winner-Take-All}
\label{example: WTA-2}

Figure \ref{fig:attention} depicts the composition of our
$WTA$ network from Section~\ref{example: WTA-1} with a
$2n$-input $n$ output $Filter$ network.  
The $Filter$ network is, in turn, a composition of $n$ disjoint And gates.
The composition is acyclic since information can flow from
$WTA$ to $Filter$ but not vice versa.  
  
\begin{figure}
\centering
 \includegraphics[width=1\textwidth]{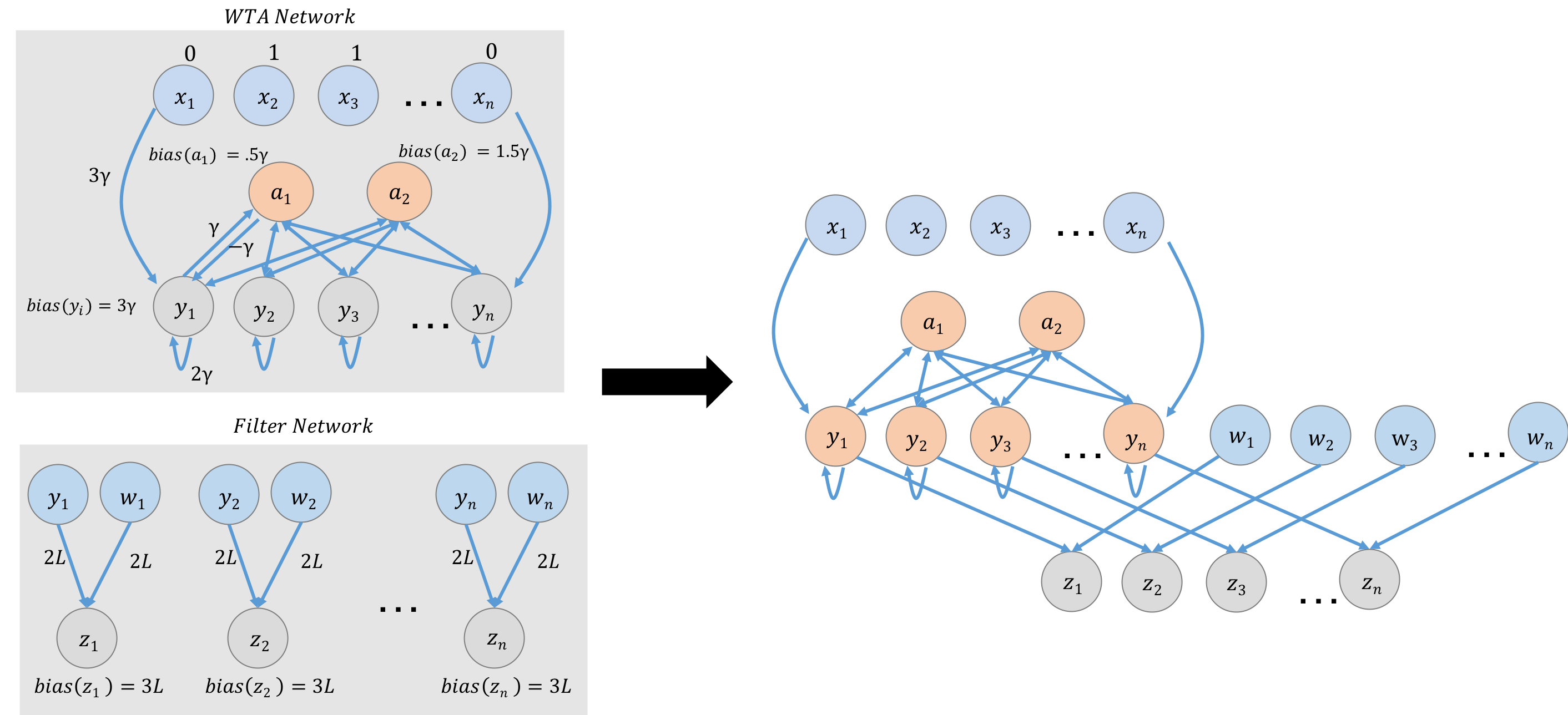}
\caption{An $Attention$ network built from a $WTA$ network and a $Filter$ network}
\label{fig:attention}
\end{figure}

The $Filter$ network is designed to fire any of its outputs, $z_i$, right
after the corresponding $w_i$ input fires, provided that its $y_i$
input (which is an output of the $WTA$ network) also fires.
In this way, the $WTA$ network is used to select particular outputs of
the $Filter$ network to fire---those that are ``reinforced'' by the
inputs from the $WTA$.

Assume that the $WTA$ and $Filter$ networks are composed, and the
$WTA$ inputs fire stably, with at least one input firing.
Then, as we described in Section~\ref{example: WTA-1},
with probability at least $1 - \delta$, the $WTA$ network soon stabilizes
to an output configuration with a single firing output $y_i$, which is
equally likely to be any of the $n$ outputs whose corresponding input
is firing.  That output
configuration should persist for a long time.
(Specific bounds are given in Theorem~\ref{th: WTA}.)

After the WTA stabilizes, it reinforces only a particular input $w_i$
for the $Filter$.  From that point on, the $Filter$'s $z_i$ outputs should
mirror its $w_i$ inputs, and no other $z$ outputs should fire.  
The probability of such mirroring should be at least 
$(1 - \delta')^{n t_s}$, if $\delta'$ denotes the failure probability
for an And gate.
(Recall from Example~\ref{example: WTA-1} that $t_s$ is the length of
the stable period for the $WTA$'s outputs.)
In this way, the composition can be viewed as an $Attention$
circuit, which pays attention to just a single input stream.

Note that the composed network behaves on two different time scales:
the $WTA$ takes some time to converge, but after that, the responses to
the selected intput stream will be essentially immediate.

\subsubsection{A toy example for cyclic composition}
\label{example: cyclic-1}

Now we give a toy example, consisting of two networks,
${\mathcal N}^1$ and ${\mathcal N}^2$,
that affect each other's behavior.
Throughout this section, we use the abbreviation $L$ for the quantity
$\lambda \ln(\frac{1-\delta}{\delta})$, as in
Section~\ref{example: Boolean-circuits-1}.
We assume that $\delta$ is ``sufficiently small''.

Figure \ref{fig:toy} shows a network ${\mathcal N}^1$ with one input
neuron $x_1$,
one output neuron $x_2$, and one internal neuron $a_1$. It has edges
from $x_1$ to $a_1$, from $a_1$ to $x_2$, and from $x_2$ to itself (a
self-loop).  The biases of $a_1$ and $x_2$ are $L$ and the weights on all
edges are $2L$.

\begin{figure}
\centering
 \includegraphics[width=1\textwidth]{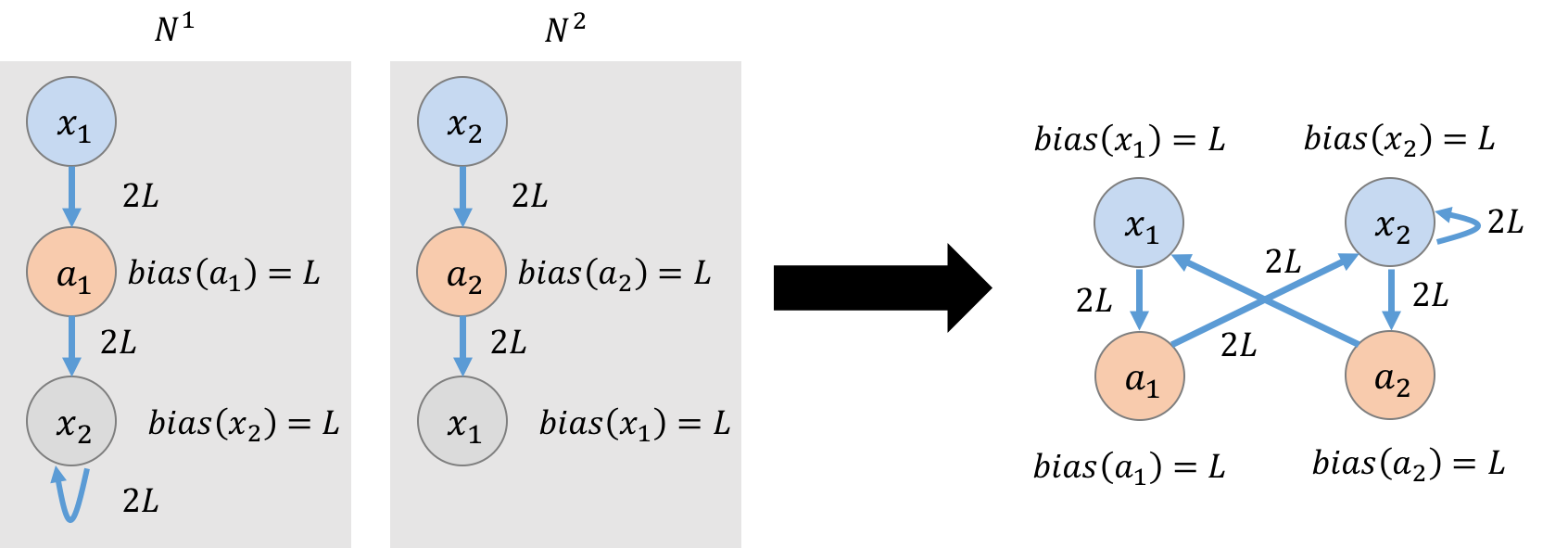}
\caption{A cyclic composition}
\label{fig:toy}
\end{figure}

Network ${\mathcal N}^1$ behaves so that, at any time $t \geq 1$, the firing
probability for the internal neuron $a_1$ is exactly $1 - \delta$ if
$x_1$ fires at time $t-1$, and is exactly $\delta$ if $x_1$ does not
fire at time $t-1$.
This is the same as for the output neuron of the Identity network in
Section~\ref{example: Boolean-circuits-1}.
The firing probability of the output neuron $x_2$ of ${\mathcal N}^1$
depends on the firing states of both $a_1$ and $x_2$ at time $t-1$.
This probability is:
\begin{itemize}
\item
$\delta$, if neither $a_1$ nor $x_2$ fires at time $t-1$.
\item
$1 - \delta$, if exactly one of $a_1$ and $x_2$ fires at time $t-1$.
\item
$1 - \frac{\delta^3}{(1 - \delta)^3 + \delta^3}$ if both $a_1$ and $x_2$
  fire at time $t-1$.
\end{itemize}
It follows that, if input $x_1$ fires at some time $t$, then output
$x_2$ is likely to fire at time $t+2$ (with probability at least
$(1 - \delta)^2$).
Without any additional input firing, and ignoring the low-likelihood
spurious firing of $a_1$,
the firing of $x_2$ is sustained only by the self-loop.
This means that the firing probability of $x_2$
decreases steadily over time, by a factor of $(1-\delta)$ at each
time.  Eventually, the firing should ``die out''.


Network ${\mathcal N}^2$ is similar, replacing $x_1$, $a_1$, and $x_2$
by $x_2$, $a_2$, and $x_1$, respectively.  However, we omit the
self-loop edge on $x_1$.  The biases are $L$ and the weights on the
two edges are $2L$.
Network ${\mathcal N}^2$ behaves so that, at any time $t \geq 1$, the firing
probability for the internal neuron $a_2$ is exactly $1 - \delta$ if
$x_2$ fires at time $t-1$, and is exactly $\delta$ if $x_2$ does not
fire at time $t-1$.
Likewise, the firing probability for the output neuron $x_1$ is
exactly $1 - \delta$ if $a_2$ fires at time $t-1$ and $\delta$ if
$a_2$ does not fire.
Thus, if input $x_2$ fires at some time $t$, then output 
$x_1$ is likely to fire at time $t+2$ (with probability at least $(1 -
\delta)^2$).
However, in this case, the firing of $x_1$ is not sustained.

Now consider the composition $\mathcal N = {\mathcal N}^1 \times {\mathcal N}^2$,
identifying the output $x_2$ of ${\mathcal N}^1$ with the input $x_2$
of ${\mathcal N}^2$, and the output $x_1$ of ${\mathcal N}^2$ with the
input $x_1$ of ${\mathcal N}^1$.
The behavior of $\mathcal N$ depends on the initial firing pattern.
Assume that neither $a_1$ nor $a_2$ fires initially; we
consider the behavior for the various starting firing patterns for
$x_1$ and $x_2$.
We consider two cases:
If neither $x_1$ nor $x_2$ fires at time $0$, then with ``high
probability'', none of the four neurons will fire for a long time.
On the other hand,
If one or both of $x_1$ and $x_2$ fire at time $0$, then with ``high
probability'', they will trigger all the neurons to fire and continue
to fire for a long time.
We give some details in Section~\ref{example: cyclic-2}.

\subsection{Compositionality definitions}

In Section~\ref{sec: extbeh-defs}, we defined a specific external
behavior notion $Beh$ for our networks, and an equivalent
alternative notion $Beh_2$.
Recall that, in general, a behavior definition $B$ assigns some
``behavior object'' $B(\mathcal N)$ to every network $\mathcal N$.
Here we define compositionality for general behavior
notions.  Later in the paper, in Sections~\ref{sec: acyclic}
and~\ref{sec: general}, we will prove that our particular behavior
notions are compositional.

In general, we define an external behavior notion $B$ to be
\emph{compositional} provided that the following holds:
Consider any four networks ${\mathcal N}^1$, ${\mathcal N}^2$, 
${\mathcal N}'^1$, and ${\mathcal N}'^2$, where 
${\mathcal N}^1$ and ${\mathcal N}'^1$ have the same sets of input and
output neurons, 
${\mathcal N}^2$ and ${\mathcal N}'^2$ have the same sets of input and
output neurons, 
${\mathcal N}^1$ and ${\mathcal N}^2$ are compatible, and
${\mathcal N}'^1$ and ${\mathcal N}'^2$ are compatible.
Suppose that $B({\mathcal N}^1) = B({\mathcal N}'^1)$ and
$B({\mathcal N}^2) = B({\mathcal N}'^2)$.
Then
$B({\mathcal N}^1 \times {\mathcal N}^2) = 
 B({\mathcal N}'^1 \times {\mathcal N}'^2)$.
Said another way:

\begin{lemma}
\label{lem: compositionality-characterization}
An external behavior notion $B$ is compositional if and only if,
for all compatible pairs of networks ${\mathcal N}^1$ and ${\mathcal
  N}^2$, $B({\mathcal N}^1 \times {\mathcal N}^2)$ is uniquely determined by 
$B({\mathcal N}^1)$ and $B({\mathcal N}^2)$.
\end{lemma}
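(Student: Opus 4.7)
The plan is to unpack both directions as essentially a restatement of what it means for a quantity to be ``determined by'' two other quantities, so the proof should be short and formal rather than computational. I would state it as two implications, taking care in the forward direction to verify that the candidate dependence is well-defined.

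For the forward direction, I assume $B$ is compositional and want to produce a function $g$ such that $B({\mathcal N}^1 \times {\mathcal N}^2) = g(B({\mathcal N}^1), B({\mathcal N}^2))$ for every compatible pair. The natural attempt is to define $g$ by picking any pair of networks realizing a given pair of behavior values and returning the behavior of their composition. The content here is checking this is well-defined: if $({\mathcal N}^1,{\mathcal N}^2)$ and $({\mathcal N}'^1,{\mathcal N}'^2)$ are two compatible pairs with $B({\mathcal N}^1) = B({\mathcal N}'^1)$ and $B({\mathcal N}^2) = B({\mathcal N}'^2)$, then compositionality gives exactly $B({\mathcal N}^1 \times {\mathcal N}^2) = B({\mathcal N}'^1 \times {\mathcal N}'^2)$, so the value assigned by $g$ does not depend on the choice of representative pair. (I would note that behaviors implicitly encode the external signature of a network, so pairs matched on $B$ automatically have matching input/output neuron sets, keeping the compatibility hypothesis of the definition satisfied.)

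For the reverse direction, I assume there exists a function $g$ with $B({\mathcal N}^1 \times {\mathcal N}^2) = g(B({\mathcal N}^1), B({\mathcal N}^2))$ for every compatible pair. Given four networks as in the compositionality definition with $B({\mathcal N}^1) = B({\mathcal N}'^1)$ and $B({\mathcal N}^2) = B({\mathcal N}'^2)$, I just chain equalities:
\[
B({\mathcal N}^1 \times {\mathcal N}^2) = g(B({\mathcal N}^1), B({\mathcal N}^2)) = g(B({\mathcal N}'^1), B({\mathcal N}'^2)) = B({\mathcal N}'^1 \times {\mathcal N}'^2),
\]
which is precisely compositionality.

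There is no real obstacle here; the only subtle point is the well-definedness check in the forward direction, where one must verify that $g$ can be defined as a function of behavior values rather than of the networks themselves, and this is exactly what compositionality supplies. I would keep the write-up to a short paragraph per direction.
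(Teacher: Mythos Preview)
Your proposal is correct and is exactly the kind of unpacking the paper has in mind; the paper's own proof is simply ``Straightforward,'' so you are just spelling out the two implications that the authors leave implicit. The one point you flag about well-definedness in the forward direction is the only real content, and your handling of it is fine.
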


Now we show that, in general, if two external behavior notions are
equivalent and one is compositional, then so is the other.
This will provide us with a method that will be helpful in
Section~\ref{sec: general} for showing compositionality.

\begin{theorem}
\label{th: composition-beh2}
If $B$ and $B'$ are two equivalent external behavior notions for
spiking neural networks, and $B$ is compositional, then also $B'$ is compositional.
\end{theorem}

\begin{proof}
Suppose that $B$ and $B'$ are two external behavior notions and
$B$ is compositional.  We show that $B'$ is compositional.
For this, consider any four networks ${\mathcal N}^1$, ${\mathcal N}^2$, 
${\mathcal N}'^1$, and ${\mathcal N}'^2$, where 
${\mathcal N}^1$ and ${\mathcal N}'^1$ have the same sets of input and
output neurons, 
${\mathcal N}^2$ and ${\mathcal N}'^2$ have the same sets of input and
output neurons, 
${\mathcal N}^1$ and ${\mathcal N}^2$ are compatible, and
${\mathcal N}'^1$ and ${\mathcal N}'^2$ are compatible.
Suppose that $B'({\mathcal N}^1) = B'({\mathcal N}'^1)$ and
$B'({\mathcal N}^2) = B'({\mathcal N}'^2)$.
We must show that
$B'({\mathcal N}^1 \times {\mathcal N}^2) = 
 B'({\mathcal N}'^1 \times {\mathcal N}'^2)$.

Since $B$ and $B'$ are equivalent and $B'({\mathcal N}^1) =
B'({\mathcal N}'^1)$, we have that $B({\mathcal N}^1) = B({\mathcal
  N}'^1)$.
Likewise, since $B'({\mathcal N}^2) = B'({\mathcal N}'^2)$, 
we have that $B({\mathcal N}^2) = B({\mathcal N}'^2)$.
Since $B$ is assumed to be compositional, this implies that
$B({\mathcal N}^1 \times {\mathcal N}^2) = 
B({\mathcal N}'^1 \times {\mathcal N}'^2)$. 
Then since $B$ and $B'$ are equivalent, we get that
$B'({\mathcal N}^1 \times {\mathcal N}^2) = 
B'({\mathcal N}'^1 \times {\mathcal N}'^2)$,
as needed.
\qed
\end{proof}

\section{Theorems for Acyclic Composition}
\label{sec: acyclic}

Our general composition results appear in Section~\ref{sec: general}.
Those are a bit complicated, mainly because of the possibility of
connections in both directions between the sub-networks.
Acyclic composition is an important special case of general
composition; many interesting examples satisfy the acyclic restriction.
Since this case can be analyzed more easily, we present this first.

Throughout this section, we fix the notation
$\mathcal N = {\mathcal N}^1 \times {\mathcal N}^2$, and assume that
$N^1_{in} \cap N^2_{out} = \emptyset$, that is, there are 
no edges from ${\mathcal N}^2$ to ${\mathcal N}^1$.

In this section, and from now on in the paper, we will generally avoid
writing the cone notation $A()$.
Thus, we will abbreviate $P(A(\alpha))$ and $P(A(\beta))$ as just
$P(\alpha)$ and $P(\beta)$.
We hope that this makes it easier to read complex formulas and does
not cause any confusion.

\subsection{Compositionality}
\label{sec: compos-acyclic}

We have not formally defined ``compositionality'' for the special case
of acyclic composition.
So here, 
we will simply show (Lemma~\ref{lem: acyclic-composition}) how
to express $Beh({\mathcal N})$ as a function of $Beh({\mathcal N}^1)$
and $Beh({\mathcal N}^2)$.  Thus (Theorem~\ref{th:  determine-Beh-acyclic}),
$Beh({\mathcal N})$ is uniquely determined by $Beh({\mathcal N}^1)$
and $Beh({\mathcal N}^2)$.

Specifically, we fix any particular input execution $\beta_{in}$ of
${\mathcal N}$, which generates a particular probability distribution
$P$ on infinite executions of ${\mathcal N}$.
We consider an arbitrary finite trace $\beta$ of ${\mathcal N}$ 
that is consistent with $\beta_{in}$.
We show how to express $P(\beta)$ in terms of probability
distributions $P^1$ and $P^2$ on infinite executions of
${\mathcal N}^1$ and ${\mathcal N}^2$, respectively.
These distributions $P^1$ and $P^2$ are defined from certain input
executions of ${\mathcal N}^1$ and ${\mathcal N}^2$, respectively.

We begin by deriving a simple expression for $P(\beta)$, for an
arbitrary finite trace $\beta$ of $\mathcal N$ that is consistent with
$\beta_{in}$, in terms of the same probability distribution $P$ on
projections of $\beta$.

\begin{lemma}
\label{lem: acyclic-composition-1}
Let $\beta$ be a finite trace of ${\mathcal N}$ that is consistent
with $\beta_{in}$.
Then
\[
P(\beta) = P(\beta \lceil N^1_{out}) \times 
           P((\beta \lceil N^2_{out}) | (\beta \lceil N^2_{in})).
\]
\end{lemma}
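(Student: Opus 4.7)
The plan is to expand $P(\beta)$ as a sum over executions, factor each summand across the two sub-networks using acyclicity, and then collapse the appropriate inner sums. Two structural observations are the engine. First, compatibility together with the acyclicity hypothesis $N^1_{in}\cap N^2_{out} = \emptyset$ force $N_{lc} = N^1_{lc} \sqcup N^2_{lc}$, and $N^1\cap N^2 \subseteq (N^1_{in}\cap N^2_{in}) \cup (N^1_{out}\cap N^2_{in}) \subseteq N_{ext}$. Second, since there are no edges from $\mathcal{N}^2$ to $\mathcal{N}^1$, every incoming edge of a neuron $u \in N^1_{lc}$ lies in $\mathcal{N}^1$, so its one-step firing probability $p_u$ depends only on the preceding configuration restricted to $N^1$; the symmetric statement holds for $u \in N^2_{lc}$ because no internal neuron of $\mathcal{N}^2$ lies in $\mathcal{N}^1$.

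First I would unfold the recursive definition of $P(A(\alpha))$ from Section~\ref{sec: prob-execs} and, using the observations above, factor for any finite $\alpha$ consistent with $\beta_{in}$: $P(A(\alpha)) = F^1(\alpha\lceil N^1)\cdot F^2(\alpha\lceil N^2)$, where $F^j$ is the product, over time steps $s\ge 1$ and $u\in N^j_{lc}$, of the appropriate Bernoulli factor $p_u$ or $1-p_u$ read off from the previous $N^j$-configuration. Substituting this into $P(A(\beta)) = \sum_{\alpha:\mathrm{trace}(\alpha)=\beta} P(A(\alpha))$, I would then decouple the sum. The trace constraint $\alpha\lceil N_{ext}=\beta$ together with consistency with $\beta_{in}$ translates into the separate constraints $\alpha^1\lceil N^1_{ext} = \beta\lceil N^1_{ext}$ and $\alpha^2\lceil N^2_{ext}=\beta\lceil N^2_{ext}$, and on the shared neurons $N^1\cap N^2$ (all of which lie in $N_{ext}$) the two constraints agree, so the joint sum becomes a product of independent sums over $\alpha^1$ and $\alpha^2$.

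Next I would identify the two resulting sums with the factors in the statement. The essential auxiliary identity is the collapsing lemma $\sum_{\alpha^2 :\, \alpha^2\lceil N^2_{in}=\sigma} F^2(\alpha^2) = 1$, for every fixed length-$t$ input sequence $\sigma$; this is just $p_u+(1-p_u)=1$ iterated at each $u\in N^2_{lc}$ and each step, from time $t$ back to time $1$. Writing $P(\beta\lceil N^1_{out}) = \sum_\alpha F^1(\alpha^1)F^2(\alpha^2)$ under the constraints $\alpha^1\lceil N^1_{ext}=\beta\lceil N^1_{ext}$ and $\alpha^2\lceil N^2_{in}=\beta\lceil N^2_{in}$ (the $N^2_{in}$-part of $\beta$ is already determined by $\beta\lceil N^1_{out}$ and $\beta_{in}$), the collapsing lemma makes the $F^2$-sum equal $1$ and leaves exactly the first $F^1$-sum from the previous paragraph. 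Analogously, using $P(\beta\lceil N^2_{out}\mid\beta\lceil N^2_{in}) = P(\beta\lceil N^2_{ext})/P(\beta\lceil N^2_{in})$ and applying the same factorization to numerator and denominator, the common $F^1$-sum cancels from the ratio and leaves exactly the second $F^2$-sum.

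The main obstacle will be the bookkeeping of shared neurons in $N^1\cap N^2$: I must verify case by case (shared inputs of $\mathcal{N}$, and outputs of $\mathcal{N}^1$ that are inputs of $\mathcal{N}^2$) that the constraint imposed on $\alpha^1$ and the constraint imposed on $\alpha^2$ agree, so that the ``independent'' sums are genuinely independent and their product really equals the original joint sum over $\alpha$. Once that consistency is spelled out carefully and the collapsing identity is in place, the remainder is straightforward algebra comparing the two $F^1\cdot F^2$ expansions.
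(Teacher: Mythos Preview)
Your proposal is correct but takes a genuinely different route from the paper. The paper's argument stays entirely at the trace level: it uses Lemma~\ref{lem: ignoring-inputs} to write $P(\beta)=P(\beta\lceil N_{out})$, applies the chain rule $P(\beta\lceil N^1_{out})\cdot P(\beta\lceil N^2_{out}\mid \beta\lceil N^1_{out})$, and then massages the conditioning set from $N^1_{out}$ to $N^2_{in}$ via two one-line locality claims (``the behavior of $\mathcal{N}^2$ does not depend on neurons in $N^1_{out}-N^2_{in}$'', and ``the rest of $N^2_{in}$ is already fixed by $\beta_{in}$''). By contrast, you descend to the execution level, prove an explicit product factorization $P(A(\alpha))=F^1(\alpha\lceil N^1)\cdot F^2(\alpha\lceil N^2)$ from the recursive definition in Section~\ref{sec: prob-execs}, decouple the resulting double sum using $N^1\cap N^2\subseteq N_{ext}$, and recover each factor via the collapsing identity $\sum_{\alpha^2:\alpha^2\lceil N^2_{in}=\sigma}F^2(\alpha^2)=1$. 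The paper's proof is much shorter and more conceptual, but its locality steps are asserted rather than derived; your approach is heavier but makes those steps fully explicit, and the factorization $P(A(\alpha))=F^1\cdot F^2$ and the collapsing lemma are reusable tools that essentially amount to proving Lemma~\ref{lem: ignoring-inputs} and the independence claims from first principles. Either argument is fine; yours would serve well as a rigorous underpinning for the paper's more informal one.
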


\begin{proof}
Since $\beta \lceil N_{in}$ is fixed, we have that 
\[
P(\beta) = P(\beta \lceil N_{out})
         = P((\beta \lceil (N^1_{out} \cup N^2_{out})).
\]
This last expression is equal to
\[
P(\beta \lceil N^1_{out}) \times 
P((\beta \lceil N^2_{out}) | (\beta \lceil N^1_{out}))
\]
by basic conditional probability reasoning.
We have that
\[
P((\beta \lceil N^2_{out}) | (\beta \lceil N^1_{out})) =
P((\beta \lceil N^2_{out}) | (\beta \lceil (N^1_{out} \cap N^2_{in}))),
\]
because the behavior of ${\mathcal N}^2$ does not depend on the firing
states of neurons in $N^1_{out} - N^2_{in}$.
(That is, the firing behavior of the neurons in $N^2_{out}$ is
independent of the behavior of the neurons in $N^1_{out} - N^2_{in}$,
conditioned on the behavior of the neurons in $N^1_{out} \cap N^2_{in}$.)
The right-hand side of this equation is equal to 
\[
P((\beta \lceil N^2_{out}) | (\beta \lceil (N^2_{in})))
\]
because $N^2_{in}$ consists of $N^1_{out} \cap N^2_{in}$ plus some
neurons in $N_{in}$, whose firing states are fixed in $\beta_{in}$.
Substituting yields
\[
P(\beta) = P(\beta \lceil N^1_{out}) \times 
           P((\beta \lceil N^2_{out}) | (\beta \lceil N^2_{in})),
\]
as needed.
\qed
\end{proof}

Thus, Lemma~\ref{lem: acyclic-composition-1} assumes an arbitrary input
execution $\beta_{in}$ of ${\mathcal N}$, which generates a
probability distribution $P$.
This lemma expresses $P(\beta)$, for an arbitrary $\beta$, in terms of
the $P$-probabilities of other finite traces.
However, we are not quite there:  Our main goal here is to express
$P(\beta)$ in terms of probability distributions $P^1$ and $P^2$ that
are generated by ${\mathcal N}^1$ and ${\mathcal N}^2$, respectively,
from particular infinite input executions for those respective sub-networks.  
We define these input executions and distributions as follows.
\begin{itemize}
\item
Input execution $\beta^1_{in}$ and distribution $P^1$ for ${\mathcal N}^1$:

Define the infinite input execution $\beta^1_{in}$ of ${\mathcal N}^1$
to be $\beta_{in} \lceil N^1_{in}$, that is, the projection of the
given input execution on the inputs of ${\mathcal N}^1$.
Then define $P^1$ to be the probability distribution that is generated by
${\mathcal N}^1$ from input execution $\beta^1_{in}$.

\item
Input execution $\beta^2_{in}$ and distribution $P^2$ for ${\mathcal N}^2$:

This is more complicated, since the input to ${\mathcal N}^2$
depends not only on the external input $\beta_{in}$, but also on the output
produced by ${\mathcal N}^1$.
Define the infinite input execution $\beta^2_{in}$ of ${\mathcal N}^2$
as follows.  
First, note that $N^2_{in} \subseteq N_{in} \cup N^1_{out}$, that is,
every input of ${\mathcal N}^2$ is either an input of ${\mathcal N}$
or an output of ${\mathcal N}^1$.
Define the firing patterns of the neurons in $N^2_{in} \cap N_{in}$ using
$\beta_{in}$, that is, define 
$\beta^2_{in} \lceil (N^2_{in} \cap N_{in}) = \beta_{in} \lceil N^2_{in}$.
And for the firing patterns of the neurons in $N^2_{in} \cap
N^1_{out}$, use $\beta$, that is, define
$\beta^2_{in} \lceil (N^2_{in} \cap N^1_{out}) = 
\beta \lceil (N^2_{in} \cap N^1_{out})$ for times
$0,\ldots,length(\beta)$ and the default $0$ for all later times.
(This choice for later times is arbitrary---we just chose $0$s to be concrete.)
Then define $P^2$ to be the probability distribution that is generated by
${\mathcal N}^2$ from input execution $\beta^2_{in}$.
\end{itemize}

Note that, in the second case above, the choice of the input execution
$\beta^2_{in}$ depends on the particular trace $\beta$ for which we are
trying to express the $P$-probability.
This is allowed because the external behavior $Beh({\mathcal N}^2)$
is defined to specify a probability distribution for \emph{every} individual infinite input
execution of ${\mathcal N}^2$.\footnote{To elaborate:  According to our
approach throughout this paper, we get a probability distribution of traces of ${\mathcal
  N}^2$ by fixing an infinite input execution of ${\mathcal N}^2$.
The question here is, which input to choose?
The infinite input $\beta_{in}$ for the entire system $\mathcal N$
provides part of the answer, for inputs of ${\mathcal N}^2$ that are
also inputs of $\mathcal N$.
The other part is obtained from $\beta$ projected on the inputs of
${\mathcal N}^2$ that are outputs of ${\mathcal N}^1$.
Technically, we have to pad out $\beta$ somehow, since we
need an infinite input execution, but it doesn't matter how we do
this, since the probability that ${\mathcal N}^2$ produces outputs
consistent with $\beta$ depends only on the portion of the input
up to $length(\beta)$.}

The next lemma restates the result of Lemma~\ref{lem:
  acyclic-composition-1} in terms of the new probability distributions
$P^1$ and $P^2$.
The key idea is that the probability $P^2$ is essentially a
conditional probability distribution, giving
probabilities for ${\mathcal N}^2$'s outputs, conditioned on its
inputs being consistent with $\beta$.

\begin{lemma}
\label{lem: acyclic-composition-2}
Let $\beta$ be a finite trace of $\mathcal N$ that is consistent with
$\beta_{in}$.  Then 
\[
P(\beta) =
P^1(\beta \lceil N^1_{out}) \times P^2(\beta \lceil N^2_{out}).
\]
\end{lemma}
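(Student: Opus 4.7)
The plan is to start from Lemma~\ref{lem: acyclic-composition-1}, which already gives the decomposition
\[
P(\beta) = P(\beta \lceil N^1_{out}) \times P((\beta \lceil N^2_{out}) \mid (\beta \lceil N^2_{in})),
\]
and then identify the two factors on the right with $P^1(\beta \lceil N^1_{out})$ and $P^2(\beta \lceil N^2_{out})$, respectively, by exploiting the acyclic structure of the composition.

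For the first factor, I would argue that the marginal distribution on $N^1$-projections induced by $P$ coincides with the distribution $P^1$ on traces of ${\mathcal N}^1$. The key structural observation is that by compatibility together with acyclicity ($N^1_{in} \cap N^2_{out} = \emptyset$), no neuron in $N^1$ can have an incoming edge whose source lies outside $N^1$: an input of ${\mathcal N}^1$ has no incoming edges at all, and any neuron in $N^1_{lc}$ belongs only to ${\mathcal N}^1$, so its incoming edges all come from $E^1$. Consequently, the sigmoid firing probability computed for any $u \in N^1_{lc}$ at any step depends only on the previous configuration restricted to $N^1$, and it uses exactly the bias and incoming weights that $u$ has in ${\mathcal N}^1$. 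Combined with $\beta_{in} \lceil N^1_{in} = \beta^1_{in}$ and with the fact that $F_0$ restricts correctly to the initial pattern of ${\mathcal N}^1$ on $N^1_{lc}$, a short induction on the length of $\beta$, using the recursive definition of cone probabilities from Section~\ref{sec: prob-execs}, shows that $P(\beta \lceil N^1) = P^1(\beta \lceil N^1)$. Applying Lemma~\ref{lem: ignoring-inputs}(2) in both networks then yields $P(\beta \lceil N^1_{out}) = P^1(\beta \lceil N^1_{out})$.

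For the second factor, the argument is analogous but conditional. Once we condition on $\beta \lceil N^2_{in}$, which covers both the $N_{in} \cap N^2_{in}$ portion (already fixed by $\beta_{in}$) and the $N^1_{out} \cap N^2_{in}$ portion, the remaining dynamics for neurons in $N^2_{lc}$ are governed by sigmoid computations that depend only on the previous configuration restricted to $N^2$, again because compatibility forbids incoming edges into $N^2$ from outside $N^2$. By construction, $\beta^2_{in}$ agrees with $\beta \lceil N^2_{in}$ at all times $0, \ldots, length(\beta)$, so in the isolated network ${\mathcal N}^2$ driven by $\beta^2_{in}$ the same sequence of step-by-step sigmoid computations is performed (behavior of $\beta^2_{in}$ past time $length(\beta)$ is irrelevant to the cone of the finite trace in question). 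A cone-by-cone comparison, together with Lemma~\ref{lem: ignoring-inputs}(2) applied in ${\mathcal N}^2$ to drop the input part of the trace, gives $P((\beta \lceil N^2_{out}) \mid (\beta \lceil N^2_{in})) = P^2(\beta \lceil N^2_{out})$.

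The main obstacle is the bookkeeping needed to verify that these step-by-step sigmoid computations truly coincide in the two networks. One must check that the bias of every $u \in N^j_{lc}$ and the weights of all its incoming edges are inherited unchanged into $\mathcal N$ (which follows from the composition definition, since such $u$ cannot be an input of the other sub-network), and that the set of predecessors used in the potential sum for $u$ is the same in ${\mathcal N}^j$ as in $\mathcal N$---which is exactly what the absence of cross-edges into $N^j$ guarantees. Once these local equalities of one-step transition probabilities are established, both halves reduce to routine inductions using the recursive construction of $P(A(\alpha))$.
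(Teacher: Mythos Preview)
Your proposal is correct and follows essentially the same approach as the paper: start from Lemma~\ref{lem: acyclic-composition-1} and then identify each factor with the corresponding $P^j$-probability by unwinding the definitions of $P^1$ and $P^2$. The paper's own proof is extremely terse---it simply asserts that the two required equalities ``follow directly by unwinding the definitions''---whereas you have spelled out the structural reasons (no incoming edges into $N^j$ from outside $N^j$, inheritance of biases and weights, agreement of $\beta^2_{in}$ with $\beta \lceil N^2_{in}$ through time $length(\beta)$) that make that unwinding go through.
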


\begin{proof}
Fix $\beta$, a finite trace of of ${\mathcal N}$ that is consistent
with $\beta_{in}$.
By Lemma~\ref{lem: acyclic-composition-1}, we know that:
\[
P(\beta) = P(\beta \lceil N^1_{out}) \times 
P((\beta \lceil N^2_{out}) | (\beta \lceil N^2_{in})).
\]
It suffices to show that these two terms are equal to the
corresponding terms in this lemma, that is, that
\[
P(\beta \lceil N^1_{out})  = P^1(\beta \lceil N^1_{out}) 
\]
and 
\[
P((\beta \lceil N^2_{out}) | (\beta \lceil N^2_{in})) = P^2(\beta \lceil N^2_{out}).
\]
These two statements follow directly by unwinding the definitions of
$P^1$ and $P^2$, respectively.
%
Specifically, for the first statement, we consider $P(\beta \lceil
N^1_{out})$, the probability that the composed network $\mathcal N$
generates an execution that, when projected on outputs of ${\mathcal
  N}^1$, starts with $\beta \lceil N^1_{out}$.
We note that this probability is entirely determined by the sub-network
${\mathcal N}^1$, based on $\beta_{in}$ projected on the
inputs of ${\mathcal N}^1$.
But this is just the definition of $P^1(\beta \lceil N^1_{out})$.

Likewise, though a bit more subtly, for the second statement, we consider
$P((\beta \lceil N^2_{out}) | (\beta \lceil N^2_{in}))$, which is the conditional
probability that the composed network generates an execution that,
when projected on outputs of ${\mathcal N}^2$, starts with $\beta
\lceil N^2_{out}$, conditioned on the event that the inputs to
${\mathcal N}^2$ start with $\beta \lceil N^2_{in}$.
This time, the probability is entirely determined by the sub-network
${\mathcal N}^2$, based on $\beta$ projected on the
inputs of ${\mathcal N}^2$.\footnote{Notice that this
  probability is entirely determined by the finite input $\beta \lceil
  N^2_{in}$---the firing states of the input neurons of
  ${\mathcal N}^2$ after time $length(\beta)$ do not matter.}
But this is just the definition of $P^2(\beta \lceil N^2_{out})$.
\qed
\end{proof}
%

The next lemma has a slightly simpler statement than Lemma~\ref{lem:
  acyclic-composition-2}.

\begin{lemma}
\label{lem: acyclic-composition}
Let $\beta$ be a finite trace of $\mathcal N$ that is consistent with
$\beta_{in}$.
Then 
\[
P(\beta) = P^1(\beta \lceil N^1) \times P^2(\beta \lceil N^2).
\]
\end{lemma}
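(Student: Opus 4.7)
The plan is to reduce this lemma directly to the already-established Lemma~\ref{lem: acyclic-composition-2}, which gives
\[
P(\beta) = P^1(\beta \lceil N^1_{out}) \times P^2(\beta \lceil N^2_{out}).
\]
Given this, it suffices to show that
\[
P^1(\beta \lceil N^1) = P^1(\beta \lceil N^1_{out})
\qquad\text{and}\qquad
P^2(\beta \lceil N^2) = P^2(\beta \lceil N^2_{out}).
\]

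For each factor, I would appeal to Lemma~\ref{lem: ignoring-inputs} (Part 2), which says that once an input execution has been fixed, the probability of a finite trace is equal to the probability of its projection onto the output neurons. The first factor is probability under $P^1$, which is generated by ${\mathcal N}^1$ with input execution $\beta^1_{in} = \beta_{in} \lceil N^1_{in}$; by construction, $\beta \lceil N^1_{in}$ is a prefix of $\beta^1_{in}$, so $\beta \lceil N^1$ is a trace of ${\mathcal N}^1$ consistent with $\beta^1_{in}$, and Lemma~\ref{lem: ignoring-inputs} collapses its $P^1$-probability to that of its output-projection $\beta \lceil N^1_{out}$. The same argument applies to the second factor, using the input execution $\beta^2_{in}$ defined so that its prefix agrees with $\beta \lceil N^2_{in}$.

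There is no serious obstacle here; the only subtle point is making sure the input-execution conventions line up, in particular that $\beta^2_{in}$ is defined precisely so that its first $length(\beta)+1$ configurations match $\beta \lceil N^2_{in}$, which is what makes $\beta \lceil N^2$ a trace consistent with $\beta^2_{in}$ under $P^2$. Once this is noted, the two equalities above are immediate, and combining them with Lemma~\ref{lem: acyclic-composition-2} yields the desired identity.
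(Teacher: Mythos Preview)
Your proposal is correct and follows essentially the same approach as the paper: reduce to Lemma~\ref{lem: acyclic-composition-2} and then invoke Lemma~\ref{lem: ignoring-inputs} to pass from $\beta \lceil N^j$ to $\beta \lceil N^j_{out}$ for $j=1,2$. Your added verification that $\beta \lceil N^j$ is consistent with $\beta^j_{in}$ (needed to apply Lemma~\ref{lem: ignoring-inputs}) is exactly the detail the paper leaves implicit when it says ``the input traces are fixed.''
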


\begin{proof}
This follows from Lemma~\ref{lem: acyclic-composition-2} because
in each term on the right-hand-side of the equation in this lemma, the
probability depends on the output traces only---the input traces are fixed.  
Formally, this uses Lemma~\ref{lem: ignoring-inputs}.
\qed
\end{proof}

Finally, Lemma~\ref{lem: acyclic-composition} yields a kind of
compositionality theorem for acyclic composition:

\begin{theorem}
\label{th:  determine-Beh-acyclic}
$Beh({\mathcal N})$ is determined by $Beh({\mathcal N}^1)$ and
$Beh({\mathcal N}^2)$.
\end{theorem}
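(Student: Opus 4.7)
The plan is to unpack the definition of $Beh$ and reduce the theorem directly to Lemma~\ref{lem: acyclic-composition}. Recall that $Beh({\mathcal N})$ is the map sending each infinite input execution $\beta_{in}$ of ${\mathcal N}$ to the family $\{P(\beta)\}$, indexed by finite traces $\beta$ of ${\mathcal N}$ consistent with $\beta_{in}$. So it suffices to show that, for every such $\beta_{in}$ and every such $\beta$, the number $P(\beta)$ is computable from $Beh({\mathcal N}^1)$ and $Beh({\mathcal N}^2)$ alone (together with the knowledge of which neurons are inputs and outputs of each sub-network, which is part of the data of $Beh({\mathcal N}^j)$).

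First I would fix an arbitrary infinite input execution $\beta_{in}$ of ${\mathcal N}$ and an arbitrary finite trace $\beta$ of ${\mathcal N}$ consistent with $\beta_{in}$. From $\beta_{in}$ and $\beta$ I can read off the two input executions used in the hypotheses of Lemma~\ref{lem: acyclic-composition}: namely $\beta^1_{in} = \beta_{in} \lceil N^1_{in}$ (this makes sense because in the acyclic case $N^1_{in} \subseteq N_{in}$), and $\beta^2_{in}$ obtained by taking $\beta_{in} \lceil (N^2_{in} \cap N_{in})$ on the external-input coordinates of ${\mathcal N}^2$ and $\beta \lceil (N^2_{in} \cap N^1_{out})$ (padded with $0$s) on the internal-interface coordinates. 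Both constructions use only $\beta_{in}$, the trace $\beta$, and the input/output labelings of the sub-networks, so they are determined by the data present in $Beh({\mathcal N}^1)$ and $Beh({\mathcal N}^2)$.

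Next I would verify that the two projections appearing in Lemma~\ref{lem: acyclic-composition} are legitimate finite traces of the sub-networks consistent with the respective input executions: $\beta \lceil N^1$ is a sequence of firing patterns for $N^1_{ext}$ whose projection on $N^1_{in}$ is a prefix of $\beta^1_{in}$ (since $\beta \lceil N_{in}$ is a prefix of $\beta_{in}$), and $\beta \lceil N^2$ is a sequence of firing patterns for $N^2_{ext}$ whose projection on $N^2_{in}$ agrees with $\beta^2_{in}$ up through $length(\beta)$ by construction. Hence the numbers $P^1(\beta \lceil N^1)$ and $P^2(\beta \lceil N^2)$ are entries in the families $Beh({\mathcal N}^1)(\beta^1_{in})$ and $Beh({\mathcal N}^2)(\beta^2_{in})$, respectively, and so each is determined by the corresponding external-behavior object.

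Finally, applying Lemma~\ref{lem: acyclic-composition} gives $P(\beta) = P^1(\beta \lceil N^1) \times P^2(\beta \lceil N^2)$, which expresses $P(\beta)$ as a function of values read from $Beh({\mathcal N}^1)$ and $Beh({\mathcal N}^2)$. Since $\beta_{in}$ and $\beta$ were arbitrary, $Beh({\mathcal N})$ is fully determined. There is no serious obstacle — the one thing to be careful about is the bookkeeping around $\beta^2_{in}$: its definition depends on the particular trace $\beta$ being evaluated, but this is legitimate because the behavior notion $Beh({\mathcal N}^2)$ provides a probability distribution for \emph{every} infinite input execution, so we are free to pick the one we need for each $\beta$.
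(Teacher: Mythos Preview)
Your proposal is correct and follows exactly the approach the paper intends: the paper simply states that the theorem is yielded by Lemma~\ref{lem: acyclic-composition}, and your write-up makes explicit the bookkeeping (constructing $\beta^1_{in}$ and $\beta^2_{in}$, checking that $\beta \lceil N^j$ is a bona fide trace of ${\mathcal N}^j$ consistent with $\beta^j_{in}$, and noting that the dependence of $\beta^2_{in}$ on $\beta$ is harmless because $Beh({\mathcal N}^2)$ supplies a distribution for every input execution). There is nothing to correct.
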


We prove a more general compositionality result in Section~\ref{sec:
  general}.

\subsection{Examples}

We revisit our two examples of acyclic composition from
Sections~\ref{example: Boolean-circuits-2} and~\ref{example: WTA-2},
this time analyzing their behavior more precisely.

\subsubsection{Boolean circuits}
\label{example: Boolean-circuits-3}

Let ${\mathcal N}$ be the seven-neuron Boolean circuit from
  Section~\ref{example: Boolean-circuits-2}.
Express $\mathcal N$ as the composition ${\mathcal N}^1 \times
{\mathcal N}^2$, where
\begin{itemize}
  \item
${\mathcal N}^1$ is the network resulting from the first two
stages in the order of compositions described in Section~\ref{example:
  Boolean-circuits-2}.  This computes Nand and Or of the two inputs.
\item
${\mathcal N}^2$ is the final And network.
\end{itemize}

Fix $\beta_{in}$ to be any infinite input execution of $\mathcal N$
with stable inputs, and let $P$ be the probabilistic execution of
$\mathcal N$ for $\beta_{in}$.
In $P$, we should expect to have stable, correct outputs for a long
while starting from time $4$, because the depth of the entire network
is $4$.
Here we consider just the situation at precisely time $4$, that is, we
consider the probabilities $P(\beta)$ for finite traces $\beta$ of
length exactly $4$.
Specifically, we would like to use Lemma~\ref{lem:
  acyclic-composition-2} to help us
show that the probability of a correct Xor output at time $4$ is at
least $(1 - \delta)^5$.

We work compositionally.
In particular, we assume that, in the probabilistic execution of
${\mathcal N}^1$ for $\beta_{in}$, or any other stable input sequence,
the probability of correct (Nand,Or) outputs at time $3$ is at least
$(1 - \delta)^4$.
%
We also assume that, in the probabilistic execution of ${\mathcal N}^2$ 
on any input sequence, the probability that the output at time $4$ is
the And of its two inputs at time $3$ is at least $1 - \delta$.
We could prove these bounds for our two specific networks
${\mathcal N}^1$ and ${\mathcal N}^2$, but to emphasize the
compositional reasoning, we ignore the internal workings of the two
sub-networks and simply state the bounds here.  
We use these bounds to get our result about the composed network
$\mathcal N$.

So define $B$ to be the set of traces $\beta$ of $\mathcal N$ of
length $4$ such that $\beta$ gives a correct Xor output at time $4$,
as well as correct (Nand, Or) outputs at time $3$.
(These traces may differ in their firing states for the And neuron at
any time, and also in their firing states for the Not and Or neurons
at times other than those specified.)
We will argue that $P(B) \geq (1 - \delta)^5$, which implies our
desired result.

We have that $P(B) = \sum_{\beta \in B} P(\beta)$.
By Lemma~\ref{lem: acyclic-composition-2}, this is equal to
\[
\sum_{\beta \in B} P^1(\beta \lceil N^1_{out}) \times P^2(\beta \lceil N^2_{out}).
\]
Here, $P^1$ and $P^2$ are defined as in Section~\ref{sec: compos-acyclic}, 
based on $\beta^1_{in} = \beta_{in}$,
and for each particular $\beta$, based on $\beta^2_{in}$ equal to
$\beta \lceil N^2_{in}$, extended to an infinite sequence by adding $0$'s.
Note that the choice of input sequence $\beta^2_{in}$ for ${\mathcal
  N}^2$ is uniquely determined by $\beta \lceil N^1_{out}$.

We break this expression up into the double summation: 
\[
\sum_{\beta^1} 
(\sum_{\beta^2}  
P^1(\beta^1 \lceil N^1_{out}) \times P^2(\beta^2 \lceil N^2_{out}))
\]
Here, $\beta^1$ ranges over traces of ${\mathcal N}^1$ that are
consistent with $\beta_{in}$ and yield correct (Nand, Or) outputs at
time $3$.
And for each particular $\beta^1$, $\beta^2$ ranges over traces of
${\mathcal N}^2$ that are consistent with the input sequence
$\beta^2_{in}$ determined from $\beta^1 \lceil N^1_{out} = \beta
\lceil N^1_{out}$, and whose output at time $4$ is the Xor of its
inputs at time $3$.
This is equal to (collecting terms for each $\beta^1$):
\[
\sum_{\beta^1} 
P^1(\beta^1 \lceil N^1_{out}) \sum_{\beta^2} P^2(\beta^2 \lceil N^2_{out}).
\]

Now, for any particular $\beta^1$, we know that:
\[
\sum_{\beta^2} P^2(\beta^2 \lceil N^2_{out}) \geq (1 - \delta),
\]
by our assumptions about the behavior of ${\mathcal N}^2$. 
So the overall expression is at least
\[
\sum_{\beta^1} P^1(\beta^1 \lceil N^1_{out}) (1 - \delta)
=
(1 - \delta) \sum_{\beta^1} P^1(\beta^1 \lceil N^1_{out}).
\]
We also know that 
\[
\sum_{\beta^1} P^1(\beta^1 \lceil N^1_{out}) \geq (1 - \delta)^4,
\]
by our assumption about the behavior of ${\mathcal N}^1$. 
So the overall expression is at least
$(1 - \delta) (1 - \delta)^4 = (1 - \delta)^5$, as needed.


\subsubsection{Attention using WTA}
\label{example: WTA-3}

We consider the composition of the $WTA$ network and the $Filter$ network,
as described in Section~\ref{example: WTA-2}.
Now let ${\mathcal N}^1$ denote the $WTA$ network, ${\mathcal N}^2$
the $Filter$ network, and  $\mathcal N$ their composition.
We assume that the $WTA$ network satisfies Theorem~\ref{th: WTA}, with
particular values of $\delta$, $t_c$, $t_s$, $\gamma$, $c_1$ and $c_2$.
We assume that each And network within $Filter$ is correct at each time
with probability at least $1 - \delta'$.

Fix $\beta_{in}$ to be any infinite input execution of $\mathcal N$
with stable $x_i$ inputs such that at least one $x_i$ is firing.
The $w_i$ inputs are unconstrained.
Let $P$ be the probabilistic execution of $\mathcal N$ generated from
$\beta_{in}$.
We want to prove that, according to $P$, with probability at least 
$(1 - \delta) (1 - \delta')^{n t_s}$, there is some $t \leq t_c$
such that:
(a) the $y$ outputs stabilize by time $t$ to one steadily-firing
output $y_i$, which persists through time $t + t_s - 1$, and
(b) for this particular $i$, starting from time $t + 1$ and continuing
for a total of $t_s$ times, the $z_i$ outputs correctly mirror the
$w_i$ inputs at the previous time, and all the other $z$ neurons
do not fire.

Again, we work compositionally.
We assume that, in the probabilistic execution of the WTA network
${\mathcal N}^1$ on $\beta_{in} \lceil N_{in}$, the probability of
correct, stable outputs as in Theorem~\ref{th: WTA} is at least $1 - \delta$.
We also assume that, in the probabilistic execution of ${\mathcal
  N}^2$ on any input sequence, conditioned on any finite execution
prefix, the probability of correct mirroring of inputs for the
next $t$ times is at least $(1 - \delta')^{n t_s}$.
These assumptions could be proved for our two networks, but we simply
assume them here.

Now define $B$ to be the set of traces $\beta$ of ${\mathcal N}$ of
length $t_c + t_s - 1$ such that all the desired conditions hold in
$\beta$, that is, there is some $t \leq t_c$ such that in $\beta$,
(a) the $y$ outputs stabilize by time $t$ to one steadily-firing
output $y_i$, which persists through time $t + t_s - 1$, and
(b) for this particular $i$, starting from time $t + 1$ and continuing
for a total of $t_s$ times, the $z_i$ outputs correctly mirror the
$w_i$ inputs at the previous time, and all the other $z$ neurons
do not fire.
We will argue that $P(B) \geq (1 - \delta) (1 - \delta')^{n t_s}$.
We follow the same pattern as in the Boolean circuit network example
in Section~\ref{example: Boolean-circuits-3}.

We have that $P(B) = \sum_{\beta \in B} P(\beta)$.
By Lemma~\ref{lem: acyclic-composition-2}, this is equal to
\[
\sum_{\beta \in B} P^1(\beta \lceil N^1_{out}) \times P^2(\beta \lceil N^2_{out}).
\]
Here, $P^1$ and $P^2$ are defined as in Section~\ref{sec: compos-acyclic}, 
based on $\beta^1_{in} = \beta_{in} \lceil N^1_{in}$ 
and for each particular $\beta$, based on $\beta^2_{in}$ equal to
$\beta \lceil N^2_{in}$, extended to an infinite sequence by adding $0$'s.
Note that $\beta^2_{in}$ is uniquely determined by 
$\beta \lceil (N_{in} \cup N^1_{out})$.

This expression is equal to:
\[
\sum_{\beta^1} 
(\sum_{\beta^2}  
P^1(\beta^1 \lceil N^1_{out}) \times P^2(\beta^2 \lceil N^2_{out})).
\]
Here, $\beta^1$ ranges over traces of ${\mathcal N}^1$ that are
consistent with $\beta_{in}$ and for which there is some $t \leq t_c$
such that in $\beta^1$, the $y$ outputs stabilize by time $t$ to one
steadily-firing output $y_i$, which persists through time $t + t_s -
1$.
And for each particular $\beta^1$, $\beta^2$ ranges over traces of
${\mathcal N}^2$ that are consistent with the input sequence
$\beta^2_{in}$ determined from $\beta_{in}$ and 
$\beta^1 \lceil N^1_{out} = \beta \lceil N^1_{out}$, and that satisfy
the following correctness condition for ${\mathcal N}^2$:  for the first
$t$ and associated $i$ that witness the correctness condition for $\beta^1$,
at times $t+1,\ldots,t + t_s$, the $z_i$ outputs correctly mirror the
$w_i$ inputs at the previous time, and all the other $z$ neurons
do not fire.

This is equal to (collecting terms for each $\beta^1$):
\[
\sum_{\beta^1} 
P^1(\beta^1 \lceil N^1_{out}) \sum_{\beta^2} P^2(\beta^2 \lceil N^2_{out}).
\]

Now, for any particular $\beta^1$, we know that:
\[
\sum_{\beta^2} P^2(\beta^2 \lceil N^2_{out}) \geq (1 - \delta')^{n t_s},
\]
by our assumptions about the behavior of ${\mathcal N}^2$. 
So the overall expression is at least
\[
\sum_{\beta^1} P^1(\beta^1 \lceil N^1_{out}) (1 - \delta')^{n t_s}
=
(1 - \delta')^{n t_s} \sum_{\beta^1} P^1(\beta^1 \lceil N^1_{out}).
\]
We also know that 
\[
\sum_{\beta^1} P^1(\beta^1 \lceil N^1_{out}) \geq (1 - \delta),
\]
by our assumption about the behavior of ${\mathcal N}^1$. 
So the overall expression is at least
$(1 - \delta) (1 - \delta')^{n t_s}$, as needed.

\section{Theorems for General Composition}
\label{sec: general}

For general composition, the simple approach  in Section~\ref{sec:
  acyclic} does not work.
There, we were able to prove results such as Lemma~\ref{lem:
  acyclic-composition-1},
which decompose the behavior of the entire network $\mathcal N$ in
terms of the behavior of the two sub-networks ${\mathcal N}^1$ and
${\mathcal N}^2$.
This worked because the dependencies between the behaviors go only one
way, from ${\mathcal N}^1$ to ${\mathcal N}^2$.
In the general case, the dependencies go both ways, potentially leading to
circularities.

Fortunately, since we are working in a synchronous model, we
can break the circularities in another way, using discrete time.
Namely, the behavior of each sub-network at time $t$ depends only on
the behavior of the other network at times up to $t-1$.
We exploit this limitation on dependencies to prove decomposition
lemmas such as Lemma~\ref{lem: independence-executions}, leading to
our main compositionality theorem, Theorem~\ref{th: beh-compositional}.

For this section, fix 
$\mathcal N = {\mathcal N}^1 \times {\mathcal N}^2$.
We continue to avoid writing the cone notation $A()$.

\subsection{Composition results for executions and traces}

For this subsection and the following, fix a particular input
execution $\beta_{in}$ for ${\mathcal N}$, which yields a particular
probabilistic execution $P$.
The main result of this subsection is Lemma~\ref{lem:
  independence-executions}.
It says that the probability of a certain finite execution $\alpha$ of the
entire network $\mathcal N$, conditioned on its trace $\beta$, is
simply the product of the probabilities of the two projections of
$\alpha$ on the two sub-networks, each conditioned on its projected trace.
In other words, once we fix all the external behavior of the network,
including the part of the behavior involved in interaction between the
two sub-networks, the internal states of the neurons within the two
sub-networks are determined independently.  
We begin with a straightforward lemma that treats the two sub-networks
asymmetrically.


\hide{This statement seems quite plausible, but the proof requires some
care.  We give two auxiliary technical lemmas. 
The first lemma fixes a length-$t$ execution for all the external
neurons, with the exception of the output neurons of ${\mathcal N}^1$.
It then says that the internal and output behavior of ${\mathcal N}^1$
for times $\leq t$ is independent of the internal behavior of
${\mathcal N}^2$ through time $t$.
%

\begin{lemma}
\label{lem: claim1}
Let $\beta$ be an $(N_{ext} - N^1_{out})$-execution of $\mathcal N$ of
length $t \geq 0$.
Let $\delta$ and $\delta'$ be two $N^2_{int}$-executions of length $t$.
Let $\gamma$ be an $N^1_{lc}$-execution of length at most $t$.
Then
\[
P(\gamma | \delta, \beta) =
P(\gamma | \delta', \beta).
\]
\end{lemma}

\begin{proof}
By induction on the length of $\gamma$, for fixed $\beta$, $\delta$,
and $\delta'$.
\qed
\end{proof}

The next lemma extends Lemma~\ref{lem: claim1} to condition on outputs of
${\mathcal N}^1$ as well as the other external neurons.

\begin{lemma}
\label{lem: claim2}
Let $\beta$ be an $N_{ext}$-execution of $\mathcal N$ of
length $t \geq 0$.
Let $\delta$ and $\delta'$ be two $N^2_{int}$-executions of length $t$.
Let $\gamma$ be an $N^1_{int}$-execution of length at most $t$.
Then
\[
P(\gamma | \delta, \beta) =
P(\gamma | \delta', \beta).
\]
\end{lemma}

\begin{proof}
Fix $\beta$, $\delta$, $\delta'$, and $\gamma$.
Let $\beta_1 = \beta \lceil N^1_{out}$  and 
$\beta_2 = \beta \lceil (N_{ext} - N^1_{out})$.
Then
\[
P(\gamma | \delta, \beta) = P(\gamma | \delta, \beta_1, \beta_2),
\]
which is equal to
\[
\frac{P(\gamma \cap \beta_1 | \delta, \beta_2)}
{P(\beta_1 | \delta, \beta_2)}.
\]
By Lemma~\ref{lem: claim1} applied twice, this last expression is equal to
\[
\frac{P(\gamma \cap \beta_1 | \delta', \beta_2)}
{P(\beta_1 | \delta', \beta_2)},
\]
which is equal to
\[
P(\gamma | \delta', \beta_1, \beta_2) = 
P(\gamma | \delta', \beta), 
\]
as needed.
\qed
\end{proof}

And now for the main lemma of the subsection.  It says that the
probability of a certain execution $\alpha$ of the entire network,
conditioned on its trace $\beta$, is simply the product of the
probabilities of the projections of
$\alpha$ on the two sub-networks, each conditioned on its projected trace.
}

\begin{lemma}
  \label{lem: independence-executions-1}
Let $\alpha$ be a finite execution of $\mathcal N$ that is consistent
with $\beta_{in}$, and let $\beta = trace(\alpha)$.
Then 
\[
P(\alpha | \beta) = 
P((\alpha \lceil N^1_{int}) | \beta) \times
P((\alpha \lceil N^2_{int}) | (\alpha \lceil N^1_{int}), \beta).
\]
\end{lemma}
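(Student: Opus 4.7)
The plan is to derive the identity as a direct application of the chain rule for conditional probability, using the fact that the full set of neurons $N$ decomposes as a disjoint union whose parts are exactly $N_{ext}$, $N^1_{int}$, and $N^2_{int}$.

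First I would invoke the compatibility conditions for $\mathcal{N}^1$ and $\mathcal{N}^2$: no internal neuron of either sub-network appears in the other, so $N^1_{int}$ and $N^2_{int}$ are disjoint, and moreover each is disjoint from $N_{ext}$ (an internal neuron of $\mathcal{N}^j$ is neither an input nor an output of $\mathcal{N}$, since it is not even a neuron of the other sub-network and its role is unchanged by composition). Combined with the way composition classifies neurons, this yields the partition $N = N_{ext} \sqcup N^1_{int} \sqcup N^2_{int}$.

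Next I would use this partition to rewrite the event $A(\alpha)$. Because an infinite execution extends $\alpha$ iff its projection on each block of the partition extends the corresponding projection of $\alpha$, we have
\[
A(\alpha) \;=\; A(\beta) \,\cap\, A(\alpha \lceil N^1_{int}) \,\cap\, A(\alpha \lceil N^2_{int}).
\]
Conditioning on $A(\beta)$, the first factor on the right is absorbed, so
\[
P(A(\alpha) \mid A(\beta)) \;=\; P\bigl(A(\alpha \lceil N^1_{int}) \,\cap\, A(\alpha \lceil N^2_{int}) \,\bigm|\, A(\beta)\bigr).
\]
Now I would apply the elementary chain rule $P(X \cap Y \mid Z) = P(X \mid Z) \cdot P(Y \mid X \cap Z)$ with $X = A(\alpha \lceil N^1_{int})$, $Y = A(\alpha \lceil N^2_{int})$, and $Z = A(\beta)$, which produces exactly the right-hand side of the lemma (reading the comma in the condition as joint conditioning, i.e.\ intersection of events). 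The chain rule is valid because all conditioning events have strictly positive probability, a consequence of the sigmoid firing rule noted in Section~\ref{sec: prob-execs}.

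There is no substantive obstacle here; the content of the lemma is really just the observation that $\alpha$ is fully reconstructed from $(\beta,\alpha \lceil N^1_{int}, \alpha \lceil N^2_{int})$, plus the chain rule. The only care needed is the bookkeeping to confirm the disjointness of the three blocks of the partition via the compatibility definition, so that the intersection of cones on the three projections equals $A(\alpha)$ and no shared neuron causes double-counting.
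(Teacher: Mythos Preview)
Your proposal is correct and matches the paper's approach exactly: the paper's entire proof is the phrase ``Standard conditional probability,'' and what you have written is a careful unpacking of precisely that. Your additional bookkeeping on the partition $N = N_{ext} \sqcup N^1_{int} \sqcup N^2_{int}$ and the intersection-of-cones identity is accurate and makes explicit what the paper leaves implicit.
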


\begin{proof}
Standard conditional probability.
\qed
\end{proof}

And now we remove the asymmetry, by identifying the portions of
$\beta$ on which the internal behavior of the two sub-networks
actually depends.

\begin{lemma}
\label{lem: independence-executions}
Let $\alpha$ be a finite execution of $\mathcal N$ that is consistent
with $\beta_{in}$, and let $\beta = trace(\alpha)$.
Then 
\[
P(\alpha | \beta) = 
P((\alpha \lceil N^1) | (\beta \lceil N^1)) \times
P((\alpha \lceil N^2) | (\beta \lceil N^2)).
\]
\end{lemma}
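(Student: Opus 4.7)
The plan is to start from Lemma~\ref{lem: independence-executions-1}, which already gives the asymmetric factorization
$P(\alpha | \beta) = P((\alpha \lceil N^1_{int}) | \beta) \times P((\alpha \lceil N^2_{int}) | (\alpha \lceil N^1_{int}), \beta)$,
and then show that the two factors on the right coincide with $P((\alpha \lceil N^1) | (\beta \lceil N^1))$ and $P((\alpha \lceil N^2) | (\beta \lceil N^2))$, respectively.

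For the first factor, I would first observe that $N^1 \cap N_{ext} = N^1_{ext}$: a neuron of $N^1$ that appears externally in $\mathcal N$ must be in $N^1_{ext}$, because the compatibility requirements prevent any $N^1_{int}$ neuron from appearing in $N^2$, so $N^1_{int}$ neurons remain internal in $\mathcal N$. Hence $\beta \lceil N^1 = \alpha \lceil N^1_{ext}$, and under the conditioning on $\beta$, the events ``$\alpha' \lceil N^1_{int}$ matches $\alpha \lceil N^1_{int}$'' and ``$\alpha' \lceil N^1$ matches $\alpha \lceil N^1$'' coincide, giving $P((\alpha \lceil N^1_{int}) | \beta) = P((\alpha \lceil N^1) | \beta)$. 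The second and key step is the conditional independence $P((\alpha \lceil N^1) | \beta) = P((\alpha \lceil N^1) | (\beta \lceil N^1))$. This follows from the product structure of the recursive probability definition in Section~\ref{sec: prob-execs}: the firing probability of any $u \in N^1_{lc}$ at time $t$ depends only on the firing states at time $t-1$ of neurons $v$ with $(v,u) \in E$, and by the construction of composition all such $v$ lie in $N^1$. So the numerator $P(\alpha \lceil N^1, \beta)$ factors as an $N^1$-term times a term that depends only on $\beta \lceil (N^2 \cap N_{ext})$, and similarly for the denominator $P(\beta)$, so the ratio collapses to $P(\alpha \lceil N^1 \mid \beta \lceil N^1)$.

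For the second factor, I would run the symmetric argument on $N^2$, but with the extra conditioning on $\alpha \lceil N^1_{int}$. Since $N^1_{int}$ is disjoint from $N^2$ by compatibility, $\alpha \lceil N^1_{int}$ does not appear in the product-of-firing-probabilities formula for any neuron in $N^2_{lc}$. Thus the same factorization argument gives
$P((\alpha \lceil N^2_{int}) | (\alpha \lceil N^1_{int}), \beta) = P((\alpha \lceil N^2) | (\beta \lceil N^2))$,
and multiplying the two identities gives the statement of the lemma.

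The main obstacle is the conditional independence step itself: rigorously turning the intuitive ``$N^1$-probabilities depend only on $N^1$'' into a statement about the cones $A(\cdot)$ in the sample space of full executions of $\mathcal N$, given that all these probabilities are computed in the composed network rather than in a sub-network. The cleanest route is the inductive one sketched by the hidden Claim~1 / Claim~2 in the excerpt: prove by induction on $t$ that, for any $N^1_{lc}$-execution $\gamma$ of length $\le t$ and any length-$t$ external configuration $\beta$, the conditional probability $P(\gamma \mid \beta)$ does not change if we replace the $N^2_{int}$- and $N^2_{ext}$-parts of $\beta$ by any other values consistent with $\beta \lceil N^1$. The inductive step uses the one-step expansion from Lemma~\ref{lem: incremental-traces} together with the fact that incoming edges to $N^1_{lc}$-neurons originate only from $N^1$.
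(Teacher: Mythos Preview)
Your proposal is correct and follows essentially the same route as the paper: start from Lemma~\ref{lem: independence-executions-1}, then identify each of the two factors with the corresponding projected conditional probability via (i) the observation that $\beta$ already fixes $N^j_{ext}$ so that conditioning on $\alpha\lceil N^j_{int}$ is the same as on $\alpha\lceil N^j$, and (ii) a locality argument that only the $N^j$-part of $\beta$ matters. The paper's proof is terser on the locality step (it just says ``because of locality''), whereas you spell out the product-structure reasoning and the inductive alternative; both are acceptable justifications of the same step.
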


\begin{proof}
Lemma~\ref{lem: independence-executions-1} says that
\[
P(\alpha | \beta) = 
P((\alpha \lceil N^1_{int}) | \beta) \times
P((\alpha \lceil N^2_{int}) | (\alpha \lceil N^1_{int}), \beta).
\]
It suffices to show both of the following:
\begin{enumerate}
\item
$P((\alpha \lceil N^1_{int}) | \beta) = P((\alpha \lceil N^1) | (\beta
  \lceil N^1))$.

For this, note that 
\[
P((\alpha \lceil N^1_{int}) | \beta) = P((\alpha \lceil N^1) | \beta),
\]
because $\beta$ already includes the firing patterns for all the
neurons in $N^1 - N^1_{int} = N^1_{ext}$.
And
\[
P((\alpha \lceil N^1) | \beta) =  
P((\alpha \lceil N^1) | (\beta \lceil N^1)),
\]
because the firing behavior of neurons in $N^1$ is independent
of the behavor of the neurons in $N - N^1$, conditioned on $\beta$.
Putting these two facts together yields the needed equality.

\item
$P((\alpha \lceil N^2_{int}) | (\alpha \lceil N^1_{int}), \beta) =
P((\alpha \lceil N^2) | (\beta \lceil N^2))$.

For this, note that
\[
P((\alpha \lceil N^2_{int}) | (\alpha \lceil N^1_{int}), \beta) =
P((\alpha \lceil N^2) | (\alpha \lceil N^1_{int}), \beta),
\]
because $\beta$ already includes the firing patterns for all the neurons in
$N^2 - N^2_{int} = N^2_{ext}$.
And
\[
P((\alpha \lceil N^2) | (\alpha \lceil N^1_{int}), \beta) =
P((\alpha \lceil N^2) | \beta),
\]
because the firing behavior of neurons in $N^2$ is independent of the
behavior of the neurons in $N^1_{int}$, conditioned on $\beta$.
Finally,
\[
P((\alpha \lceil N^2) | \beta) = 
P((\alpha \lceil N^2) | (\beta \lceil N^2)),
\]
because of locality---the neurons in $N^2$ are the only ones
that $\alpha \lceil N^2$ depends on.
Putting these three facts together yields the needed equality.
\end{enumerate}
\qed
\end{proof}

\subsection{Composition results for one-step extensions}
\label{sec: compos-one-step}

In this subsection, we describe how to break circularities in
dependencies using discrete time, as a key step toward our general
compositionality result.
In particular, we prove two lemmas showing how one-step extensions of
executions and traces of $\mathcal N$ can be expressed in terms of
one-step extensions of executions and traces of ${\mathcal N}^1$ and
${\mathcal N}^2$.

Our first lemma is about extending a finite execution, either to a particular
longer execution, or just to any execution with a particular longer
trace.

\begin{lemma}
\label{lem: independence-conditional-executions}
\begin{enumerate}
\item
Let $\alpha$ be a finite execution of $\mathcal N$ of length $>0$ that
is consistent with $\beta_{in}$.
Let $\alpha'$ be the one-step prefix of $\alpha$.
Then:
\[
P(\alpha | \alpha') =  
P((\alpha \lceil N^1_{lc}) | (\alpha' \lceil N^1)) \times  
P((\alpha \lceil N^2_{lc}) | (\alpha' \lceil N^2)).
\]
\item
Let $\beta$ be a finite trace of $\mathcal N$ of length $> 0$ that is
consistent with $\beta_{in}$.
Let $\alpha'$ be a finite execution of $\mathcal N$ such that
$trace(\alpha')$ is the one-step prefix of $\beta$.
Then:
\[
P(\beta | \alpha') =  
P((\beta \lceil N^1_{out}) | (\alpha' \lceil N^1)) \times 
P((\beta \lceil N^2_{out}) | (\alpha' \lceil N^2)).
\]
\end{enumerate}
\end{lemma}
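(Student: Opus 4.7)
The plan is to prove both parts by unwinding the explicit one-step formula from the probabilistic execution definition and then exploiting the compatibility conditions to separate contributions from $\mathcal{N}^1$ and $\mathcal{N}^2$. For Part 1, I would start from the recursive definition: $P(A(\alpha) \mid A(\alpha'))$ equals the product, over all $u \in N_{lc}$, of $p_u$ (if $u$ fires in the last configuration of $\alpha$) or $1-p_u$ (otherwise), where $p_u$ is determined by the sigmoid applied to the potential at $u$ computed from the last configuration of $\alpha'$. The compatibility conditions imply that $N_{lc} = N^1_{lc} \sqcup N^2_{lc}$, so the product immediately splits.

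The key locality observation is that for every $u \in N^i_{lc}$, all incoming edges of $u$ in $\mathcal{N}$ come from $N^i$. This is because $u$ belongs to both sub-networks only when it is an input of one and an output of the other (by compatibility), and its incoming edges are inherited entirely from the sub-network where it is not an input (since input neurons have no incoming edges). Consequently $p_u$ depends only on the last configuration of $\alpha' \lceil N^i$, so the $i$-th factor of the split product is precisely the expression the one-step formula would produce for $P(A(\alpha \lceil N^i_{lc}) \mid A(\alpha' \lceil N^i))$ inside $\mathcal{N}^i$. This gives the claimed factorization.

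For Part 2, I would apply Part 1 and sum over all executions $\alpha$ of $\mathcal{N}$ that extend $\alpha'$ and satisfy $trace(\alpha) = \beta$. Such $\alpha$ are parameterized by free choices of the firing states of $N^1_{int}$ and $N^2_{int}$ at time $t$; the other entries at time $t$ are already fixed by $\beta_{in}$ and $\beta$. Because $N^1_{int}$ and $N^2_{int}$ are disjoint, and because (by Part 1) the $N^1$-factor depends only on the $N^1_{int}$ choice and the $N^2$-factor only on the $N^2_{int}$ choice, the double sum factors into a product of two single sums. Each single sum is exactly the marginal over internal extensions that yields $P(A(\beta \lceil N^i_{out}) \mid A(\alpha' \lceil N^i))$, giving the desired identity.

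The main obstacle is the locality argument justifying that each $p_u$ for $u \in N^i_{lc}$ is a function only of $\alpha' \lceil N^i$; this needs the compatibility conditions and the prohibition on incoming edges to input neurons to rule out an edge into $u \in N^i_{lc}$ originating in $N^{3-i} \setminus N^i$. Once locality is carefully verified, both parts reduce to routine rearrangement of products and sums over disjoint index sets, essentially mirroring the argument that was used (in the asymmetric form) in Lemma~\ref{lem: independence-executions-1} and Lemma~\ref{lem: independence-executions}.
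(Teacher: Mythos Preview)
Your proof of Part~1 is correct and matches the paper's approach: the paper simply states that the firing states of all neurons in $N_{lc}=N^1_{lc}\cup N^2_{lc}$ are determined independently (so the product over $N_{lc}$ splits) and that the states of neurons in $N^i_{lc}$ depend only on the previous states of neurons in $N^i$ (your locality argument, which you spell out more carefully than the paper does).

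For Part~2 your argument is correct but takes a different route from the paper. The paper proves Part~2 \emph{directly}, in exact parallel to Part~1: it observes that $N_{out}=N^1_{out}\cup N^2_{out}$, that the output neurons fire independently given $\alpha'$, and then applies the same locality observation. You instead \emph{derive} Part~2 from Part~1 by summing $P(\alpha\mid\alpha')$ over all one-step extensions $\alpha$ of $\alpha'$ with $trace(\alpha)=\beta$, and then factoring the double sum over $(N^1_{int},N^2_{int})$-choices into a product of marginals. Both arguments are short; the paper's is more symmetric (Parts~1 and~2 have identical structure), while yours has the minor conceptual advantage of making Part~2 a formal corollary of Part~1 rather than a repetition of its proof.
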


\begin{proof}
\begin{enumerate}
\item
The non-input neurons of ${\mathcal N}$ are those in $N_{lc} =
N^1_{lc} \cup N^2_{lc}$.
The firing states of all of these neurons in the final configuration of
$\alpha$ are determined independently.
Thus, we have
\[
P(\alpha | \alpha') =  
P((\alpha \lceil N^1_{lc}) | \alpha') \times  
P((\alpha \lceil N^2_{lc}) | \alpha').
\]
Furthermore, the final firing states for the neurons in $N^1_{lc}$
depend only on the immediately previous states of the neurons in
$N^1$, and similarly for $N^2_{lc}$ and $N^2$, so this last expression
is equal to 
\[
P((\alpha \lceil N^1_{lc}) | (\alpha' \lceil N^1)) \times  
P((\alpha \lceil N^2_{lc}) | (\alpha' \lceil N^2)),
\]
as needed.

\item
The output neurons of ${\mathcal N}$ are those in $N_{out} = N^1_{out}
  \cup N^2_{out}$.
The firing states of all of these neurons in the final configuration of
$\beta$ are determined independently.
Thus, we have
\[
P(\beta | \alpha') =  
P((\beta \lceil N^1_{out}) | \alpha') \times 
P((\beta \lceil N^2_{out}) | \alpha').
\]
Furthermore, the final firing states for the neurons in $N^1_{out}$
depend only on the immediately previous states of the neurons in
$N^1$, and similarly for $N^2_{out}$ and $N^2$, so this last
expression is equal to 
\[
P((\beta \lceil N^1_{out}) | (\alpha' \lceil N^1)) \times 
P((\beta \lceil N^2_{out}) | (\alpha' \lceil N^2)),
\]
as needed.
\end{enumerate}
\qed
\end{proof}

The second lemma is about extending a finite trace, either to an execution or
to a longer trace.  This is a bit more difficult because we are
conditioning only on traces, which do not include the internal
behavior of the two sub-networks.

\begin{lemma}
\label{lem: independence-conditional-traces}
\begin{enumerate}
\item
Let $\alpha$ be a finite execution of $\mathcal N$ of length $>0$ that
is consistent with $\beta_{in}$.
Let $\beta'$ be the one-step prefix of $trace(\alpha)$.
Then:
\[
P(\alpha | \beta') =  
P((\alpha \lceil N^1_{lc}) | (\beta' \lceil N^1)) \times  
P((\alpha \lceil N^2_{lc}) | (\beta' \lceil N^2)).
\]
\item
Let $\beta$ be a finite trace of $\mathcal N$ of length $>0$ that is
consistent with $\beta_{in}$.
Let $\beta'$ be the one-step prefix of $\beta$.
Then:
\[
P(\beta | \beta') =  
P((\beta \lceil N^1_{out}) | (\beta' \lceil N^1)) \times  
P((\beta \lceil N^2_{out}) | (\beta' \lceil N^2)).
\]
\end{enumerate}
\end{lemma}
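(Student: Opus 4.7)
The plan is to derive both parts from Lemma~\ref{lem: independence-conditional-executions}, bridged by Lemma~\ref{lem: independence-executions} and Lemma~\ref{lem: claim4}. Part~1 is obtained by factoring along the chain $\beta' \subseteq \alpha' \subseteq \alpha$, and part~2 then follows from part~1 by summing out internal degrees of freedom, exploiting the disjointness of $N^1_{int}$ and $N^2_{int}$ guaranteed by compatibility.

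For part~1, I would first observe that since $\alpha$ has a unique one-step prefix $\alpha'$ and $trace(\alpha') = \beta'$, standard conditional probability gives $P(\alpha \mid \beta') = P(\alpha \mid \alpha') \cdot P(\alpha' \mid \beta')$. Applying Lemma~\ref{lem: independence-conditional-executions}(1) to the first factor yields $P(\alpha \mid \alpha') = P((\alpha \lceil N^1_{lc}) \mid (\alpha' \lceil N^1)) \cdot P((\alpha \lceil N^2_{lc}) \mid (\alpha' \lceil N^2))$, and applying Lemma~\ref{lem: independence-executions} at $\alpha'$ (whose trace is $\beta'$) yields $P(\alpha' \mid \beta') = P((\alpha' \lceil N^1) \mid (\beta' \lceil N^1)) \cdot P((\alpha' \lceil N^2) \mid (\beta' \lceil N^2))$. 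Regrouping the four resulting factors by sub-network index $j$ and invoking Lemma~\ref{lem: claim4}, namely
\begin{equation*}
P((\alpha \lceil N^j_{lc}) \mid (\alpha' \lceil N^j)) \cdot P((\alpha' \lceil N^j) \mid (\beta' \lceil N^j)) = P((\alpha \lceil N^j_{lc}) \mid (\beta' \lceil N^j)),
\end{equation*}
collapses each pair into a single conditional probability and yields the stated product.

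For part~2, I would use the disjoint-union identity $A(\beta) = \bigsqcup_{\alpha: trace(\alpha) = \beta} A(\alpha)$ (cf.\ Lemma~\ref{lem: disjoint-or-subset}) to write $P(\beta \mid \beta') = \sum_{\alpha} P(\alpha \mid \beta')$ and substitute part~1 into each summand. As $\alpha$ ranges over length-$t$ executions with trace $\beta$, the projection $\alpha \lceil N^j_{lc}$ is determined by the fixed value $\beta \lceil N^j_{out}$ together with the free part $\alpha \lceil N^j_{int}$; because $N^1_{int}$ and $N^2_{int}$ are disjoint by compatibility, these two free parts range independently and the sum splits as a product of two sums. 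Each inner sum then collapses, by the same disjoint-union identity applied within a single sub-network's local configurations, to $P((\beta \lceil N^j_{out}) \mid (\beta' \lceil N^j))$, giving the desired factorization.

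The main obstacle is the sum-factoring step in part~2: one must verify carefully that, with the trace $\beta$ fixed, the pair $(\alpha \lceil N^1_{lc}, \alpha \lceil N^2_{lc})$ genuinely varies independently over the product of the two sets of $N^j_{lc}$-executions whose $N^j_{out}$-parts agree with $\beta$. This depends on the compatibility axioms, which ensure that the only non-trace freedom is concentrated in the disjoint sets $N^1_{int}$ and $N^2_{int}$; once that independence is in hand, everything else is bookkeeping with identities already established in the preceding lemmas.
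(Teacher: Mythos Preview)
Your proposal is correct and follows essentially the same approach as the paper's proof: part~1 via the chain $\beta' \to \alpha' \to \alpha$ using Lemma~\ref{lem: claim3} (your ``standard conditional probability''), Lemma~\ref{lem: independence-conditional-executions}, Lemma~\ref{lem: independence-executions}, and Lemma~\ref{lem: claim4}; part~2 by summing part~1 over executions with trace $\beta$ and splitting the sum as a product via the disjointness of $N^1_{int}$ and $N^2_{int}$. The only cosmetic difference is that the paper names the intermediate sets $B$, $B^1$, $B^2$ explicitly, but the substance is identical.
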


\begin{proof}
\begin{enumerate}
\item
Fix $\alpha$ and $\beta'$ as described.  
Let $\alpha'$ be the one-step prefix of $\alpha$.
By Lemma~\ref{lem: claim3}, we have: 
\[
P(\alpha | \beta') = P(\alpha | \alpha') \times P(\alpha' | \beta').
\]
Lemma~\ref{lem: independence-conditional-executions} implies that
\[
P(\alpha | \alpha') = 
P((\alpha \lceil N^1_{lc}) | (\alpha' \lceil N^1)) \times 
P((\alpha \lceil N^2_{lc}) | (\alpha' \lceil N^2)).
\]
Lemma~\ref{lem: independence-executions} implies that
\[
P(\alpha' | \beta') =
P((\alpha' \lceil N^1) | (\beta' \lceil N^1)) \times 
P((\alpha' \lceil N^2) | (\beta' \lceil N^2)).
\]
Substituting, we get that:
\[
P(\alpha | \beta') = 
P((\alpha \lceil N^1_{lc}) | (\alpha' \lceil N^1)) \times 
P((\alpha \lceil N^2_{lc}) | (\alpha' \lceil N^2)) \times
P((\alpha' \lceil N^1) | (\beta' \lceil N^1)) \times 
P((\alpha' \lceil N^2) | (\beta' \lceil N^2)).
\]
Rearranging terms and using Lemma~\ref{lem: claim4}, Part 3, we
see that the right-hand side is equal to
\[
P((\alpha \lceil N^1_{lc}) | (\beta' \lceil N^1)) \times 
P((\alpha \lceil N^2_{lc}) | (\beta' \lceil N^2)),
\]
as needed.

\item
Fix $\beta$ and $\beta'$ as described. 
Let $B$ denote the set of executions $\alpha$ of $\mathcal N$ such that 
$trace(\alpha) = \beta$, i.e., such that $\alpha \lceil N_{ext} = \beta$.
Note that what varies among the different executions in $B$ is just
the firing patterns of the neurons in $N_{int} = N^1_{int} \cup N^2_{int}$.
Then $P(\beta | \beta')$ can be expanded as 
\[
\sum_{\alpha \in B} P(\alpha | \beta').
\]
By Part 1, this is equal to
\[
\sum_{\alpha \in B} 
(P((\alpha \lceil N^1_{lc}) | (\beta' \lceil N^1)) \times 
 P((\alpha \lceil N^2_{lc}) | (\beta' \lceil N^2)).
\]

Now define $B^1$ to be the set of executions $\alpha^1$ of ${\mathcal N}^1$ 
such that $trace(\alpha^1) = \beta \lceil N^1$.
Note that all that varies among these $\alpha^1$ is the firing
patterns of the neurons in $N^1_{int}$.  Analogously, define
$B^2$ to be the set of executions $\alpha^2$ of ${\mathcal N}^2$ 
such that $trace(\alpha^2) = \beta \lceil N^2$.
All that varies among these $\alpha^2$ is the firing
patterns of the neurons in $N^2_{int}$. 

Now we project the $B$ executions onto $N^1$ and $N^2$, and we
get that the expression above is equal to:
\[
\sum_{\alpha^1 \in B^1, \alpha^2 \in B^2}
(P(\alpha^1 | (\beta' \lceil N^1)) \times P(\alpha^2 | (\beta' \lceil N^2))).
\]
This sum can be split into the product of sums:
\[
\sum_{\alpha^1 \in B^1} P(\alpha^1 | (\beta' \lceil N^1)) 
\times
\sum_{\alpha^2 \in B^2} P(\alpha^2 | (\beta' \lceil N^2)).
\]
This is, in turn, equal to
\[
P((\beta \lceil N^1_{out}) | (\beta' \lceil N^1)) \times 
P((\beta \lceil N^2_{out}) | (\beta' \lceil N^2)),
\]
as needed.
\end{enumerate}
\qed
\end{proof}

\subsection{Compositionality}
\label{sec: comp}

Finally we are ready to prove that our behavior notion $Beh$ is
compositional.
In view of Theorem~\ref{th: composition-beh2}, it suffices to show
that our auxiliary behavior notion $Beh_2$ is compositional.
And in view of Lemma~\ref{lem: compositionality-characterization}, it
suffices to show that  $Beh_2({\mathcal N})$ is uniquely determined by
$Beh_2({\mathcal N}^1)$ and $Beh_2({\mathcal N}^2)$, which we do in
Lemma~\ref{lem: determine-Beh2}.
To accomplish this, we show (in Lemma~\ref{lem:
  independence-conditional-traces-1})
how to express $Beh_2({\mathcal N})$ in
terms of $Beh_2({\mathcal N}^1)$ and $Beh_2({\mathcal N}^2)$.

Recall that the definition of $Beh_2({\mathcal N})$ specifies, for
each infinite input execution $\beta_{in}$ of $\mathcal N$, a collection of
conditional probabilities, one for each finite trace $\beta$ of
$\mathcal N$ of length $>0$ that is consistent with $\beta_{in}$.
Fix any such input execution, $\beta_{in}$, which generates a
particular probabilistic execution $P$ of ${\mathcal N}$.
Then consider an arbitrary finite trace $\beta$ of ${\mathcal N}$ of
length $t > 0$ that is consistent with $\beta_{in}$.
Let $\beta'$ be the length $t-1$ prefix of $\beta$.
We show how to express $P(\beta | \beta')$ in terms of the conditional
probabilities that arise from probability distributions $P^1$ and
$P^2$ on infinite executions of ${\mathcal N}^1$ and ${\mathcal N}^2$,
respectively.
These distributions $P^1$ and $P^2$ are defined from certain input
executions of ${\mathcal N}^1$ and ${\mathcal N}^2$, respectively.
We define these input executions and distributions as follows.
\begin{itemize}
\item
Input execution $\beta^1_{in}$ and distribution $P^1$ for ${\mathcal N}^1$:

Define the infinite input execution $\beta^1_{in}$ of ${\mathcal N}^1$
as follows.
First, note that $N^1_{in} \subseteq N_{in} \cup N^2_{out}$, that is,
every input of ${\mathcal N}^1$ is either an input of ${\mathcal N}$
or an output of ${\mathcal N}^2$.
Define the firing patterns of the neurons in $N^1_{in} \cap N_{in}$
using $\beta_{in}$, that is, define
$\beta^1_{in} \lceil (N^1_{in} \cap N_{in}) = \beta_{in} \lceil N^1_{in}$.
And for the firing patterns of the input neurons in $N^1_{in} \cap N^2_{out}$,
use $\beta'$, that is, define $\beta^1_{in} \lceil (N^1_{in} \cap
N^2_{out}) = \beta' \lceil (N^1_{in} \cap N^2_{out})$ for times
$0,\ldots,t-1$, and the default $0$ for times $\geq t$.
Define $P^1$ to be the probability distribution that is generated by
${\mathcal N}^1$ from input execution $\beta^1_{in}$.

\item
Input execution $\beta^2_{in}$ and distribution $P^2$ for ${\mathcal N}^1$:

Analogous, interchanging 1 and 2.
\end{itemize}

\begin{lemma} 
  \label{lem: compose-out}
  Define $\beta$, $\beta'$, $P^1$, and $P^2$ as above.
  Then:
\[
P(\beta | \beta') =  
P^1((\beta \lceil N^1_{out}) | (\beta' \lceil N^1)) \times  
P^2((\beta \lceil N^2_{out}) | (\beta' \lceil N^2)).
\]
\end{lemma}

\begin{proof}
Lemma~\ref{lem: independence-conditional-traces}, Part 2, tells us that:
\[
P(\beta | \beta') =  
P((\beta \lceil N^1_{out}) | (\beta' \lceil N^1)) \times  
P((\beta \lceil N^2_{out}) | (\beta' \lceil N^2)).
\]
So it suffices to show that
\[
P((\beta \lceil N^1_{out}) | (\beta' \lceil N^1)) =
P^1((\beta \lceil N^1_{out}) | (\beta' \lceil N^1)),
\]
and similarly for ${\mathcal N}^2$.

\hide{
There are two differences between the two expressions:
First, in the first (left) expression, we fix only the
\emph{external inputs} of ${\mathcal N}^1$, and consider the
probabilistic execution of the entire network.
We then consider the conditional probability 
$P((\beta \lceil N^1_{out}) | (\beta' \lceil N^1))$, 
which means that we fix all the inputs and outputs of ${\mathcal N}^1$
through time $t-1$ to be as in $\beta'$, and consider the
probability that the firing pattern for the outputs of ${\mathcal
  N}^1$ at time $t$ coincides with what is given in $\beta$.
In the second (right) expression, we fix \emph{all the inputs} of ${\mathcal N}^1$, 
and consider the probabilistic execution of just ${\mathcal N}^1$ on
its own.
We then consider the conditional probability 
$P^1((\beta \lceil N^1_{out}) | (\beta' \lceil N^1))$, which means
that we again fix all inputs and outputs of ${\mathcal N}^1$ through
time $t-1$ to be as in $\beta'$,
and consider the probability that the firing pattern for the
outputs of ${\mathcal N}^1$ at time $t$ coincides with what is given
in $\beta$.

The second difference is that the first expression involves different
input sequences to ${\mathcal N}^1$ starting from time $t$, depending
on what is produced by the full network $\mathcal N$ for input
$\beta_{in}$.
The second expression fixes those inputs to $0$, based on input
$\beta^1_{in}$.
But this does not matter, because we are concerned only with the
outputs of ${\mathcal N}^1$ through time $t$, and these outputs depend
only on inputs to ${\mathcal N}^1$ through time $t-1$.

The equivalence of these two expressions then follows by unwinding the
definitions. 
}

The two expressions for ${\mathcal N}^1$ look very similar; their
equivalence follows by unwinding definitions.
First, the left-hand expression is based on $P$, which is generated by
the execution of the entire network ${\mathcal N}$ for input
$\beta_{in}$.
Thus, $\beta_{in}$ defines the inputs of ${\mathcal N}^1$ that are
also inputs of ${\mathcal N}$, but not those that are outputs of
${\mathcal N}^2$---the latter emerge from $P$.
Then we consider the conditional probability $P((\beta \lceil
N^1_{out}) | (\beta' \lceil N^1))$, which means that we now assume
that the external behavior of ${\mathcal N}^1$ through time $t-1$ is
$\beta'$, and consider the (conditional) probability that the firing
pattern produced by $P$ for the outputs of ${\mathcal N}^1$ at time
$t$ coincides with what is given in $\beta$.

On the other hand, the right-hand expression is based on $P^1$, which
is generated by the execution of just the sub-network ${\mathcal N}^1$
for input $\beta^1_{in}$.
Then we consider the conditional probability $P^1((\beta \lceil
N^1_{out}) | (\beta' \lceil N^1))$, which means that we again assume
that the external behavior of ${\mathcal N}^1$ through time $t-1$ is
$\beta'$, and now consider the (conditional) probability that the
firing pattern produced by $P^1$ for the outputs of ${\mathcal N}^1$
at time $t$ coincides with what is given in $\beta$.

Note that in $P$, we may have different input sequences to ${\mathcal
  N}^1$ starting from time $t$, depending on what is produced by
network $\mathcal N$ for input $\beta_{in}$.
In $P^1$, those inputs are always $0$, as in the definition of
$\beta^1_{in}$.
This difference does not matter, because we are concerned only with
the outputs of ${\mathcal N}^1$ through time $t$, and these outputs
depend only on inputs to ${\mathcal N}^1$ through time $t-1$.

It follows that these two conditional probabilities are the same.
\qed
\end{proof}

Lemma~\ref{lem: compose-out} is a nice statement of how the
probabilities decompose, and we generalize this in Lemma~\ref{lem:
  compose-out-2}.
However, it is not quite in the right form to prove compositionality of $Beh_2$.
This is because the expressions on the right-hand-side calculate
conditional probabilities for $\beta \lceil N^1_{out}$ and $\beta
\lceil N^2_{out}$, which describe behavior of only output neurons of
the two networks,
whereas $Beh_2$ is defined in terms of probabilities for traces that
include inputs as well as outputs.
So, we need a technical modification of the lemma.

Specifically, define $\gamma^1$ to be the length-$t$ trace of
${\mathcal N}^1$
such that $\gamma^1 \lceil N^1_{out} = \beta \lceil N^1_{out}$ and 
$\gamma^1 \lceil N^1_{in}$ is a prefix of $\beta^1_{in}$.
That is, $\gamma^1$ pastes together the output from $\beta \lceil N^1_{out}$
with the input used in the definition of $P^1$.
Note that $\beta' \lceil N^1$ is the one-step prefix of $\gamma^1$.
Define $\gamma^2$ analogously.

Now we can state a lemma that expresses conditional probabilities for
${\mathcal N}$ with input $\beta_{in}$ in terms of conditional probabilities
for ${\mathcal N}^1$ with input $\beta^1_{in}$ and
${\mathcal N}^2$ with input $\beta^2_{in}$.

\begin{lemma}
  \label{lem:  independence-conditional-traces-1}
    Define $\beta$, $\beta'$, $P^1$, $P^2$, $\gamma^1$, and $\gamma^2$ as above.
  Then:
\[
P(\beta | \beta') =  
P^1(\gamma^1 | (\beta' \lceil N^1)) \times  
P^2(\gamma^2 | (\beta' \lceil N^2)).
\]
\end{lemma}

\begin{proof}
By Lemma~\ref{lem: compose-out}, we have that
\[
P(\beta | \beta') =  
P^1((\beta \lceil N^1_{out}) | (\beta' \lceil N^1)) \times  
P^2((\beta \lceil N^2_{out}) | (\beta' \lceil N^2)).
\]
So it suffices to show that the corresponding terms are the same, that
is, that:
\[
P^1((\beta \lceil N^1_{out}) | (\beta' \lceil N^1)) =
P^1(\gamma^1 | (\beta' \lceil N^1)),
\]
and similarly for ${\mathcal N}^2$.
The first case follows because the definition of $P^1$ fixes the
firing patterns for the neurons in $N^1_{in}$ through time $t$, in a
way that is consistent with $\gamma^1$, and the traces $\gamma^1$ and $\beta$
agree on the neurons in $N^1_{out}$.
Similarly for the second case.
\qed
\end{proof}

\hide{
\begin{lemma}
\label{lem:  independence-conditional-traces-1}
\[
P(\beta | \beta') =  
P^1((\beta \lceil N^1) | (\beta' \lceil N^1)) \times  
P^2((\beta \lceil N^2) | (\beta' \lceil N^2)).
\]
\end{lemma}

\begin{proof}
By Lemma~\ref{lem: compose-out}, we have that
\[
P(\beta | \beta') =  
P^1((\beta \lceil N^1_{out}) | (\beta' \lceil N^1)) \times  
P^2((\beta \lceil N^2_{out}) | (\beta' \lceil N^2)).
\]
So it suffices to show that the corresponding terms are the same, that
is:
\[
P^1((\beta \lceil N^1_{out}) | (\beta' \lceil N^1)) =
P^1((\beta \lceil N^1) | (\beta' \lceil N^1)),
\]
and similarly for ${\mathcal N}^2$.
This follows because the definition of $P^1$ fixes the firing patterns
for the neurons in $N^1_{in}$ through time $t$ 
(based on $\beta_{in}$ through time $t$ and $\beta' \lceil N^1_{in}$
followed by $0$).
\qed
\end{proof}
}

Now we can conclude compositionality:

\begin{lemma}
\label{lem: determine-Beh2}
For all compatible pairs of networks ${\mathcal N}^1$ and ${\mathcal N}^2$, 
$Beh_{2}({\mathcal N})$ is determined by $Beh_{2}({\mathcal N}^1)$ and
$Beh_{2}({\mathcal N}^2)$.
\end{lemma}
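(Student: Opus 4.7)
The plan is to reduce to Lemma~\ref{lem: independence-conditional-traces-1}, which already does the heavy lifting: it expresses a one-step conditional probability in $\mathcal{N}$ as a product of one-step conditional probabilities in $\mathcal{N}^1$ and $\mathcal{N}^2$, evaluated at suitably chosen auxiliary input executions. What remains is to observe that each factor on the right-hand side is an entry of $Beh_2(\mathcal{N}^1)$ (respectively $Beh_2(\mathcal{N}^2)$), and that the auxiliary input executions $\beta^1_{in}$ and $\beta^2_{in}$ are determined purely by $\beta_{in}$, the trace $\beta$ (hence its prefix $\beta'$), and the externally visible neuron sets of the two sub-networks.

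Concretely, fix any infinite input execution $\beta_{in}$ of $\mathcal{N}$ and any finite trace $\beta$ of $\mathcal{N}$ of length $t > 0$ consistent with $\beta_{in}$, with one-step prefix $\beta'$. First, I would construct $\beta^1_{in}$, $\beta^2_{in}$, $\gamma^1$, and $\gamma^2$ exactly as in the paragraphs preceding Lemma~\ref{lem: independence-conditional-traces-1}, and explicitly point out that these constructions refer only to $\beta_{in}$ and to $\beta$; they do not use the biases, weights, edges, or internal firing patterns of either $\mathcal{N}^1$ or $\mathcal{N}^2$. In particular, $\beta^1_{in}$ and $\beta^2_{in}$ are legitimate infinite input executions of $\mathcal{N}^1$ and $\mathcal{N}^2$ respectively (using the compatibility conditions and the fact that $N^1_{in} \subseteq N_{in} \cup N^2_{out}$, and symmetrically).

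Second, I would identify the two factors on the right-hand side of Lemma~\ref{lem: independence-conditional-traces-1} as entries of $Beh_2$. Observe that $\gamma^1$ is a finite trace of $\mathcal{N}^1$ of length $t$ consistent with $\beta^1_{in}$, and that its one-step prefix is exactly $\beta' \lceil N^1$; hence $P^1(\gamma^1 \mid \beta' \lceil N^1)$ is precisely the value that the function $Beh_2(\mathcal{N}^1)$, applied to $\beta^1_{in}$, assigns to the finite trace $\gamma^1$. The same holds for the second factor. Therefore, given $Beh_2(\mathcal{N}^1)$ and $Beh_2(\mathcal{N}^2)$, both factors can be looked up, multiplied, and the product is $P(\beta \mid \beta')$ by Lemma~\ref{lem: independence-conditional-traces-1}. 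Since $\beta_{in}$ and $\beta$ were arbitrary, every entry of $Beh_2(\mathcal{N})$ is determined; invoking Lemma~\ref{lem: compositionality-characterization} completes the proof.

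The main obstacle, in terms of care rather than depth, is the bookkeeping that the two auxiliary input executions $\beta^1_{in}$ and $\beta^2_{in}$ are computable from $\beta_{in}$ and $\beta$ alone, and that the trace $\gamma^1$ whose conditional probability we query is indeed of the right form to appear in the collection defining $Beh_2(\mathcal{N}^1)(\beta^1_{in})$. Once this bookkeeping is in place, the compositionality of $Beh_2$ follows immediately, and then Theorem~\ref{th: composition-beh2} upgrades this to compositionality of $Beh$ itself.
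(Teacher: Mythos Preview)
Your proposal is correct and follows essentially the same route as the paper, which simply says the lemma ``follows directly from Lemma~\ref{lem: independence-conditional-traces-1}''; you have merely spelled out the bookkeeping that the paper leaves implicit. One small point: the final invocation of Lemma~\ref{lem: compositionality-characterization} is unnecessary here---once you have shown every entry of $Beh_2(\mathcal{N})$ is determined, the present lemma is already proved; that characterization lemma is used only in the next step (Theorem~\ref{th: beh2-compositional}) to conclude compositionality.
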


\begin{proof}
Follows directly from Lemma~\ref{lem:
  independence-conditional-traces-1}.
\qed
\end{proof}

\begin{theorem}
\label{th: beh2-compositional}
$Beh_{2}$ is compositional.
\end{theorem}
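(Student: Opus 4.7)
The plan is to invoke Lemma~\ref{lem: compositionality-characterization} with $B = Beh_2$, which reduces the theorem to showing that, for every compatible pair $({\mathcal N}^1, {\mathcal N}^2)$, the object $Beh_2({\mathcal N}^1 \times {\mathcal N}^2)$ is determined by $Beh_2({\mathcal N}^1)$ and $Beh_2({\mathcal N}^2)$. But this is precisely the content of Lemma~\ref{lem: determine-Beh2}, so the theorem follows in one line once these two lemmas are in hand.

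In slightly more detail: to show $Beh_2({\mathcal N}^1 \times {\mathcal N}^2) = Beh_2({\mathcal N}'^1 \times {\mathcal N}'^2)$ whenever $Beh_2({\mathcal N}^j) = Beh_2({\mathcal N}'^j)$ for $j \in \{1,2\}$, I would first unfold the definition of $Beh_2$: both sides are determined by the one-step conditional probabilities $P(\beta \mid \beta')$ over all infinite input executions $\beta_{in}$ and all consistent finite traces $\beta$ of length $>0$. Lemma~\ref{lem: independence-conditional-traces-1} gives a closed-form expression for each such $P(\beta \mid \beta')$ as a product
\[
P^1(\gamma^1 \mid \beta' \lceil N^1) \times P^2(\gamma^2 \mid \beta' \lceil N^2),
\]
where the factors are conditional one-step probabilities drawn from $Beh_2({\mathcal N}^1)$ and $Beh_2({\mathcal N}^2)$, applied to the derived input executions $\beta^1_{in}, \beta^2_{in}$ (which are themselves determined by $\beta_{in}$ and $\beta'$). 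Since $Beh_2({\mathcal N}^j) = Beh_2({\mathcal N}'^j)$ supplies exactly the data needed to evaluate each factor, the product, and hence $P(\beta \mid \beta')$, agrees on the two composed networks. This is the step captured by Lemma~\ref{lem: determine-Beh2}.

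There is essentially no remaining obstacle: all the real work was absorbed into the earlier lemmas (in particular, breaking the apparent circularity between ${\mathcal N}^1$ and ${\mathcal N}^2$ by conditioning on the one-step prefix $\beta'$, which is what Lemma~\ref{lem: independence-conditional-traces} and then Lemma~\ref{lem: independence-conditional-traces-1} accomplish). The theorem itself is therefore a two-line consequence: apply Lemma~\ref{lem: determine-Beh2} to get the ``determined by'' property, then apply Lemma~\ref{lem: compositionality-characterization} to convert this into compositionality of $Beh_2$.
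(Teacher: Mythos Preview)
Your proposal is correct and matches the paper's own proof, which simply cites Lemma~\ref{lem: determine-Beh2} and Lemma~\ref{lem: compositionality-characterization}. The additional detail you give (unfolding via Lemma~\ref{lem: independence-conditional-traces-1}) is exactly the content behind Lemma~\ref{lem: determine-Beh2}, so there is nothing to add.
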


\begin{proof}
By Lemmas~\ref{lem: determine-Beh2} and ~\ref{lem:
  compositionality-characterization}.
\qed
\end{proof}

\begin{theorem}
\label{th: beh-compositional}
$Beh$ is compositional.
\end{theorem}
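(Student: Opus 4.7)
The plan is to reduce this immediately to what has already been done. All the technical work has been carried out for the auxiliary behavior notion $Beh_2$, and I have a ready-made transfer principle (Theorem~\ref{th: composition-beh2}) saying that compositionality is preserved under equivalence of behavior notions. So the proof is essentially a one-line invocation.

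First I would recall that $Beh$ and $Beh_2$ have already been shown to be equivalent (this is the lemma just after the Auxiliary Behavior Definition), since unconditional probabilities determine one-step conditional probabilities via Lemma~\ref{lem: subsets} Part 5, and conditional probabilities determine unconditional probabilities via Lemma~\ref{lem: extended-product-trace}. Second, Theorem~\ref{th: beh2-compositional} gives compositionality of $Beh_2$. Third, I apply Theorem~\ref{th: composition-beh2}, instantiated with $B = Beh_2$ (the compositional one) and $B' = Beh$, to conclude that $Beh$ is compositional.

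There is no obstacle here — the heavy lifting has been done in establishing Lemma~\ref{lem: independence-conditional-traces-1} (which feeds into Lemma~\ref{lem: determine-Beh2} and hence Theorem~\ref{th: beh2-compositional}), and in the equivalence lemma between the two behavior notions. The final statement is just a two-step chaining of previously proved results, and the corresponding proof in the paper should be a single sentence citing Theorems~\ref{th: beh2-compositional} and~\ref{th: composition-beh2} together with the $Beh$–$Beh_2$ equivalence lemma.
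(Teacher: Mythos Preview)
Your proposal is correct and matches the paper's approach exactly: the paper's proof is the single line ``By Theorems~\ref{th: beh2-compositional} and~\ref{th: composition-beh2},'' which is precisely the two-step chaining you describe (with the $Beh$--$Beh_2$ equivalence lemma implicitly invoked to apply Theorem~\ref{th: composition-beh2}).
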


\begin{proof}
By Theorems~\ref{th: beh2-compositional} and~\ref{th:
  composition-beh2}.
\qed
\end{proof}

We end this section with a generalization of Lemma~\ref{lem:
  compose-out}
that applies to all four combinations of executions and traces.
The proof is similar to that for Lemma~\ref{lem: compose-out},
based on earlier Lemmas~\ref{lem: independence-conditional-executions}
and~\ref{lem: independence-conditional-traces}.
We will use this in Section~\ref{example: cyclic-2}.

\begin{lemma}
  \label{lem: compose-out-2}
  Let $\alpha$ be a finite execution of $\mathcal N$ of length $>0$
  that is consistent with $\beta_{in}$.
  Let $\alpha'$ be its one-step prefix.
  Let $\beta = trace(\alpha)$ and $\beta' = trace(\alpha')$.
Let $P_1$ and $P_2$ be as defined earlier in this section.
Then
\begin{enumerate}
\item
$P(\alpha | \alpha') =  
P^1((\alpha \lceil N^1_{lc}) | (\alpha' \lceil N^1)) \times  
P^2((\alpha \lceil N^2_{lc}) | (\alpha' \lceil N^2))$.
\item
$P(\beta | \alpha') =  
P^1((\beta \lceil N^1_{out}) | (\alpha' \lceil N^1)) \times  
P^2((\beta \lceil N^2_{out}) | (\alpha' \lceil N^2))$.
\item
$P(\alpha | \beta') =  
P^1((\alpha \lceil N^1_{lc}) | (\beta' \lceil N^1)) \times  
P^2((\alpha \lceil N^2_{lc}) | (\beta' \lceil N^2))$.
\item
$P(\beta | \beta') =  
P^1((\beta \lceil N^1_{out}) | (\beta' \lceil N^1)) \times  
P^2((\beta \lceil N^2_{out}) | (\beta' \lceil N^2))$.
\end{enumerate}
\end{lemma}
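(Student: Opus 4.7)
The plan is to handle each of the four parts by a two-step strategy that exactly mirrors the proof of Lemma~\ref{lem: compose-out}. First, I would apply the appropriate inner decomposition lemma to split the left-hand side into a product of two factors, one associated with ${\mathcal N}^1$ and the other with ${\mathcal N}^2$, but with all probabilities still taken with respect to the composed distribution $P$. Second, I would replace each of those $P$-conditional probabilities by the corresponding $P^j$-conditional probability.

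For the first step, the routing is straightforward: Part~1 follows from Lemma~\ref{lem: independence-conditional-executions}, Part~1; Part~2 follows from Lemma~\ref{lem: independence-conditional-executions}, Part~2; Part~3 follows from Lemma~\ref{lem: independence-conditional-traces}, Part~1; and Part~4 follows from Lemma~\ref{lem: independence-conditional-traces}, Part~2 (and is in fact essentially Lemma~\ref{lem: compose-out} restated). After this first step, the remaining identities all take the form $P(X \mid Y) = P^j(X \mid Y)$, where $X$ is a one-step extension restricted to the non-input or output neurons of ${\mathcal N}^j$, and $Y$ is a restriction to $N^j$ of either the execution prefix (Parts~1 and~2) or the trace prefix (Parts~3 and~4).

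For the second step, the argument is essentially the same in all four cases. The conditioning event $Y$ already fixes, through time $t-1$, the firing patterns of all input neurons of ${\mathcal N}^j$: those in $N^j_{in} \cap N_{in}$ are fixed by $\beta_{in}$, which both $P$ and $P^j$ use, and those in $N^j_{in} \cap N^{3-j}_{out}$ are determined by the $N^j_{ext}$-content of $Y$, which coincides with what $\beta^j_{in}$ specifies through time $t-1$ in the definition of $P^j$. The probability of the one-step extension described by $X$ then depends only on the firing states of $N^j$ at time $t-1$, by the locality of the sigmoid firing-probability computation at each neuron; those states are either fixed by $Y$ (execution case) or summed over consistently with $Y$ (trace case) in the same way under $P$ and under $P^j$.

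The main obstacle is the bookkeeping for Parts~3 and~4, where the conditioning event $\beta' \lceil N^j$ is only a trace, so the conditional probability implicitly involves a sum over the internal executions of ${\mathcal N}^j$ consistent with this trace. The task will be to verify that this implicit sum is realized identically under $P$ and under $P^j$; the argument is exactly the one used in Lemma~\ref{lem: compose-out}, and it relies on the fact that the default $0$ firing values assigned to $\beta^j_{in}$ for times $\geq t$ are irrelevant to events depending only on times $\leq t$. Once this equivalence of the two measures on the relevant cylinder sets is granted, Parts~1--4 all follow by one common pattern, and no additional combinatorial work is needed beyond what was done in Lemmas~\ref{lem: independence-conditional-executions}, \ref{lem: independence-conditional-traces}, and~\ref{lem: compose-out}.
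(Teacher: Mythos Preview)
Your proposal is correct and matches the paper's own approach: the paper does not give a full proof but simply states that it ``should be similar to that for Lemma~\ref{lem: compose-out}, based on earlier Lemmas~\ref{lem: independence-conditional-executions} and~\ref{lem: independence-conditional-traces},'' which is precisely the two-step strategy (decompose via the appropriate independence lemma, then replace $P$ by $P^j$) that you outline.
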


\subsection{Examples}


\subsubsection{Toy example for cyclic composition}
\label{example: cyclic-2}

We consider the toy cyclic composition example from
Section~\ref{example: cyclic-1}.
We analyze just one case in detail, namely, where $x_1$ fires
at time $0$ and $x_2$ does not.
We prove that, with probability at least $(1-\delta)^{7}$, both $x_1$
and $x_2$ fire at time $4$.

The input firing sequence $\beta_{in}$ is trivial here, since the
composed network $\mathcal N$ has no input neurons.
For this example, we assume that, in the initial configuration, $x_1$
fires and the other three neurons do not fire.
With these restrictions, we have a single probability distribution $P$
for infinite executions of $\mathcal N$.
We argue compositionally, in terms of executions.

So let $E$ be the set of executions of length $4$ in which both $x_1$
and $x_2$ fire at time $4$.
We will show that $P(E) \geq (1-\delta)^{7}$.
For this, we define several other sets of executions.
Each set is included in the previous one.
\begin{itemize}
\item
$E_0$, the set of executions of length $0$ consisting of just the
  initial configuration, in which $x_1$ is firing and the other neurons are
  not firing.
\item
$E_1$, the set of executions of length $1$ whose one-step prefix is in
  $E_0$ and in which, in the last configuration, $a_1$ is firing.
\item
$E_2$, the set of executions of length $2$ whose one-step prefix is in
  $E_1$ and in which, in the last configuration, $x_2$ is firing.
\item
$E_3$, the set of executions of length $3$ whose one-step prefix is in
  $E_2$ and in which, in the last configuration, $x_2$ and $a_2$ are
  both firing.
\item
$E_4$, the set of executions of length $4$ whose one-step prefix is in
  $E_3$ and in which, in the last configuration, $x_1$, $x_2$ and $a_2$ are
  all firing.
\end{itemize}
%
%
Then we can see that 
\[
P(E) \geq P(E_4) = P(E_4 | E_3) P(E_3 | E_2) P(E_2 | E_1) P(E_1 | E_0)
P(E_0)
= P(E_4 | E_3) P(E_3 | E_2) P(E_2 | E_1) P(E_1 | E_0).
\]

We need lower bounds for the four conditional probabilities.
For example, consider $P(E_4 | E_3)$.
Let $\alpha'$ be any execution in $E_3$; we will argue that 
$P(E_4 | \alpha') \geq (1 - \delta)^3$, and use Total Probability to conclude
that $P(E_4 | E_3) \geq (1 - \delta)^3$.
We have:
\[
P(E_4 | \alpha') = \sum_{\alpha} P(\alpha | \alpha'),
\]
where $\alpha$ ranges over the length-$4$ executions in $E_4$ that extend
$\alpha'$.
By Lemma~\ref{lem: compose-out-2}, we may break this down in terms of
the two sub-networks and write:
\[
P(\alpha | \alpha') = 
P^1((\alpha \lceil N^1_{lc}) | (\alpha' \lceil N^1)) \times  
P^2((\alpha \lceil N^2_{lc}) | (\alpha' \lceil N^2)),
\]
where $P^1$ and $P^2$ are defined from $\beta' = trace(\alpha')$ as in
Section~\ref{sec: comp}.

We can rewrite $\sum_{\alpha} P(\alpha | \alpha')$ as
\[
\sum_{\alpha^1} \sum_{\alpha^2} 
P^1((\alpha^1 \lceil N^1_{lc}) | (\alpha' \lceil N^1)) \times  
P^2((\alpha^2 \lceil N^2_{lc}) | (\alpha' \lceil N^2)),
\]
where $\alpha^1$ ranges over all one-step extensions of $\alpha'
\lceil N^1$ such that $x_2$ fires in the final configuration,
and $\alpha^2$ ranges over all one-step extensions of 
$\alpha' \lceil N^2$ in which $x_1$ and $a_2$ both fire in the final
configuration.
%
%
This summation is equal to
\[
\sum_{\alpha^1} P^1((\alpha^1 \lceil N^1_{lc}) | (\alpha' \lceil N^1)) \times  
\sum_{\alpha^2} P^2((\alpha^2 \lceil N^2_{lc}) | (\alpha' \lceil N^2)).
\]
The first term is $\geq (1 - \delta)$ because we care only that $x_2$
fires in the final configuration, and we have assumed that it fires in
the previous configuration.
The second term is $\geq (1 - \delta)^2$, because we care that both
$x_1$ and $a_2$ fire in the final configuration, and we have assumed
that $a_2$ and $x_2$ fire in the previous configuration.
So we have:
\[
P(E_4 | \alpha') =
\sum_{\alpha^1} P^1((\alpha^1 \lceil N^1_{lc}) | (\alpha' \lceil N^1)) \times  
\sum_{\alpha^2} P^2((\alpha^2 \lceil N^2_{lc}) | (\alpha' \lceil N^2))
\geq (1-\delta)(1-\delta)^2 = (1-\delta)^3.
\]

Thus, we have shown that
$P(E_4 | E_3) \geq (1 - \delta)^3$,
Similar arguments can be used to show that 
$P(E_3 | E_2) \geq (1 - \delta)^2$,
$P(E_2 | E_1) \geq (1 - \delta)$, and
$P(E_1 | E_0) \geq (1 - \delta)$.
Combining all the terms we get that $P(E_4) \geq (1 - \delta)^7$, as
needed.

\section{Hiding for Spiking Neural Networks}
\label{sec: hiding}

Now we define our second operator for SNNs, the \emph{hiding
  operator}.
This operator is designed to ``hide'' some previously
externally-visible behavior so it becomes invisible outside the
network.
Formally, the hiding operator simply reclassifies some output neurons
as internal.
The hiding operator can be used in conjunction with a composition
operator; for example, we often want to compose two networks and then
hide the neurons that were used to communicate between them.

\subsection{Hiding definition}

Given a network ${\mathcal N}$ and a subset $V$ of the output neurons
$N_{out}$ of ${\mathcal N}$, we define a new network ${\mathcal N}' =
hide({\mathcal N}, V)$ to be exactly the same as ${\mathcal N}$ except
that all the outputs in $V$ are now reclassified as internal neurons.  
That is, all parts of the definition of ${\mathcal N}'$ and
${\mathcal N}$ are identical except that 
$N'_{out} = N_{out} - V$ and
$N'_{int} = N_{int} \cup V$.
The effect of the hiding operator is to make the hidden
neurons ineligible for combining with other neurons in further
composition operations.

We give a result in the style of Lemma~\ref{lem: determine-Beh2},
here saying that the external behavior of 
$hide({\mathcal N}, V)$ is determined by the external behavior of
$\mathcal N$ and $V$.

\begin{theorem}
For every network ${\mathcal N}$ and subset $V \subseteq N_{out}$,
$Beh(hide({\mathcal N},V))$ is determined by $Beh(N)$ and $V$.
\end{theorem}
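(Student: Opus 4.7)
The plan is to show that for every infinite input execution $\beta_{in}$ and every finite trace $\beta'$ of ${\mathcal N}' = hide({\mathcal N}, V)$ consistent with $\beta_{in}$, the probability $P'(A(\beta'))$ can be computed purely from the values that make up $Beh({\mathcal N})$ together with the set $V$.

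First I would observe that the hiding operation changes only the partition of the neurons into input/output/internal categories; it leaves the neuron set $N$, the edge set $E$, the weights, the biases, and the initial non-input firing pattern $F_0$ entirely unchanged. In particular $N'_{in} = N_{in}$, so an infinite input execution $\beta_{in}$ of ${\mathcal N}'$ is literally an infinite input execution of ${\mathcal N}$, and the recursive construction from Section~\ref{sec: prob-execs} yields the very same probability distribution $P$ on infinite executions for both networks (the one-step firing probabilities are functions of biases, weights, and prior configurations only).

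Next I would relate the trace-cones of ${\mathcal N}'$ to those of ${\mathcal N}$. Since $N'_{ext} = N_{ext} \setminus V$, a length-$t$ trace $\beta'$ of ${\mathcal N}'$ is a firing-pattern sequence on $N_{ext} \setminus V$. The length-$t$ traces $\beta$ of ${\mathcal N}$ satisfying $\beta \lceil (N_{ext} \setminus V) = \beta'$ are in bijection with length-$(t{+}1)$ firing-pattern sequences on $V$, hence form a finite set. Their cones are pairwise disjoint by Lemma~\ref{lem: disjoint-or-subset}, and an infinite execution $\alpha$ (common to both networks) extends $\beta'$ on $N_{ext} \setminus V$ if and only if $\alpha \lceil N_{ext}$ extends one of these $\beta$. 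Therefore $A(\beta')$, computed in ${\mathcal N}'$, equals $\bigcup_{\beta} A(\beta)$ computed in ${\mathcal N}$, giving
\[
P'(A(\beta')) \;=\; \sum_{\beta \,:\, \beta \lceil (N_{ext} \setminus V) = \beta'} P(A(\beta)).
\]
The right-hand side uses only $V$ (to identify which coordinates to marginalize over) and the trace-cone probabilities constituting $Beh({\mathcal N})(\beta_{in})$, so the mapping $\beta_{in} \mapsto \{P'(A(\beta'))\}_{\beta'}$ that defines $Beh({\mathcal N}')$ is a function of $Beh({\mathcal N})$ and $V$.

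No serious obstacle is anticipated: the argument is essentially the observation that hiding corresponds to marginalizing the trace distribution over the coordinates in $V$, and this marginalization commutes with the cone/prefix construction because, at each time step, $V$ contributes only finitely many possible firing patterns. The only delicate points are identifying the input executions of the two networks and verifying that the $\sigma$-algebras coincide on $N_{ext} \setminus V$, both of which are immediate from the fact that hiding is purely a relabeling.
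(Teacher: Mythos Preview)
Your proposal is correct and follows essentially the same approach as the paper: observe that hiding leaves the underlying probabilistic execution unchanged, then express each trace-cone probability in $\mathcal{N}'$ as a finite sum of trace-cone probabilities in $\mathcal{N}$ over all ways of filling in the $V$-coordinates. The only quibble is your count of ``length-$(t{+}1)$ firing-pattern sequences on $V$'': since $V \subseteq N_{out} \subseteq N_{lc}$, the time-$0$ firing pattern on $V$ is fixed by $F_0$, so the free choices are only at times $1,\ldots,t$; this does not affect the argument.
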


\begin{proof}
Let ${\mathcal N}'=hide({\mathcal N}, V)$.
Fix any infinite input execution $\beta_{in}$ for ${\mathcal N}'$, and
let $P'$ denote the probabilistic execution of ${\mathcal N}'$
generated from $\beta_{in}$.
Consider any finite trace $\beta$ of ${\mathcal N}'$ that is
consistent with $\beta_{in}$.
We must express $P'(\beta)$ in terms of the probability distribution
of traces generated by $\mathcal N$ on some input execution.

To do this, note that the executions of $\mathcal N$ are identical to
those of ${\mathcal N}'$---only the classification of neurons in $V$
is different.  
In particular, the input execution $\beta_{in}$ is also an input execution of
$\mathcal N$.
Let $P$ denote the probabilistic execution generated of $\mathcal N$
generated from $\beta_{in}$.
Then $P'$, the probabilistic execution of ${\mathcal N}'$, is identical
to $P$, the probabilistic execution of $\mathcal N$.
So we can write $P'(\beta) = P(\beta)$.

This is not quite what we need, because $\beta$ is not actually a
trace of ${\mathcal N}$---it excludes firing patterns for neurons in $V$.
But we can define $B$ to be the set of traces $\gamma$ of ${\mathcal N}$
such that $\gamma \lceil (N'_{ext}) = \beta$, that is, $B$ is the
set of traces of ${\mathcal N}$ that project to yield $\beta$ but
allow any firing behavior for the neurons in $V$.
Then we have 
\[
P'(\beta) = \sum_{\gamma \in B} P(\gamma).
\]
This is enough to show the needed dependency.
\qed
\end{proof}

\subsection{Examples}


\subsubsection{Boolean circuits}
\label{example: Boolean-circuits-4}

Let $\mathcal N$ be the $5$-gate Nand circuit from
Section~\ref{example: Boolean-circuits-2}.
Let $V$ be the singleton set consisting of just the And neuron within
the circuit.
We consider the network ${\mathcal N}' = hide({\mathcal N},V)$, which
is the same as the Nand circuit except that the And neuron is now
regarded as internal.
Thus, ${\mathcal N}'$ has two internal neurons:  the And neuron, and
the internal neuron $a$ of $\mathcal N$.
Fix $\beta_{in}$ to be any infinite input execution (for both
${\mathcal N}$ and ${\mathcal N}'$) with stable inputs, and let $P$
and $P'$ be the probabilistic executions of $\mathcal N$ and
${\mathcal N}'$, respectively, generated from $\beta_{in}$.

In $P'$, we should expect to have stable correct Nand outputs for a long
time starting from time $3$.
Here we consider just finite traces $\beta$ of length exactly $3$, and
focus on the output at exactly time $3$.
Thus, we consider the probabilities $P'(\beta)$ for finite traces
$\beta$ of length exactly $3$, and we would like to show that the
probability of a correct Nand output at time $3$ is at least $(1 - \delta)^3$.
We use the connection between $P$ and $P'$ to help us show this.

Namely, we assume that, in $P$, the probability of both a correct And
output at time $1$ and a correct Nand output at time $3$ is at
least $(1 - \delta)^3$.
This could be proved for the Nand circuit separately, but we simply
assume it here.

Now define event $B$ to be the set of traces $\beta$ of ${\mathcal
  N}'$ of length $3$ such that $\beta$ gives a correct Nand output at
time $3$.  
Our assumption about $P$ implies that $P(B) \geq (1 - \delta)^3$.
We argue that $P'(B) \geq (1 - \delta)^3$, which implies our desired
result.

We have that $P'(B) = \sum_{\beta \in B}P'(\beta)$.
We know that $P'(\beta) = P(\beta)$ for each trace $\beta$ of
${\mathcal N}'$.
Therefore, we have that $P'(B) = \sum_{\beta \in B}P(\beta) = P(B)$.
Since we have that $P(B) \geq (1 - \delta)^3$, it follows that
$P'(B) \geq (1 - \delta)^3$, as needed.

\section{Problems for Spiking Neural Networks}
\label{sec: problem}

In this section, we define a formal notion of a \emph{problem} to be
solved by a stochastic Spiking Neural Network.  Problems are stated in
terms of the input/output behavior that should be exhibited by a
network.
Namely, for every input, a problem specifies a set of \emph{possibilities},
each of which is a probability distribution on outputs.
We define what it means for an SNN to \emph{solve} a problem.  We
prove that this notion of ``solves'' respects our composition and hiding
operators.
 

\subsection{Problems and solving problems}

We define a \emph{problem} ${\mathcal R}$ for a pair
$(N_{in},N_{out})$ of disjoint sets of neurons to be a mapping that
assigns, to each infinite sequence $\beta_{in}$ of firing patterns for
$N_{in}$, a nonempty set ${\mathcal R}(\beta_{in})$ of \emph{possibilities}.
Each possibility $R \in {\mathcal R}(\beta_{in})$ is a mapping that
specifies, for every finite sequence $\beta$ of firing patterns for
$N_{in} \cup N_{out}$ that is consistent with $\beta_{in}$, a probability $R(\beta)$.
Thus, the problem $\mathcal R$ assigns to each input a set of ``possible''
probability distributions on outputs.

The probabilities assigned by a particular possibility $R$ must satisfy
certain constraints, designed to guarantee that they generate an actual
probability distribution on the set of infinite sequences of firing 
patterns for $N_{in} \cup N_{out}$.
Namely, we require that $R$ assign probability $1$ to some particular
$\beta$ of length $0$, and that the probabilities assigned to the
one-step extensions of any $\beta$ must add up to the probability of
$\beta$.

Now suppose that ${\mathcal N}$ is a network with input and output
neurons $N_{in}$ and $N_{out}$, and $\mathcal R$ is a problem for
$(N_{in}, N_{out})$.
Then we say that that ${\mathcal N}$ \emph{solves} ${\mathcal R}$
provided that, for any infinite input execution $\beta_{in}$ for $\mathcal N$,
there is some possibility $R \in {\mathcal R}(\beta_{in})$ for which
the following holds:
Let $P$ denote the probabilistic execution of ${\mathcal N}$ for $\beta_{in}$.
Then for every finite trace $\beta$ of $\mathcal N$, $P(\beta) = R(\beta)$.
In other words, $R$ is exactly the trace distribution derived from the
probabilistic execution of $\mathcal N$ for input $\beta_{in}$.

\subsection{Composition of problems}

We would like a theorem of the following form:
If ${\mathcal N}^1$ solves problem ${\mathcal R}^1$ and ${\mathcal N}^2$ 
solves problem ${\mathcal R}^2$, then the composition of networks
${\mathcal N} = {\mathcal N}^1 \times {\mathcal N}^2$ solves
the composition of problems ${\mathcal R} = {\mathcal R}^1 \times
{\mathcal R}^2$.
For this, we must first define the composition of two problems, 
${\mathcal R} = {\mathcal R}^1 \times {\mathcal R}^2$.

So let ${\mathcal R}^1$ be a problem for the pair $(N^1_{in}, N^1_{out})$
and ${\mathcal R}^2$ a problem for the pair $(N^2_{in}, N^2_{out})$.
Assume that ${\mathcal R}^1$ and ${\mathcal R}^2$ are
\emph{compatible}, in the sense that $N^1_{out} \cap N^2_{out} = \emptyset$.
Then the composition $\mathcal R$ is defined to be a problem for the
pair $(N_{in}, N_{out})$, where
$N_{out} = N^1_{out} \cup N^2_{out}$ and
$N_{in} = N^1_{in} \cup N^2_{in} - N_{out}$. 
%
The composed problem $\mathcal R$ should be defined as a mapping that
assigns, to each infinite sequence $\beta_{in}$ of firing patterns for $N_{in}$, a
nonempty set ${\mathcal R}(\beta_{in})$ of \emph{possibilities}.
Each possibility $R \in {\mathcal R}(\beta_{in})$ should be a mapping that
specifies, for every finite sequence $\beta$ of firing patterns for
$N_{in} \cup N_{out}$ that is consistent with $\beta_{in}$, a
probability $R(\beta)$.

We define the ${\mathcal R}$ mapping by considering each $\beta_{in}$
separately; so fix any $\beta_{in}$.
We describe how to define the set ${\mathcal R}(\beta_{in})$ of
possibilities for $\beta_{in}$.

To define ${\mathcal R}(\beta_{in})$, we start by selecting (in an
arbitrary way) a single possibility
$R^1(\beta^1_{in}) \in{\mathcal R}^1(\beta^1_{in})$ for each firing
pattern $\beta^1_{in}$ for $N^1_{in}$, and likewise a single
possibility
$R^2(\beta^2_{in}) \in {\mathcal R}^2(\beta^2_{in})$ for each firing
pattern $\beta^2_{in}$ for $N^2_{in}$.\footnote{Unwinding the
  definitions a bit, possibility $R^1(\beta^1_{in})$ is a
  mapping from sequences of firing patterns that are consistent with
$\beta^1_{in}$ to probabilities, and analogously for $R^2(\beta^2_{in})$.}
We use this entire collection of choices for $R^1(\beta^1_{in})$ and
$R^2(\beta^2_{in})$, for all values of $\beta^1_{in}$ and $\beta^2_{in}$, to
construct a single, particular possibility $R$ for $\beta_{in}$.
Then we define ${\mathcal R}(\beta_{in})$ to be the set of all
possibilities for $\beta_{in}$ that can be constructed in this way,
based on all choices for the possibilities $R^1(\beta^1_{in})$ and
$R^2(\beta^2_{in})$.

So fix the possibilities $R^1(\beta^1_{in}) \in{\mathcal R}^1(\beta^1_{in})$ and
$R^2(\beta^2_{in}) \in {\mathcal R}^2(\beta^2_{in})$ arbitrarily, as
just described.
Constructing the possibility $R$ for $\beta_{in}$ requires us to
define $R(\beta)$ for every finite sequence $\beta$ of firing patterns
of $N_{in} \cup N_{out}$ that is consistent with $\beta_{in}$.
We do this recursively.
For the base, consider $\beta$ of length $0$, where $\beta$ is
consistent with $\beta_{in}$.
Let $\beta^1_{in}$ be the infinite sequence of all-$0$ firing patterns for
$N^1_{in}$, and $\beta^2_{in}$ be the infinite sequence of all-$0$
firing patterns for $N^2_{in}$.
Then we define $R(\beta) = 1$ if
\[
R^1(\beta^1_{in})(\beta \lceil N^1_{out}) = 1 \mbox{ and }
R^2(\beta^2_{in})(\beta \lceil N^2_{out}) = 1,
\]
and $0$ otherwise.
That is, we assign probability $1$ to the length-$0$ sequence $\beta$
that is consistent with $\beta_{in}$, and in which the output firing
states are the same as those to which $R^1(\beta^1_{in})$ and
$R^2(\beta^2_{in})$ assign probability $1$.

For the recursive step, consider $\beta$ of length $\geq 1$, where
$\beta$ is consistent with $\beta_{in}$, and let $\beta'$ be the
one-step prefix of $\beta$.
We define $R(\beta)$ in terms of $R(\beta')$.
Namely, let $\beta^1_{in}$ be the infinite sequence of firing patterns for
$N^1_{in}$ that are constructed from the following:
(a) for neurons in $N^1_{in} \cap N_{in}$, use  $\beta_{in} \lceil N^1_{in}$, and
(b) for neurons in $N^1_{in} \cap N^2_{out}$, use  $\beta' \lceil (N^1_{in} \cap
N^2_{out})$ for times $0,\ldots,t-1$, and the default $0$ for times $\geq t$, 
Define $\beta^2_{in}$ analogously.
%
Then define $R(\beta) = R(\beta') \times T^1 \times T^2$,
where $T^1$ is the conditional probability
$R^1(\beta^1_{in})((\beta \lceil N^1_{out}) | (\beta' \lceil N^1))$
and $T^2$ is the conditional probability
$R^2(\beta^2_{in})((\beta \lceil N^2_{out}) | (\beta' \lceil
N^2))$.\footnote{Again unwinding the definitions,
  $R^1(\beta^1_{in})$ is the possibility chosen for input
  $\beta^1_{in}$.
  The conditional probability
  $R^1(\beta^1_{in})((\beta \lceil N^1_{out}) | (\beta' \lceil N^1))$
  describes the probability that ${\mathcal N}^1$ extends $\beta'
  \lceil N^1$
  to yield the outputs specified by $\beta$.
  Analogously for $T^2$.}

\begin{theorem}
\label{th: composition-problems}
If ${\mathcal N}^1$ solves problem ${\mathcal R}^1$ and ${\mathcal N}^2$ 
solves problem ${\mathcal R}^2$, then the composition of networks
${\mathcal N} = {\mathcal N}^1 \times {\mathcal N}^2$ solves
the composition of problems ${\mathcal R} = {\mathcal R}^1 \times
{\mathcal R}^2$.
\end{theorem}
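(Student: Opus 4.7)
The plan is to extract witnessing possibilities from the ``solves'' hypotheses, paste them together using the problem-composition recipe to build a possibility $Poss$ for $\mathcal{R}$, and then verify by induction on trace length that the resulting $R = Poss(\beta_{in})$ agrees on every finite trace with the probabilistic trace distribution $P$ of $\mathcal{N}$. Concretely, since $\mathcal{N}^1$ solves $\mathcal{R}^1$, pick a witness $Poss^1 \in \mathcal{R}^1$; do the same for $Poss^2 \in \mathcal{R}^2$. Using exactly these two possibilities in the construction from the definition of $\mathcal{R}^1 \times \mathcal{R}^2$ produces a particular $Poss \in \mathcal{R}$. Now fix an arbitrary infinite input execution $\beta_{in}$ of $\mathcal{N}$, and let $R = Poss(\beta_{in})$ and $P$ be the probabilistic trace distribution of $\mathcal{N}$ generated by $\beta_{in}$. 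It suffices to show $P(\beta) = R(\beta)$ for every finite trace $\beta$ of $\mathcal{N}$ consistent with $\beta_{in}$.

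I would proceed by induction on $length(\beta) = t$. For the base case $t = 0$, both $R(\beta)$ (by its base-case definition) and $P(\beta)$ assign probability $1$ to the unique length-$0$ trace determined by $\beta_{in}$'s initial pattern and the composed initial non-input pattern $F_0 = F^1_0 \cup F^2_0$; the initial output firing states in $\mathcal{N}$ are precisely those dictated by $F^1_0 \lceil N^1_{out}$ and $F^2_0 \lceil N^2_{out}$, matching the definition of $R$. For the inductive step, let $\beta'$ be the length-$(t{-}1)$ prefix of $\beta$. By the inductive hypothesis $P(\beta') = R(\beta')$, so it suffices to show that the conditional factors agree, i.e.\ that $P(\beta \mid \beta')$ equals $T^1 \times T^2$ where $T^j = R^j((\beta \lceil N^j_{out}) \mid (\beta' \lceil N^j))$ in the notation of the problem composition.

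By Lemma~\ref{lem: compose-out} applied to $\mathcal{N}$, we have
\[
P(\beta \mid \beta') =
P^1((\beta \lceil N^1_{out}) \mid (\beta' \lceil N^1)) \times
P^2((\beta \lceil N^2_{out}) \mid (\beta' \lceil N^2)),
\]
where $P^1, P^2$ are the probabilistic executions of $\mathcal{N}^1, \mathcal{N}^2$ generated from the input executions $\beta^1_{in}, \beta^2_{in}$ defined in Section~\ref{sec: comp}. Those $\beta^j_{in}$ agree through time $t-1$ with the $\beta^j_{in}$ used in the problem-composition definition of $R$, differing only from time $t$ onwards (both pad with zeros, but possibly starting at different times). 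Since conditional probabilities of the form $P^j(\gamma \mid \gamma')$ for traces of length $t$ conditioned on their one-step prefix depend only on the input firing values through time $t-1$, the two choices give the same $P^j((\beta \lceil N^j_{out}) \mid (\beta' \lceil N^j))$. Finally, because $\mathcal{N}^j$ solves $\mathcal{R}^j$ via $Poss^j$, the distribution $P^j$ equals $R^j = Poss^j(\beta^j_{in})$ on all finite traces, hence also on these conditional probabilities (by Lemma~\ref{lem: subsets}, Part 5, applied within both $P^j$ and $R^j$). So $P^j((\beta \lceil N^j_{out}) \mid (\beta' \lceil N^j)) = T^j$, and multiplying yields $P(\beta \mid \beta') = T^1 \times T^2 = R(\beta)/R(\beta')$, closing the induction.

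The main obstacle I anticipate is the bookkeeping around the two different input sequences $\beta^j_{in}$ — the one that arises inside Lemma~\ref{lem: compose-out} (depending on the prefix $\beta'$) versus the one used in the problem-composition recursion (which may, depending on how the definition is read, depend on $\beta$ itself). Justifying that this mismatch is harmless amounts to the ``locality in time'' observation that a one-step conditional probability at time $t$ depends only on input firing up through time $t-1$, which is where the two definitions already agree. Apart from this, everything else is a mechanical induction combined with the compositionality lemmas already proved.
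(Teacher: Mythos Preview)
Your proposal is correct and follows essentially the same route as the paper: pick witnessing possibilities $Poss^1, Poss^2$, build $Poss$ from them via the problem-composition recipe, then prove $P(\beta)=R(\beta)$ by induction on $length(\beta)$, using Lemma~\ref{lem: compose-out} in the inductive step and the ``solves'' hypotheses to identify $P^j$ with $R^j$. Your caution about a mismatch between the two versions of $\beta^j_{in}$ is in fact unnecessary---both the definition in Section~\ref{sec: comp} and the one in the problem-composition construction take the $N^j_{in}\cap N^k_{out}$ values from $\beta$ (equivalently $\beta'$) for times $0,\dots,t-1$ and pad with $0$ from time $t$ on, so they coincide exactly; the paper simply uses ``the same $\beta^1_{in}$ and $\beta^2_{in}$'' without further comment.
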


\begin{proof}
Since ${\mathcal N}^1$ solves ${\mathcal R}^1$, we know that, for
every infinite input execution $\beta^1_{in}$ for ${\mathcal N}^1$,
there is a possibility in ${\mathcal R}^1(\beta^1_{in})$ that is
identical to the trace distribution derived from the probabilistic
execution of ${\mathcal N}^1$ for $\beta^1_{in}$.
Denote this possibility by $R^1(\beta^1_{in})$.
Likewise, since ${\mathcal N}^2$ solves ${\mathcal R}^2$, we know
that, for every infinite input execution $\beta^2_{in}$ for ${\mathcal N}^2$,
there is a possibility in ${\mathcal R}^2(\beta^2_{in})$ that is
identical to the trace distribution derived from the probabilistic
execution of ${\mathcal N}^2$ for input $\beta^2_{in}$.
Denote this possibility by $R^2(\beta^2_{in})$.
To show that $\mathcal N$ solves $\mathcal R$, we must show that, for
every infinite input execution $\beta_{in}$ for $\mathcal N$, 
there is some possibility $R \in {\mathcal R}(\beta_{in})$ such that 
$R$ is identical to the trace distribution derived from the
probabilistic execution of $\mathcal N$ for input $\beta_{in}$.

So fix an input execution $\beta_{in}$ for  $\mathcal N$, and
define $P$ to be the trace distribution generated by $\mathcal N$ for
input $\beta_{in}$.
Also define distribution $R$ for $\beta_{in}$ using the recursive
approach in the definition of composition of problems, but now based
on the particular selections $R^1$ and $R^2$ just defined.
We claim that $P = R$.
To show this, we must show that, for any finite trace $\beta$ of
$\mathcal N$ that is consistent with $\beta_{in}$, $P(\beta) = R(\beta)$.
We do this by induction on the length of $\beta$.

For the base, consider $\beta$ of length $0$.
The definition of $P(\beta)$ yields $1$ if $\beta$ is the initial output
configuration of ${\mathcal N}$ and $0$ otherwise.
The initial output configuration is the unique configuration $C$ for
which
$C \lceil N^1_{out} = F^1_0 \lceil N^1_{out}$ and 
$C \lceil N^2_{out} = F^2_0 \lceil N^2_{out}$
(here using the general notation for initial firing patterns).
On the other hand, the definition of $R(\beta)$ yields $1$ if $\beta$
is the unique output configuration of $\mathcal N$ for which
$R^1(\beta^1_{in})(\beta \lceil N^1_{out}) = 1 \mbox{ and }
R^2(\beta^2_{in})(\beta \lceil N^2_{out}) = 1$,
where $\beta^1_{in}$ and $\beta^2_{in}$ are infinite sequences of
all-$0$ firing patterns, and $0$ for other output configurations.
By definition of $R^1$ and $R^2$, this is, again, just the initial output
configuration of $\mathcal N$.
This implies that $P(\beta) = R(\beta)$.

For the inductive step, consider $\beta$ of length $\geq 1$, and
let $\beta'$ be the one-step prefix of $\beta$.
By the inductive hypothesis, we may assume that $P(\beta') = R(\beta')$.
We must show that $P(\beta) = R(\beta)$.

Fix $\beta^1_{in}$ and $\beta^2_{in}$ as in the recursive definition
of $R(\beta)$.
Then by the definition of $R(\beta)$, we have
\[
R(\beta) = R(\beta') \times 
R^1(\beta^1_{in})((\beta \lceil N^1_{out}) | (\beta' \lceil N^1)) \times
R^2(\beta^2_{in})((\beta \lceil N^2_{out}) | (\beta' \lceil N^2)).
\]
Also, for the same $\beta^1_{in}$ and $\beta^2_{in}$, fix $P^1$
and $P^2$ to be the probabilistic traces for ${\mathcal N}^1$ and
${\mathcal  N}^2$, respectively.
Then by Lemma~\ref{lem: compose-out} and Lemma~\ref{lem: subsets}, we
have
\[
P(\beta) = P(\beta') \times
P^1((\beta \lceil N^1_{out}) | (\beta' \lceil N^1)) \times  
P^2((\beta \lceil N^2_{out}) | (\beta' \lceil N^2)).
\]

The assumption that ${\mathcal N}^1$ solves ${\mathcal R}^1$ with
the particular possibility $R^1(\beta^1_{in})$ implies that the two
conditional distributions $P^1$ and $R^1(\beta^1_{in})$ are identical,
so
\[
P^1((\beta \lceil N^1_{out}) | (\beta' \lceil N^1)) = 
R^1(\beta^1_{in})((\beta \lceil N^1_{out}) | (\beta' \lceil N^1)).
\]
Similarly, $P^2$ and $R^2(\beta^2_{in})$ are identical, so
\[
P^2((\beta \lceil N^2_{out}) | (\beta' \lceil N^2)) = 
R^2(\beta^2_{in})((\beta \lceil N^2_{out}) | (\beta' \lceil N^2)).
\]
Since all three pairs of corresponding terms in the two equations are 
equal, we conclude that their products are equal, that is, $P(\beta) =
R(\beta)$, as needed.
\qed
\end{proof}

\subsection{Hiding of problems}

Next, we define a hiding operator on problems, analogous to the hiding
operator on networks.
Namely, given a problem ${\mathcal R}$ for $(N_{in}, N_{out})$, and a
subset $V$ of the output neurons $N_{out}$ of ${\mathcal R}$, we
define a new ``hidden'' problem ${\mathcal R}' = hide({\mathcal R},
V)$ for $(N'_{in}, N'_{out})$, where 
$N'_{out} = N_{out} - V$ and
$N'_{in} = N_{in}$.
%
The hidden problem ${\mathcal R}'$ should be defined as a mapping that
assigns, to each infinite sequence $\beta_{in}$ of firing patterns for
${\mathcal  N}'$, a nonempty set ${\mathcal R}'(\beta_{in})$ of
\emph{possibilities}.
Each possibility $R' \in {\mathcal R}'(\beta_{in})$  should be a mapping
that specifies, for every finite sequence $\beta$ of firing patterns
for $N'_{in} \cup N'_{out}$ that is consistent with $\beta_{in}$, a
probability $R'(\beta)$.

We define this mapping by considering each $\beta_{in}$ separately; so
fix any $\beta_{in}$.
To define the set ${\mathcal R}'(\beta_{in})$, we start by selecting
(in an arbitrary way) a single possibility $R \in {\mathcal R}(\beta_{in})$
We use $R$ to define the possibility $R'$ for ${\mathcal N}'$
and input $\beta_{in}$.
Since there may be many ways to define $R$, ${\mathcal R}'$ may wind
up containing many different possibilities.

Constructing the possibility $R'$ requires us to define $R'(\beta)$ for
every finite sequence $\beta$ of firing patterns of $N'_{in} \cup N'_{out}$ that is
consistent with $\beta_{in}$.
This construction is much simpler than that for composition:  
Let $B$ denote the set of finite sequences $\gamma$ of firing
patterns for $N_{ext}$ such that $\gamma \lceil (N'_{in} \cup N'_{out}) = \beta$.
Then define
\[
R'(\beta) = \sum_{\gamma \in B} R(\gamma).
\]

\begin{theorem}
If network $\mathcal N$ solves problem $\mathcal R$, and $V \in N_{out}$,
then network ${\mathcal N}' = hide({\mathcal N}, V)$ solves problem
${\mathcal R}' = hide({\mathcal R}, V)$.
\end{theorem}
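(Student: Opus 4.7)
The plan is to chain together three facts: (a) hiding does not touch the underlying probabilistic execution, so $P'$ for $\mathcal N' = hide(\mathcal N, V)$ is obtained from $P$ for $\mathcal N$ purely by projection and summation; (b) the problem-level hiding operator on $\mathcal R$ is defined in exactly the same summation style; (c) the hypothesis ``$\mathcal N$ solves $\mathcal R$'' provides a possibility $Poss \in \mathcal R$ whose result on each $\beta_{in}$ matches $P$ trace-by-trace. Matching these three summations term-wise will give what we need.

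Concretely, I would proceed as follows. Fix a possibility $Poss \in \mathcal R$ witnessing that $\mathcal N$ solves $\mathcal R$, and let $Poss' \in \mathcal R'$ be the possibility constructed from $Poss$ by the definition of $hide(\mathcal R, V)$. Fix any infinite input execution $\beta_{in}$ of $\mathcal N'$; since hiding leaves $N_{in}$ unchanged, $\beta_{in}$ is also an infinite input execution of $\mathcal N$. Let $P$ and $P'$ be the probabilistic traces of $\mathcal N$ and $\mathcal N'$ generated from $\beta_{in}$, and let $R = Poss(\beta_{in})$ and $R' = Poss'(\beta_{in})$. I must show $P'(\beta) = R'(\beta)$ for every finite trace $\beta$ of $\mathcal N'$ consistent with $\beta_{in}$.

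Fix such a $\beta$, and let $B$ be the set of finite traces $\gamma$ of $\mathcal N$ with $\gamma \lceil N'_{ext} = \beta$, i.e., traces of $\mathcal N$ that agree with $\beta$ on the still-external neurons but range freely over the firing patterns of the now-hidden neurons $V$. From the hiding theorem for networks proved earlier in this section, we have $P'(\beta) = \sum_{\gamma \in B} P(\gamma)$. From the definition of $hide(\mathcal R, V)$ on possibilities, we have $R'(\beta) = \sum_{\gamma \in B} R(\gamma)$. And from the assumption that $\mathcal N$ solves $\mathcal R$ with witness $Poss$, we have $P(\gamma) = R(\gamma)$ for every $\gamma \in B$. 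Combining these three equalities termwise gives $P'(\beta) = R'(\beta)$, completing the proof.

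The only subtlety to double-check is that $Poss'$ as constructed really lies in $\mathcal R'$ and that summation over $B$ is the correct bridge; both are immediate from the definitions of the hiding operators on $\mathcal N$ and $\mathcal R$, since in both cases hiding simply reclassifies the neurons in $V$ as internal without changing the underlying dynamics or probability assignments. There is no genuine obstacle: the proof is essentially a one-line identity obtained by stacking the two ``summation over $V$-extensions'' formulas on top of the solves hypothesis, and no case analysis or induction on trace length is needed.
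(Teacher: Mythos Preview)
Your proposal is correct and mirrors the paper's own proof almost exactly: both fix the witnessing possibility $Poss$, form $Poss'$ via the hiding construction, introduce the same set $B$ of $\mathcal N$-traces extending $\beta$, and then match the two summations $P'(\beta)=\sum_{\gamma\in B}P(\gamma)$ and $Poss'(\beta_{in})(\beta)=\sum_{\gamma\in B}Poss(\beta_{in})(\gamma)$ term-by-term using the solves hypothesis. There is no meaningful difference in strategy or level of detail.
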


\begin{proof}
  Since $\mathcal N$ solves $\mathcal R$, we know that, for every
  infinite input execution $\beta_{in}$ for $\mathcal N$, there is a
  possibility in ${\mathcal R}(\beta_{in})$ that is identical to the
  trace distribution derived from execution of  $\mathcal N$ for input
  $\beta_{in}$.
  Denote this possibility by $R(\beta_{in})$.
  To show that ${\mathcal N}'$ solves ${\mathcal R}'$, we must show
  that, for every input execution $\beta_{in}$ for ${\mathcal N}'$,
  there is some possibility in ${\mathcal R}'(\beta_{in})$ that is
  identical to the trace distribution derived from the probabilistic
  execution of ${\mathcal N}'$ for input $\beta_{in}$.

  So fix an input execution $\beta_{in}$ for ${\mathcal N}'$.
  Define $P'$ to be the trace distribution generated by ${\mathcal N}'$
  for input $\beta_{in}$.
  Also define distribution $R'$ for $\beta_{in}$ as in the definition
  of hiding of problems, now based on the particular
  selection $R(\beta_{in})$ just defined.
  We claim that $P' = R'$.
  This means that for any finite trace $\beta$ of ${\mathcal N}'$ that
  is consistent with $\beta_{in}$, $P'(\beta) = R'(\beta)$.
  
  To see this, let $B$ denote the set of finite sequences $\gamma$ of
  firing patterns for $N_{in} \cup N_{out}$ such that $\gamma \lceil
  (N'_{in} \cup N'_{out}) = \beta$.
  Then $P'(\beta) = \sum_{\gamma \in B} P(\gamma)$ and
  $R'(\beta) = \sum_{\gamma \in B} R(\beta_{in})(\gamma)$.
  Since $\mathcal N$ solves $\mathcal R$ with the particular
  possibility $R(\beta_{in})$, it follows that for each such $\gamma$,
  $P(\gamma) = R(\beta_{in})(\gamma)$.
  Consequently, the two summations are equal, as needed.
  \qed
\end{proof}

\subsection{Examples}
\label{example: WTA-4}

In this section, we define three problems satisfying our formal
definition of problems.
They are the Winner-Take-All (WTA) problem, the Filter problem, and an
Attention problem that can be solved by combining solutions to the WTA
and Filter problems.

\subsubsection{The Winner-Take-All problem}

We define the Winner-Take-All problem formally using notation that
corresponds to the statement of Theorem~\ref{th: WTA}:  we write it as
$WTA(n, \delta, t_c, t_s)$, using four parameters from the theorem
statement.
The problem definition allows considerable freedom, in the
choice of which output ends up firing, in the time when the
stable interval begins, and in what happens outside the stable interval.

The set $N_{in}$ is $\{ x_1,\ldots,x_n \}$, and $N_{out}$ is
$\{y_1,\ldots,y_n \}$.
For each infinite sequence $\beta_{in}$ of firing patterns for
$N_{in}$, the $WTA$ problem specifies a set of probability
distributions on sequences of firing patterns for $N_{in} \cup
N_{out}$ that are consistent with $\beta_{in}$.

So consider any particular $\beta_{in}$.  If the firing pattern for
$N_{in}$ in $\beta_{in}$ is not stable or does not have at least one
firing neuron, then we allow all distributions that are consistent
with $\beta_{in}$.
Now consider the case where $\beta_{in}$ is stable with at least one
firing neuron.
Then the possibilities for $\beta_{in}$ are exactly the
distributions that satisfy the following condition:  
With probability $\geq 1 - \delta$, there is some $t \leq t_c$ such
that the $y$ outputs stabilize by time $t$ to one steadily-firing
output $y_i$, and this firing pattern persists through time $t + t_s - 1$.
Notice that these distributions may differ in many ways, for example,
they may give equal probabilities to each output choice, or may
favor some over others.  They may exhibit different times, or probability
distributions of times, for when the stable interval begins.
They may exhibit different types of behavior before and after the
stable interval.

We argue that our $WTA$ network from Section~\ref{example: WTA-1}
solves the formal problem $WTA(n, \delta, t_c, t_s)$.
Specifically, we consider our network with the weighting
factor $\gamma$ satisfying the inequality
$\gamma \geq c_1 \log(\frac{n t_s}{\delta})$, and with $t_c \approx c_2
\log n \log(\frac{1}{\delta})$.
And we allow initial firing patterns for the internal and output
neurons to be arbitrary; so technically, we are talking about a class
of networks, not a single network.
Then Theorem~\ref{th: WTA} implies that each of these networks solves
the $WTA(n, \delta, t_c, t_s)$ problem.

\subsubsection{The Filter problem}

We define the Filter problem as $Filter(n, \delta)$.  The
set $N_{in}$ is $\{w_i, y_i | 1 \leq i \leq n \}$ and the set
$N_{out}$ is $\{z_i | 1 \leq i \leq n \}$.
The Filter problem is intended to say that, for every $i$, $1 \leq i
\leq n$, the output neuron $z_i$ should fire at any time $t \geq 1$
exactly if both the corresponding inputs $w_i$ and $y_i$ fired at time
$t-1$.  Thus, it acts like $n$ And networks.

Formally, for each infinite sequence $\beta_{in}$ of firing patterns
for $N_{in}$, the $Filter(n,\delta)$ problem specifies a set of probability
distributions on sequences of firing patterns for $N_{in} \cup
N_{out}$ that are consistent with $\beta_{in}$.

So consider any particular $\beta_{in}$.
Then the possibilities for $\beta_{in}$ are exactly the
distributions that satisfy the following condition, here expressed in
terms of conditional probabilities (which could be translated into
absolute probabilities):
Let $\beta$ be any finite sequence over $N_{in} \cup N_{out}$ of
length $t \geq 1$ that is consistent with $\beta_{in}$, and let $C_t$
be the final configuration of $\beta$.
Let $\beta'$ be the the one-step prefix of $\beta$, and $C_{t-1}$ be
the final configuration of $\beta'$.
Suppose that, for every $i$, $1 \leq i \leq n$, $C_t(z_i) =
C_{t-1}(w_i) \wedge C_{t-1}(y_i)$.
That is, $\beta$ extends $\beta'$ with correct outputs at the final
time $t$.
Then $P(\beta | \beta') \geq 1-\delta$.
%
%
The differences among these distributions may involve different
conditional probabilities (for example, different for different
outputs), as long as they satisfy the given inequality.

Our simple $Filter$ network of Section~\ref{example: WTA-2} solves the
formal $Filter$ problem, with $\delta = 1 - (1 - \delta')^n$,
where $\delta'$ is the failure probability for a single And gate at a
single time, according to notation used in Section~\ref{example: WTA-2}.

\subsubsection{The Attention problem}

We define the Attention problem formally as 
\[
Attention(n, \delta, t_c, t_s) = WTA(n, \delta', t_c, t_s) \times Filter(n, \delta'').
\]
Here $\delta$, $\delta'$, and $\delta''$ are related so that
$(1 - \delta) = (1 - \delta') (1 - \delta'')^{t_s}$.
The set $N_{in}$ is $\{ x_i, w_i | 1 \leq i \leq n \}$, and
$N_{out}$ is $\{y_i, z_i | 1 \leq i \leq n \}$.

By the definition of composition of problems, the guarantees of
$Attention(n, \delta, t_c, t_s)$ combine those of $WTA(n, \delta',
t_c, t_s)$ and $Filter(n, \delta'')$.
That is, for any input sequence $\beta_{in}$ in which the $x$ inputs
are stable,
$Attention(n, \delta, t_c, t_s)$  specifies that, with probability at
least $(1 - \delta')$, the $y$ outputs converge to a single firing
output corresponding to some firing $x$ input within time $t_c$, and this
configuration persists for time $t_s$.
$Attention(n, \delta, t_c, t_s)$  also specifies that, with probability at least
$(1 - \delta'')^{t_s}$, the $z$ outputs always exhibit correct And behavior
with respect to the previous time's $y$ and $w$ firing behavior.
Together, these two properties imply that, assuming stable $x$ inputs,
with probability at least
$(1 - \delta) = (1 - \delta') (1 - \delta'')^{t_s}$, the $Attention$
network produces stable behavior of the part of the $y$ outputs, and
moreover, during the stable interval, the network produces $z$ outputs that
correctly mirror the $w$ inputs corresponding to the chosen $y$
output.

\hide{
Furthermore, assuming such stable behavior on the part of the $y_i$
outputs,  $Attention(n, \delta, t_c, t_s)$  specifies that, with probability at least
$(1 - \delta'')^{t_s}$, at all times during the stable interval, the $z_i$
outputs exhibit correct And behavior with respect to the previous time's $y_i$
and $w_i$ firing behavior.
%
Together, these properties imply that, with probability at least 
$(1 - \delta) = (1 - \delta') (1 - \delta'')^{t_s}$, the $Attention$
network correctly mirrors the inputs corresponding to the chosen $y_i$
output throughout the stable interval.
}

Theorem~\ref{th: composition-problems} implies that any compatible
solutions to $WTA(n, \delta', t_c, t_s)$ and $Filter(n, \delta'')$
can be composed to yield a solution to the composed problem
$Attention(n, \delta, t_c, t_s)$.
In particular, the solutions to these problems that we presented in
Sections~\ref{example: WTA-1} and~\ref{example: WTA-2} can be composed
in this way.

We can also define a version of the Attention problem in which we
hide the $y$ outputs, formally,
$hide(Attention(n, \delta, t_c, t_s),\{y_1,\ldots,y_n\})$.
The guarantees specified by this problem are similar to those of the
$Attention(n, \delta, t_c, t_s)$  problem, except that the behavior of
the $y$ neurons is not mentioned explicitly.
Essentially, this problem says that, with probability at least 
$(1 - \delta) = (1 - \delta') (1 - \delta'')^{t_s}$, the network
correctly mirrors the inputs corresponding to some $y$ output,
throughout the stable interval.
The same composition of solutions as above, with hiding of the $y$ outputs,
solves this version of the problem.


\section{Conclusions}
\label{sec: conclusions}

In this paper, we have presented a formal, mathematical foundation for
modeling and reasoning about the behavior of synchronous, stochastic
Spiking Neural Networks.
This foundation is based on a simple version of the SNN model in which
a neuron's only state is a Boolean value indicating whether the neuron
is currently firing.
We have provided definitions for networks and their externally-visible
behavior.
We have defined composition and hiding operators for building new SNNs
from others, and have proved fundamental theorems saying that these
operators preserve externally-visible behavior.
We have also defined a formal notion of a problem to be solved by an
SNN, and have given basic results showing how the composition and
hiding operators affect the problems that are solved by networks.

Future work will include using this formal foundation as a basis for
describing and verifying properties of particular SNNs.
We have already carried out rather formal proofs for some of our brain
network algorithms (see, e.g.,~\cite{LynchMP-arxiv19}).
However, these have been done in terms of models that were
specially-tailored to the problem at hand, and not in terms of a general
modeling framework; we believe that working in terms of a general
framework will contribute toward building a coherent general theory
for SNN algorithms.
A good starting point for such applications might be a study of
brain-like mechanisms for focusing attention, based on simpler
mechanisms such as our Winner-Take-All and Filter networks.


In the basic SNN model used in this paper, each neuron has a state
that is just a Boolean indicating whether or not it is currently firing.
We plan to extend the definitions and results to
allow a neuron to have more elaborate state.
For example, as in~\cite{SuCL-jour19}, a neuron's state might
include history of its recent incoming potential or recent firing behavior.
Also, as in~\cite{LynchMallmannTrenn21}, we may want to allow a neuron's state
to include some Boolean flags that may turn the neuron on or off for performing
certain activities, such as learning; in the neuroscience literature,
such mechanisms are known as ``eligibility traces'' \cite{GerstnerLLCB}.
It remains to carefully extend the definitions and results in this
paper to these more elaborate cases; this paper should provide
a useful blueprint for these extensions.
With such model extensions in hand, it will be interesting to
revisit work by Valiant, Navlakha, Papadimitriou, and their
collaborators, such
as~\cite{Valiant00,DasguptaSN17,PapadimitriouVempala-itcs19}), trying
to recast it in terms of our general concurrency theory framework.

\bibliography{compositional-model}{}

\begin{thebibliography}{10}
\providecommand{\url}[1]{\texttt{#1}}
\providecommand{\urlprefix}{URL }
\providecommand{\doi}[1]{https://doi.org/#1}

\bibitem{ChouWang-colt20}
Chou, C.N., Wang, M.B.: O{DE}-inspired analysis for the biological version of
  {O}ja's rule in solving streaming {PCA}. In: Thirty-third Annual Conference
  on Learning Theory (COLT) (July 2020), also, arXiv:1911.02363, November 2019

\bibitem{ChouWY-arxiv20}
Chou, C.N., Wang, M.B., Yu, T.: A general framework for analyzing stochastic
  dynamics in learning algorithms (June 2020), arXiv:2006.06171

\bibitem{DasguptaSN17}
Dasgupta, S., Stevens, C.F., Navlakha, S.: A neural algorithm for a fundamental
  computing problem. Science  \textbf{358}(6364),  793--796 (November 2017),
  \verb+http://courses.csail.mit.edu/6.852/brains/papers/DasguptaStevensNavlakha.pdf+

\bibitem{GerstnerLLCB}
Gerstner, W., Lehmann, M., Liakoni, V., Corneil, D., Brea, J.: Eligibility
  traces and plasticity on behavioral time scales: experimental support of
  neohebbian three-factor learning rules. Frontiers in neural circuits
  \textbf{12}(53) (2018)

\bibitem{DBLP:conf/concur/2021}
Haddad, S., Varacca, D. (eds.): 32nd International Conference on Concurrency
  Theory, {CONCUR} 2021, August 24-27, 2021, Virtual Conference, LIPIcs,
  vol.~203. Schloss Dagstuhl - Leibniz-Zentrum f{\"{u}}r Informatik (2021),
  \url{https://www.dagstuhl.de/dagpub/978-3-95977-203-7}

\bibitem{HitronMPL-itcs20}
Hitron, Y., Musco, C., Parter, M., Lynch, N.: Random sketching, clustering, and
  short-term memory in spiking neural networks. In: 11th Innovations in
  Theoretical Computer Science (ITCS 2020). Seattle, Washington (January 2020)

\bibitem{HitronP-esa19}
Hitron, Y., Parter, M.: Counting to ten with two fingers: Compressed counting
  with spiking neurons. In: European Symposium on Algorithms (ESA). Munich,
  Germany (September 2019)

\bibitem{KLSV10}
Kaynar, D.K., Lynch, N., Segala, R., Vaandrager, F.: The Theory of Timed I/O
  Automata. Synthesis Lectures on Computer Science, Morgan and Claypool
  Publishers (2010), second Edition

\bibitem{lazzaro1988winner}
Lazzaro, J., Ryckebusch, S., Mahowald, M.A., Mead, C.A.: Winner-take-all
  networks of $o(n)$ complexity. Tech. rep., DTIC Document (1988)

\bibitem{lee1999attention}
Lee, D.K., Itti, L., Koch, C., Braun, J.: Attention activates winner-take-all
  competition among visual filters. Nature neuroscience  \textbf{2}(4),
  375--381 (1999)

\bibitem{legenstein2005can}
Legenstein, R., Naeger, C., Maass, W.: What can a neuron learn with
  spike-timing-dependent plasticity? Neural Computation  \textbf{17}(11),
  2337--2382 (2005)

\bibitem{LomboLCCLB}
Lombo, A.E., Lares, J.E., Castellani, M., Chou, C.N., Lynch, N., Berggren,
  K.K.: A superconducting nanowire-based architecture for neuromorphic
  computing (2022), submitted for publication

\bibitem{Lynch-book}
Lynch, N.: Distributed Algorithms. Morgan Kaufmann Publishers, Inc., San Mateo,
  CA (March 1996)

\bibitem{LynchMallmannTrenn21}
Lynch, N., Mallmann-Trenn, F.: Learning hierarchically structured concepts.
  Neural Networks  \textbf{143},  798--817 (November 2021)

\bibitem{LynchMP16}
Lynch, N., Musco, C., Parter, M.: Computational tradeoffs in biological neural
  networks: Self-stabilizing winner-take-all networks. In: \ITCS{2017} (2017),
  full version available at \url{https://arxiv.org/abs/1610.02084}.

\bibitem{LynchMP16b}
Lynch, N., Musco, C., Parter, M.: Neuro-{RAM} unit with applications to
  similarity testing and compression in spiking neural networks. In:
  \DISC{2017} (2017), full version available at
  \url{https://arxiv.org/abs/1706.01382}.

\bibitem{LynchMP-arxiv19}
Lynch, N., Musco, C., Parter, M.: Winner-take-all computation in spiking neural
  networks (April 2019), arXiv:1904.12591

\bibitem{LSV02}
Lynch, N., Segala, R., Vaandrager, F.: Hybrid {I/O} {A}utomata. Information and
  Computation  \textbf{185}(1),  105--157 (August 2003), also, Technical Report
  MIT-LCS-TR-827d, MIT Laboratory for Computer Science, Cambridge, MA 02139,
  January 13, 2003.

\bibitem{LSVprob}
Lynch, N., Segala, R., Vaandrager, F.: Observing branching structure through
  probabilistic contexts. SIAM Journal on Computing  \textbf{37}(4),  977--1013
  (2007)

\bibitem{LV93I}
Lynch, N., Vaandrager, F.: Forward and backward simulations --- {P}art {I}:
  Untimed systems. Information and Computation  \textbf{121}(2),  214--233
  (September 1995)

\bibitem{LV95II-jour}
Lynch, N., Vaandrager, F.: Forward and backward simulations --- {P}art {II}:
  Timing-based systems. Information and Computation  \textbf{128}(1),  1--25
  (July 1996)

\bibitem{LynchT87}
Lynch, N.A., Tuttle, M.R.: Hierarchical correctness proofs for distributed
  algorithms. In: Proceedings of the Sixth Annual ACM Symposium on Principles
  of Distributed Computing (PODC 1987). pp. 137--151. Vancouver, British
  Columbia, Canada (August 1987)

\bibitem{LynchT89}
Lynch, N.A., Tuttle, M.R.: An introduction to {I}nput/{O}utput {A}utomata.
  CWI-Quarterly  \textbf{2}(3),  219--246 (September 1989), centrum voor
  Wiskunde en Informatica, Amsterdam, The Netherlands. Technical Memo
  MIT/LCS/TM-373, Laboratory for Computer Science, Massachusetts Institute of
  Technology, Cambridge, MA 02139, November 1988.

\bibitem{maass1997networks}
Maass, W.: Networks of spiking neurons: the third generation of neural network
  models. Neural Networks  \textbf{10}(9),  1659--1671 (1997)

\bibitem{maass1999neural}
Maass, W.: Neural computation with winner-take-all as the only nonlinear
  operation. In: \NIPS{1999}. pp. 293--299 (1999)

\bibitem{maass2000computational}
Maass, W.: On the computational power of winner-take-all. Neural Computation
  (2000)

\bibitem{maass2014noise}
Maass, W.: Noise as a resource for computation and learning in networks of
  spiking neurons. Proceedings of the IEEE  \textbf{102}(5) (2014)

\bibitem{maass1999complexity}
Maass, W., Schmitt, M.: On the complexity of learning for spiking neurons with
  temporal coding. Information and Computation  \textbf{153}(1),  26--46 (1999)

\bibitem{mcculloch43a}
Mcculloch, W., Pitts, W.: A logical calculus of ideas immanent in nervous
  activity. Bulletin of Mathematical Biophysics  \textbf{5},  127--147 (1943)

\bibitem{muscoThesis}
Musco, C.: The Power of Randomized Algorithms: From Numerical Linear Algebra to
  Biological Systems. Ph.D. thesis, Electrical Engineering and Computer
  Science, Massachusetts Institute of Technology, Cambridge, MA 02139 (June
  2018), neural algorithms work covered in Chapter 5.

\bibitem{PapadimitriouVempala-itcs19}
Papadimitriou, C.H., Vempala, S.S.: Random projection in the brain and
  computation with assemblies of neurons. In: 10th Innovation in Theoretical
  Computer Science (ITCS 2019). pp. 57:1--57:19. San Diego, CA (January 2019),
  \verb+https://www.cc.gatech.edu/~vempala/papers/assemblies.pdf+

\bibitem{segalaThesis}
Segala, R.: Modeling and Verification of Randomized Distributed Real-Time
  Systems. Ph.D. thesis, Laboratory for Computer Science, Massachusetts
  Institute of Technology, Cambridge, MA 02139 (June 1995)

\bibitem{SL95}
Segala, R., Lynch, N.: Probabilistic simulations for probabilistic processes.
  Nordic Journal of Computing  \textbf{2}(2),  250--273 (August 1995)

\bibitem{SuCL-jour19}
Su, L., Chang, C.J., Lynch, N.: Spike-based winner-take-all computation:
  Fundamental limits and order-optimal circuits. Neural Computation
  \textbf{31}(12),  2523--2561 (December 2019)

\bibitem{Toomey}
Toomey, E.: Superconducting nanowire electronics for alternative computing.
  Ph.D. thesis, Department of Electrical Engineering and Computer Science,
  Massachusetts Institute of Technology, Cambridge, MA 02139 (May 2020)

\bibitem{ToomeySCCLB}
Toomey, E., Segall, K., Castellani, M., Colangelo, M., Lynch, N., Berggren,
  K.K.: A superconducting nanowire spiking element for neural networks. Nano
  Letters  (October 2020), {DOI} 10.1021/acs.nanolett.0c03057

\bibitem{trappenberg2009fundamentals}
Trappenberg, T.: Fundamentals of computational neuroscience. OUP Oxford (2009)

\bibitem{Valiant00}
Valiant, L.G.: Circuits of the Mind. Oxford University Press (2000)

\bibitem{Wang20}
Wang, B.: Mathematical Analysis of Static and Plastic Biological Neural
  Circuits. Master's thesis, Department of Electrical Engineering and Computer
  Science, Massachusetts Institute of Technology, Cambridge, MA (May 2020)

\bibitem{WL19}
Wang, B., Lynch, N.: Integrating temporal information to spatial information in
  a neural circuit (2019), arXiv:1903.01217

\end{thebibliography}
\bibliographystyle{splncs04}

\end{document}